\acrodef{RIS}{reconfigurable intelligent surface}
\acrodef{OFDM}{orthogonal frequency division multiplexing}
\acrodef{LoS}{line-of-sight}
\acrodef{NLoS}{non-line-of-sight}
\acrodef{SNR}{signal-to-noise ratio}
\acrodef{SINR}{signal-to-interference-noise ratio}
\acrodef{BS}{base station}
\acrodef{UE}{user equipment}
\acrodef{FR2}{frequency range 2}
\acrodef{DL}{downlink}
\acrodef{TDoA}{time-difference-of-arrival}
\acrodef{AoA}{angle-of-arrival}
\acrodef{AoD}{angle-of-departure}
\acrodef{ML}{maximum likelihood}
\acrodef{PEB}{position error bound}
\acrodef{CEB}{clock error bound}
\acrodef{RMSE}{root-mean-squared error}
\acrodef{FIM}{Fisher Information Matrix}
\acrodef{CRLB}{Cram\'er-Rao lower bound}
\acrodef{LMR}{LoS-to-multipath ratio}
\newcommand*{\addFileDependency}[1]{
  \typeout{(#1)}
  \@addtofilelist{#1}
  \IfFileExists{#1}{}{\typeout{No file #1.}}
}
\newcommand*{\myexternaldocument}[1]{%
    \externaldocument{#1}%
    \addFileDependency{#1.tex}%
    \addFileDependency{#1.aux}%
}
\newcommand{\eqdef}{\stackrel{\textrm{\tiny def}}{=}}
\newcommand{\e}{e}
\newcommand{\ed}{\color{black}} 
\newcommand{\BM}{\text{\tiny B,U}}
\newcommand{\crosst}{\text{cross}}
\newcommand{\BR}{\text{\tiny B,R}}
\newcommand{\RM}{\text{\tiny R,U}}
\newcommand{\R}{\text{\tiny R}}
\newcommand{\BS}{\text{\tiny BS}}
\newcommand{\DFT}{\text{\tiny DFT}}
\newcommand{\RIS}{\text{\tiny RIS}}
\newcommand{\blkdiag}{{\rm{bd}}}
\newcommand{\ML}{\text{\tiny ML}}
\newcommand{\RML}{\text{\tiny RML}}
\newcommand{\FFT}{\text{\tiny FFT}}
\newcommand{\AAb}{\bm{A}}
\newcommand{\BB}{\bm{B}}
\newcommand{\FF}{\bm{F}}
\newcommand{\GG}{\bm{G}}
\newcommand{\ff}{\bm{f}}
\newcommand{\pp}{\bm{p}}
\newcommand{\pphat}{\widehat{\pp}}
\newcommand{\XX}{\bm{X}}
\newcommand{\sgn}{s_g[n]}
\newcommand{\sgnsq}{\abs{\sgn}^2}
\newcommand{\sgngen}[1]{s_g[#1]}
\newcommand{\ssb}{\bm{s}}
\newcommand{\Omegag}{\bm{\Omega}^g}
\newcommand{\oomegag}{\bm{\omega}^g}
\newcommand{\Gammabbig}{\bm{\Gamma}_g}
\newcommand{\Psibbig}{\bm{\Psi}}
\newcommand{\varrhob}{\bm{\varrho}}
\newcommand{\Lambdab}{\bm{\Lambda}}
\newcommand{\Upsilonb}{\bm{\Upsilon}}
\newcommand{\Xib}{\bm{\Xi}}
\newcommand{\ygn}{y^g[n]}
\DeclareMathOperator*{\E}{\mathbb{E}}
\newcommand{\thn}[1]{ {#1^{\rm{th} } } }
\newcommand{\ekk}[1]{\mathbf{e}_{#1}}
\newcommand{\ekkt}[1]{\mathbf{e}^T_{#1}}
\newcommand{\XXgtilde}{\widetilde{\XX}_g}
\newcommand{\XXgbar}{\widebar{\XX}_g}
\newcommand{\XXgstar}{\XX_g^\star}
\newcommand{\XXgtildestar}{\widetilde{\XX}_g^\star}
\newcommand{\XXgbarstar}{\widebar{\XX}_g^\star}
\newcommand{\XXgstarr}{\XX_g^{\star \star}}
\newcommand{\XXgtildestarr}{\widetilde{\XX}_g^{\star \star}}
\newcommand{\XXgbarstarr}{\widebar{\XX}_g^{\star \star}}
\newcommand{\mgn}{m^g[n]}
\newcommand{\mgnbm}{m_{\BM}^g[n]}
\newcommand{\mgnr}{m_{\R}^g[n]}
\newcommand{\rhobm}{\rho_{\BM}}
\newcommand{\varphibm}{\varphi_{\BM}}
\newcommand{\taubm}{\tau_{\BM}}
\newcommand{\thetabm}{\theta_{\BM}}
\newcommand{\rhor}{\rho_{\R}}
\newcommand{\varphir}{\varphi_{\R}}
\newcommand{\taurm}{\tau_{\RM}}
\newcommand{\thetarm}{\theta_{\RM}}
\newcommand{\thetarmtilde}{\widetilde{\theta}_{\RM}}
\newcommand{\thetabmtilde}{\widetilde{\theta}_{\BM}}
\newcommand{\taubr}{\tau_{\BR}}
\newcommand{\taur}{\tau_{\R}}
\newcommand{\thetabr}{\theta_{\BR}}
\newcommand{\thetabmc}{\theta_{\BM}^c}
\newcommand{\thetarmc}{\theta_{\RM}^c}
\newcommand{\phibr}{\phi_{\BR}}
\newcommand{\aabs}{\bm{a}_{\BS}}
\newcommand{\aabsbig}{\bm{A}_{\BS}}
\newcommand{\aaris}{\bm{a}_{\RIS}}
\newcommand{\aarisw}{\bm{b}_{\RIS}}
\newcommand{\aariswbig}{\bm{B}_{\RIS}}
\newcommand{\aariswdt}{\dt{\bm{b}}_{\RIS}}
\newcommand{\aariswdttilde}{\widetilde{\dt{\bm{b}}}_{\RIS}}
\newcommand{\hermit}{\mathsf{H}}
\newcommand{\trpose}{\mathsf{T}}
\newcommand{\frob}{\mathsf{F}}
\newcommand{\conj}{*}
\newcommand{\nr}{N_{\RIS}}
\newcommand{\lr}{L_{\RIS}}
\newcommand{\lb}{L_{\BS}}
\newcommand{\gammab}{\bm{\gamma}}
\newcommand{\gammabhat}{\widehat{\gammab}}
\newcommand{\ppreg}{{ \mathcal{P} }}
\newcommand{\boldone}{{ {\bm{1}} }}
\newcommand{\boldzero}{{ {\bm{0}} }}
\newcommand{\Imatrix}{{ \bm{\mathrm{I}} }}
\newcommand{\nbs}{N_{\BS}}
\newcommand{\cc}{\bm{c}}
\newcommand{\complexset}[2]{ \mathbb{C}^{#1 \times #2}  }
\newcommand{\realset}[2]{ \mathbb{R}^{#1 \times #2}  }
\newcommand{\diag}[1]{{\mathrm{diag}}\left(#1\right)}
\newcommand{\rank}{ \mathrm{rank}  }
\newcommand{\aabsbigtilde}{\widetilde{\bm{A}}_{\BS}}
\newcommand{\Upsilonborth}{\widetilde{\Upsilonb}}
\newcommand{\rev}[1]{\textcolor{black}{#1}} 
\newcommand{\realpbig}[1]{ \Re \Bigg\{#1\Bigg\}  }
\newcommand{\FFbs}{\FF^{\BS}}
\newcommand{\FFbssum}{\widebar{\FF}^{\BS}}
\newcommand{\FFbsdiff}{\dt{\FF}^{\BS}}
\newcommand{\FFris}{\FF^{\RIS}}
\newcommand{\FFrissum}{\widebar{\FF}^{\RIS}}
\newcommand{\FFrisdiff}{\dt{\FF}^{\RIS}}
\newcommand{\FFbsdft}{\FF^{\BS,\DFT}}
\newcommand{\FFrisdft}{\FF^{\RIS,\DFT}}
\newcommand{\ellbm}{\ell_{\BM}}
\newcommand{\ellrm}{\ell_{\RM}}
\newcommand{\ngrid}{M}
\newcommand{\FFbssumdouble}{\widetilde{\FF}^{\BS}}
\newcommand{\etab}{{ \bm{\eta} }}
\newcommand{\Jeta}{\bm{J}_{\etab}}
\newcommand{\fpeb}{{\rm{PEB}}}
\newcommand{\fceb}{{\rm{CEB}}}
\newcommand{\fpebblkdiag}{\fpeb^{\rm{\blkdiag}}}
\newcommand{\JJgamma}{\bm{J}_{\bm{\gamma}} }
\newcommand{\JJgammablkdiag}{\bm{J}^{\blkdiag}_{\bm{\gamma}} }
\newcommand{\JJeta}{\bm{J}_{\bm{\eta}} }
\newcommand{\JJbm}{\bm{J}_{\BM} }
\newcommand{\JJcross}{\bm{J}_{\crosst} }
\newcommand{\JJr}{\bm{J}_{\R} }
\newcommand{\tracebig}[1]{ {{{\rm{tr}}\Big( #1 \Big)}}  }
\newcommand{\tracesmall}[1]{ {{{\rm{tr}}\left( #1 \right)}}  }
\newcommand{\tracenormal}[1]{ {{{\rm{tr}}( #1 )}}  }
\newcommand*{\dt}[1]{%
	\accentset{\mbox{\large .}}{#1}}
\newcommand{\aabsdt}{\dt{\bm{a}}_{\BS} }
\newcommand{\ccdt}{\dt{\bm{c}} }
\providecommand{\abs}[1]{\lvert#1\rvert}
\newcommand{\norm}[1]{\left\lVert#1\right\rVert}
\newcommand{\projrange}[1]{\boldsymbol{\Pi}_{#1}}
\newcommand{\projnull}[1]{\boldsymbol{\Pi}^{\perp}_{#1}}
\theoremstyle{plain}
\newtheoremstyle{iremark}
  {\topsep}   
  {\topsep}   
  {\upshape}  
  {0.2in}       
  {\itshape}  
  {.}         
  {5pt plus 1pt minus 1pt} 
  {\thmname{#1}\thmnumber{ \itshape#2}\thmnote{ (#3)}} 
\newtheorem{theorem}{Theorem}
\newtheorem{lemma}[theorem]{Lemma}
\newtheorem{remark}{Remark}
\newtheorem{proposition}{Proposition}
\theoremstyle{definition}
\newtheorem*{proof}{Proof}
\newcommand*\rel@kern[1]{\kern#1\dimexpr\macc@kerna}
\newcommand*\widebar[1]{%
  \begingroup
  \def\mathaccent##1##2{%
    \rel@kern{0.8}%
    \overline{\rel@kern{-0.8}\macc@nucleus\rel@kern{0.2}}%
    \rel@kern{-0.2}%
  }%
  \macc@depth\@ne
  \let\math@bgroup\@empty \let\math@egroup\macc@set@skewchar
  \mathsurround\z@ \frozen@everymath{\mathgroup\macc@group\relax}%
  \macc@set@skewchar\relax
  \let\mathaccentV\macc@nested@a
  \macc@nested@a\relax111{#1}%
  \endgroup
}
\begin{document}

\bstctlcite{IEEEexample:BSTcontrol}

\title{RIS-aided Joint Localization and Synchronization with a Single-Antenna Receiver: Beamforming Design and Low-Complexity Estimation}

\author{\IEEEauthorblockN{Alessio Fascista, \IEEEmembership{Member, IEEE}, Musa Furkan Keskin, \IEEEmembership{Member, IEEE}, Angelo Coluccia, \IEEEmembership{Senior Member, IEEE}, Henk Wymeersch, \IEEEmembership{Senior Member, IEEE}, and Gonzalo Seco-Granados, \IEEEmembership{Senior Member, IEEE}}
\thanks{Part of this work was presented at the IEEE International Conference on Acoustics, Speech and Signal Processing, Toronto, Canada, June 2021 \cite{fascista2021ris}.}
\thanks{This work is supported, in part, by the EU H2020 RISE-6G project under grant 101017011, the Vinnova 5GPOS project under grant 2019-03085,  the MSCA-IF grant 888913 (OTFS-RADCOM), and in part by the Spanish Ministry of Science and Innovation projects PID2020-118984GB-I00, and by the Catalan ICREA Academia Programme.}
\thanks{A. Fascista and A. Coluccia are with the Department of Innovation Engineering, Universit\`a del Salento, Via Monteroni, 73100 Lecce, Italy (e-mail: \{alessio.fascista,angelo.coluccia\}@unisalento.it).
}
\thanks{M. F. Keskin and H. Wymeersch are with the Department of Electrical Engineering, Chalmers University of Technology, 412 96 Gothenburg, Sweden (e-mail: \{furkan,henkw\}@chalmers.se).}
\thanks{G. Seco-Granados is with the Department of Telecommunications and Systems Engineering, Universitat Aut\`onoma de Barcelona, 08193 Barcelona, Spain (e-mail: gonzalo.seco@uab.cat).}
}

\markboth{IEEE Journal of Selected Topics in Signal Processing}%
{Shell \MakeLowercase{\textit{et al.}}: A Sample Article Using IEEEtran.cls for IEEE Journals}

\maketitle

\begin{abstract}
Reconfigurable intelligent surfaces (RISs) have attracted enormous interest thanks to their ability to overcome line-of-sight blockages in mmWave systems, enabling in turn accurate localization with minimal infrastructure. Less investigated are however the benefits of exploiting RIS with suitably designed beamforming strategies for optimized localization and synchronization performance. In this paper, a novel low-complexity method for joint localization and synchronization based on an optimized design of the base station (BS) active precoding and RIS passive phase profiles is proposed, for the challenging case of a single-antenna receiver. The theoretical position error bound is first derived and used as metric to jointly optimize the BS-RIS beamforming, assuming a priori knowledge of the user position. By exploiting the low-dimensional structure of the solution, a novel codebook-based robust design strategy with optimized beam power allocation is then proposed, which provides low-complexity while taking into account the uncertainty on the user position. Finally, a reduced-complexity maximum-likelihood based estimation procedure is devised to jointly recover the user position and the synchronization offset. Extensive numerical analysis shows that the proposed joint BS-RIS beamforming scheme provides enhanced localization and synchronization performance compared to existing solutions, with the proposed estimator attaining the theoretical bounds even at low signal-to-noise-ratio and in the presence of additional uncontrollable multipath propagation.

\end{abstract}

\begin{IEEEkeywords}
Reconfigurable intelligent surface, mmWave, localization, synchronization, beamforming, phase profile design, convex optimization.
\end{IEEEkeywords}

%
\IEEEpeerreviewmaketitle

\section{Introduction}
With the introduction of 5G, radio localization has finally been able to support industrial verticals and is no longer limited to emergency call localization \cite{survey_1g_5g,henk_WCOMM_2017,TVT_2018_Gonzalo,bartoletti2021positioning,Fascista_ICASSP2020}. This ability is enabled by a combination of wideband signals (up to 400 MHz in \ac{FR2}), higher carrier frequencies (e.g., around 28 GHz), multiple antennas, and a low latency and flexible architecture \cite{henk_WCOMM_2017,dwivedi2021positioning,TWC_Letter}. Common localization methods rely on \ac{TDoA} or multi-cell round trip time (multi-RTT) measurements, requiring at least 4 or 3 \acp{BS}, respectively.
In order to enable accurate localization with minimal infrastructure, there have been several studies to further reduce the number of \acp{BS} needed for localization. These studies can be broadly grouped in three categories: \emph{(i)} data-driven, based on fingerprinting and deep learning \cite{GPS_5G,gante2020deep}; \emph{(ii)} geometry-driven, based on exploiting passive multipath in the environment \cite{ge20205g,Fascista_2021} (which is itself derived from the multipath-assisted localization \cite{witrisal2016high}); and, more recently, \emph{(iii)} \ac{RIS}-aided approaches \cite{RIS_Access_2019,hu2018beyond,wymeersch2020radio,RIS_loc_2021_TWC,RIS_bounds_TSP_2021,Keykhosravi2020_SisoRIS,RIS_loc_LCOMM}. The latter category extends the concept of multipath-aided localization to \ac{RIS}, which can actively control the multipath. \acp{RIS} have attracted enormous interest in the past few years, mainly for their ability to overcome \ac{LoS} blockages in mmWave communications \cite{di2019smart,RIS_Access_2019,RIS_commag_2021}. From the localization point of view, \ac{RIS} fundamentally offers two benefits: it introduces an extra location reference and provides additional measurements, independent of the passive, uncontrolled multipath \cite{wymeersch2020radio}. Hence, it avoids the reliance on strong reflectors in the environment, needed by standard multipath-aided localization \cite{RIS_bounds_TSP_2021}, while also having the potential to low-complexity model based solution, in contrast to deep learning methods. 

The use of \ac{RIS} for localization has only recently been developed, and a number of papers have been dedicated to \ac{RIS}-aided localization \cite{hu2018beyond,wymeersch2020radio,RIS_loc_2021_TWC,RIS_bounds_TSP_2021,Keykhosravi2020_SisoRIS,RIS_loc_LCOMM,rahal2021risenabled,he_vtc_2020}. Interestingly, \acp{RIS} allow us to solve very challenging localization problems, such as single-antenna \ac{UE} localization with a single-antenna \ac{BS} in \ac{LoS} \cite{Keykhosravi2020_SisoRIS} and even \ac{NLoS} conditions (i.e., where the \ac{LoS} path is blocked) \cite{rahal2021risenabled}. While an \ac{RIS} renders these problems solvable, high propagation losses (especially at mmWave bands) necessitates long coherent processing intervals to obtain sufficient integrated \ac{SNR}, thus limiting supported mobility. Shorter integration times can be achieved with directional beamforming at the \ac{BS} side \cite{mmwave_beamform_2014}, provided it is equipped with many antennas. Such beamforming becomes especially powerful when there exists a priori \ac{UE} location information \cite{loc_com_SPM_2014}. Hence, with the goal of improving localization performance, recent studies have focused on BS precoder optimization in the case of passive multipath \cite{successiveLocBF_2019,tasos_precoding2020}, while optimization in the presence of \ac{RIS} involves \textit{joint design of BS precoder and RIS phase profiles}, and thus can provide further accuracy enhancements via additional degrees of freedom. Nevertheless, such studies have been limited to SNR-maximizing heuristics \cite{RIS_bounds_TSP_2021}, leading to \textit{directional} RIS phase profiles, which may not necessarily lead to localization-optimal solutions \cite{tasosMultiBeam2019,keskin2021optimal}. Within the context of \textit{RIS-aided communications}, several works investigate joint design of active transmit precoding at the BS and passive phase shifts at the RIS to optimize various performance objectives, including sum-rate \cite{joint_BS_RIS_BF_JSAC_2020,joint_BS_RIS_BF_TWC_2021,joint_BS_RIS_BF_JSAC_2020_DRL,joint_BS_RIS_BF_TWC_2020}, effective mutual information \cite{joint_BS_RIS_BF_TCOM_2021}, outage probability \cite{joint_BS_RIS_BF_TSP_2021} and \ac{SINR} \cite{joint_BS_RIS_BF_TCOM_2021_Poor}. However, to the best of authors' knowledge, no studies have tackled the problem of joint BS-RIS beamforming to maximize the performance of  \textit{\ac{RIS}-aided localization and synchronization}.

In this paper, we propose a novel joint BS-RIS beamforming design and a low-complexity \ac{ML} estimator for \ac{RIS}-aided joint localization and synchronization supported by a single \ac{BS},  considering the challenging case of a UE equipped with a single-antenna receiver. The optimized design exploits a priori \ac{UE} location information and considers the \ac{BS} precoders and \ac{RIS} phase configurations jointly, in order to minimize the \ac{PEB}. 
The main contributions are as follows:
\begin{itemize}
\item We derive the \ac{FIM} 
for localization and synchronization of a UE equipped with a single-antenna receiver, and conduct a theoretical analysis of the achievable performance.
    \item We formulate the joint design of \ac{BS} precoder and \ac{RIS} phase profile as a bi-convex optimization problem for the non-robust case, and propose a solution via alternating optimization. Interestingly, the solution reveals that at both the \ac{BS} and \ac{RIS} sides, a certain sequence of beams (namely, \textit{directional} and \textit{derivative} beams \cite{li2007range,tasosMultiBeam2019,keskin2021optimal}) is required to render the problem feasible, in contrast to the corresponding communication problem. 
    \item Based on the optimal solution under perfect knowledge of UE location, we propose a codebook-based design in the robust case, including a set of BS and RIS beams determined by the uncertainty region of UE location, where power optimization across BS beams 
    is formulated as a convex problem. 
    \item Elaborating on the ideas preliminarily introduced in \cite{fascista2021ris}, we devise a reduced-complexity estimation procedure based on the \ac{ML} criterion, which attains the CRLBs even at low \ac{SNR}s, and exhibits robustness against the presence of uncontrollable multipath.
    \item We compare the proposed algorithms against different approaches in literature, and show that the proposed designs outperform these benchmarks, not only in terms of \ac{PEB} and \ac{CEB}, but also localization and synchronization \acp{RMSE}. 
\end{itemize}

\section{System Model and Problem Formulation}
In this section, we describe the \ac{RIS}-aided mmWave \ac{DL} localization scenario including a \ac{BS}, an \ac{RIS} and a \ac{UE}, derive the received signal expression at the UE, and formulate the problem of joint localization and synchronization.  

\begin{figure}%
    \centering
    {\includegraphics[width=0.45\textwidth]{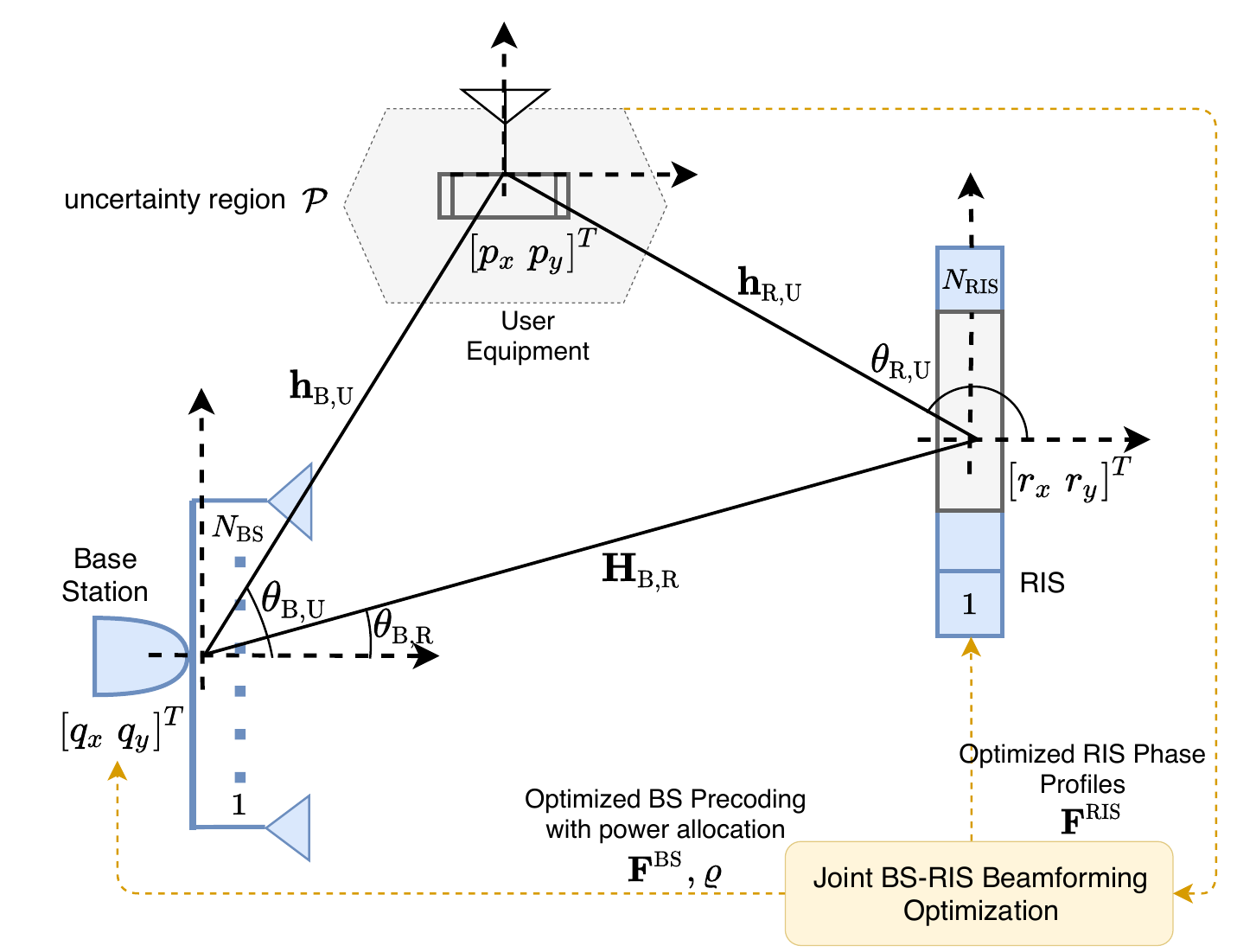}}
    \caption{Considered localization and synchronization scenario with optimized BS active precoding and RIS phase profiles.}%
    \label{fig:scenario}%
\end{figure}

\subsection{RIS-aided Localization Scenario}
We consider a \ac{DL} localization scenario, as shown in Fig.~\ref{fig:scenario}, consisting of a single BS at known location $\bm{q} = [q_x \ q_y]^\mathsf{T}$ equipped with multiple antennas, a single-antenna UE at unknown location $\bm{p} = [p_x \ p_y]^\mathsf{T}$, and an RIS at known location $\bm{r} = [r_x \ r_y]^\mathsf{T}$. The UE has an unknown clock offset $\Delta$ with respect to the BS. We assume a two-dimensional (2D) scenario with uniform linear arrays (ULAs) for both the BS and RIS deployments\footnote{\ed We address the joint localization and synchronization problem in 2D, a common choice in the literature because it greatly simplifies the exposition. Moreover, since in mmWave scenarios the distance between transmitter and receiver is large compared to the height of the antennas, considering the projection onto the 2D horizontal plane provides a fairly realistic representation of the dominant propagation phenomena. The proposed methodology can be extended in principle to address the 3D localization setup; we will discuss this possibility after the derivation of the proposed joint BS and RIS beamforming design strategy in Sec.~\ref{sec_robust_design} and low-complexity estimation algorithm in Sec.~\ref{sec_estimator}, so that the necessary modifications can be described.}. The numbers of antenna elements at the BS and RIS are $\nbs$ and $\nr$, respectively. The goal of the UE is to estimate its location and clock offset by exploiting the \ac{DL} signals it receives through the direct \ac{LoS} path and through the reflected (controllable) \ac{NLoS} path generated by the  RIS.

\subsection{Signal Model}\label{sec_tx_model}
The BS communicates by transmitting single-stream \ac{OFDM} pilots with $N$ subcarriers over $G$ transmissions. Particularly, the $g$-th transmission uses an OFDM symbol $\bm{s}_g = [\sgngen{0} \cdots [\sgngen{N-1} ]^\mathsf{T} \in \complexset{N}{1}$ with $\frac{1}{N}\norm{\ssb_g}^2 = 1$ and is precoded by the weight vector $\ff_g \in \complexset{\nbs}{1}$. To keep the transmit energy constant over the entire transmission period, the precoding matrix $\FF = [\ff_1  \cdots  \ff_G] \in \complexset{\nbs}{G}$ is assumed to satisfy $\tracesmall{\FF \FF^H} = 1$. The BS can be equipped with an analog active phased array \cite{phasedArray_2016} or a fully digital array.

The \ac{DL} received signal at the UE associated to the $g$-th transmission over subcarrier $n$ is given by
\begin{equation}\label{eq_ygn}
y^g[n] = \sqrt{P} \bm{h}^\mathsf{T}[n]\ff_g s_g[n] + \nu^g[n] 
\end{equation}
for $n = 0, \ldots, N-1$ and $g = 1, \ldots, G$, with $P$ denoting the transmit power and $\nu^g[n]$ circularly symmetric complex Gaussian noise having zero mean and variance $\sigma^2$. In \eqref{eq_ygn}, $\bm{h}[n] \in \complexset{\nbs}{1}$ represents the entire channel, including both the \ac{LoS} path and the \ac{NLoS} path (i.e., the reflection path via the RIS), between the BS and the UE for the $n$-th subcarrier: 
\begin{equation}\label{eq_hall}
\bm{h}^\mathsf{T}[n] = \bm{h}^\mathsf{T}_{{\BM}}[n] + \bm{h}^\mathsf{T}_{{\RM}}[n]\bm{\Omega}^g\bm{H}_{\text{\tiny{B,R}}}[n]
\end{equation}
where $\bm{h}_{{\BM}}[n] \in \complexset{\nbs}{1}$ is the direct (i.e., \ac{LoS}) channel between the BS and the UE, $\bm{H}_{\text{\tiny{B,R}}}[n] \in 
\complexset{\nr}{\nbs}$ denotes the channel from the BS to the RIS, $\bm{\Omega}^g = \mathrm{diag}(\e^{j\omega^g_1},\ldots, \e^{j\omega^g_{\nr}}) \in  \mathbb{C}^{\nr \times \nr}$ is the RIS phase control matrix at transmission $g$, and $\bm{h}_{{\RM}}[n] \in \mathbb{C}^{\nr \times 1}$ represents the channel from the RIS to the UE\footnote{For simplicity, the model in \eqref{eq_hall} does not involve uncontrolled multipath components and will be used to derive bounds and algorithms from Sec.~\ref{sec_fim} to Sec.~\ref{sec_estimator}, while the robustness of the algorithms against the presence of uncontrolled \ac{NLoS} paths will be evaluated through simulations in Sec.~\ref{sec_uncontrolled_path}.}.

The \ac{LoS} channel in \eqref{eq_hall} can be expressed as $\bm{h}_{{\BM}}[n] = \alpha_{{\BM}}\e^{-j2\pi n \frac{\tau_{{\BM}} }{NT}}\bm{a}_{{\text{\tiny{BS}}}}(\theta_{\BM})$, 
where $T = 1/B$ is the sampling period with $B$ the bandwidth, $\alpha_{{\BM}} = \rho_{{\BM}}\e^{j\varphi_{{\BM}}}$ with $\rho_{{\BM}}$ and $\varphi_{{\BM}}$ modulus and phase of the complex amplitude $\alpha_{{\BM}}$, $\theta_{\BM}$ is the \ac{AoD} from the BS to the UE, and $\tau_{{\BM}}$ is the delay between the BS and the UE up to a clock offset $\Delta$, as better specified later. As to $\bm{a}_{{\text{\tiny{BS}}}}(\cdot)$, it represents the BS array steering vector whose expression is given by
$\bm{a}_{{\text{\tiny{BS}}}}(\theta) = [1 \ e^{j\frac{2\pi}{\lambda_c} d\sin\theta} \cdots \ e^{j(N_{\text{\tiny BS}}-1)\frac{2\pi}{\lambda_c}d\sin\theta}]^\mathsf{T}$
with $\lambda_c = c/f_c$, $f_c$ being the carrier frequency and $c$ the speed of light, and $d = \lambda_c/2$.

In \eqref{eq_hall}, the first tandem channel (i.e., from the BS to the RIS) in the \ac{NLoS} path through the RIS is defined as
\begin{equation}\label{eq_hbr}
\bm{H}_{\text{\tiny{B,R}}}[n] = \alpha_{\text{\tiny{B,R}}}\e^{-j2\pi n \frac{\tau_{\text{\tiny{B,R}}}}{NT}}\bm{a}_{{\text{\tiny{RIS}}}}(\phi_\text{\tiny{B,R}})\bm{a}^\mathsf{T}_{{\text{\tiny{BS}}}}(\theta_\text{\tiny{B,R}})
\end{equation}
where $\alpha_{\text{\tiny{B,R}}} = \rho_{\text{\tiny B,R}}\e^{j\varphi_{\text{\tiny B,R}}}$ is the complex gain over the BS-RIS path, $\phi_\text{\tiny{B,R}}$ the \ac{AoA} and $\theta_\text{\tiny{B,R}}$  the \ac{AoD} from the BS to the RIS, and $\tau_{\text{\tiny{B,R}}}$ the delay between the BS and the RIS. In addition, $\bm{a}_{{\text{\tiny{RIS}}}}(\cdot) \in \complexset{\nr}{1}$ denotes the array steering vector of the RIS, given by 
$\bm{a}_{{\text{\tiny{RIS}}}}(\theta) = [1 \ e^{j\frac{2\pi}{\lambda_c} d\sin\theta} \cdots \ e^{j(\nr-1)\frac{2\pi}{\lambda_c}d\sin\theta}]^\mathsf{T}.$
Finally, the second tandem channel in \eqref{eq_hall} is given by
\begin{equation}\label{eq_hru}
    \bm{h}^\mathsf{T}_{{\RM}}[n] = \alpha_{{\RM}}\e^{-j2\pi n \frac{\tau_{{\RM}} }{NT}}\bm{a}^\mathsf{T}_{{\text{\tiny{RIS}}}}(\theta_{\RM})
\end{equation}
with the notations $\alpha_{{\RM}} = \rho_{{\RM}}\e^{j\varphi_{{\RM}}}$, $\tau_{{\RM}}$, and $\theta_{\RM}$ having the same meaning as in the BS-to-UE channel model.

The geometric relationships among the BS, RIS, and UE are as follows (assuming for simplicity that the BS is placed at the origin of the reference system, i.e., $\bm{q} = [0 \ 0]^\mathsf{T}$):
\begin{align}\label{geomrelationships}
&\tau_{{\BM}} = \|\bm{p}\|/c +\Delta \nonumber \\
& \tau_{\text{\tiny{R}}} = \tau_{\text{\tiny{B,R}}} +  \tau_{{\RM}} = (\|\bm{r}\| + \|\bm{r} - \bm{p}\|)/c +\Delta \nonumber \\          &\theta_{{\BM}} = \mathrm{atan2}(p_y,p_x), \quad \theta_{{\RM}} = \mathrm{atan2}(p_y-r_y,p_x-r_x)\nonumber \\
&\theta_{\text{\tiny{B,R}}} = \mathrm{atan2}(r_y,r_x), \quad \phi_{\text{\tiny{B,R}}} = -\pi + \theta_{\text{\tiny{B,R}}}.
\end{align}
Notice that $\tau_{\text{\tiny{B,R}}}$, $\theta_{\text{\tiny{B,R}}}$ and $\phi_{\text{\tiny{B,R}}}$ are known quantities being the BS and RIS placed at known positions.

\subsection{Joint Localization and Synchronization Problem}\label{sec_prob_form}
From the DL received signal $\{\ygn\}_{\forall n, g}$ in \eqref{eq_ygn} over $N$ subcarriers and $G$ transmissions, the problems of interest are as follows: i) design the BS precoder matrix $\FF$ and the RIS phase profiles $\{\Omegag\}_{\forall g}$ to maximize the accuracy of UE location and clock offset estimation; ii) estimate the unknown location $\pp$ and the unknown clock offset $\Delta$ of the UE.
To tackle these problems, we first derive a performance metric to quantify the accuracy of localization and synchronization in Sec.~\ref{sec_fim}. Based on this metric, Secs.~\ref{sec_joint_design_opt}-\ref{sec_robust_design} focus on the joint design of $\FF$ and $\{\Omegag\}_{\forall g}$. Finally, Sec.~\ref{sec_estimator} develops an estimator for $\pp$ and $\Delta$.

\section{Fisher Information Analysis}\label{sec_fim}
In this section, we perform a Fisher information analysis to obtain a performance measure for localization and synchronization of the UE, which is needed for the design of $\FF$ and $\{\Omegag\}_{\forall g}$ in Sec.~\ref{sec_joint_design_opt} and Sec.~\ref{sec_robust_design}.

\subsection{\ac{FIM} in the Channel Domain}
For the estimation problem in Sec.~\ref{sec_prob_form}, we compute the \ac{FIM} of the unknown channel parameter vector $ \gammab = [\tau_{{\BM}} \ \theta_{{\BM}} \ \rho_{{\BM}} \
    \varphi_{{\BM}} \ \tau_{{\RM}} \ \theta_{{\RM}} \ \rho_{\text{\tiny{R}}} \ \varphi_{\text{\tiny{R}}} ]^\mathsf{T}$
where $\rho_\text{\tiny{R}} = \rho_{\text{\tiny B,R}}\rho_{{\RM}}$ and $\varphi_\text{\tiny{R}} = \varphi_\text{\tiny{B,R}} + \varphi_{\RM}$.
The FIM  $\JJgamma \in \mathbb{R}^{8 \times 8}$ satisfies the information inequality \cite[Thm.~(3.2)]{kay1993fundamentals}
\begin{align}
    \E\left\{  (\gammabhat - \gammab) (\gammabhat - \gammab)^\trpose \right\} \succeq \JJgamma^{-1}
\end{align}
for any unbiased estimator $\gammabhat$ of $\gammab$, where $\AAb \succeq \BB$ means $\AAb - \BB$ is positive semi-definite. Since the observations in \eqref{eq_ygn} are complex Gaussian, the $(h,k)$-th FIM entry $[\JJgamma]_{h,k} \eqdef \Lambda(\gamma_h,\gamma_k)$ can be expressed using the Slepian-Bangs formula as \cite[Eq.~(15.52)]{kay1993fundamentals}
\begin{equation}\label{eq::FIMelemformula}
\Lambda(\gamma_h,\gamma_k) = \frac{2}{\sigma^2} \sum_{g=1}^{G} \sum_{n=0}^{N-1} \Re\left\{\left(\frac{\partial m^g[n]}{\partial \gamma_h}\right)^\conj \frac{\partial m^g[n]}{\partial \gamma_k}\right\}     
\end{equation}
where $x^\conj$ denotes the complex conjugate of $x$ and $m^g[n] = \sqrt{P} \bm{h}^\mathsf{T}[n] \ff_g s_g[n]$
is the noise-free version of the received signal in \eqref{eq_ygn}. Using \eqref{eq_hall}--\eqref{eq_hru}, $m^g[n]$ 
can be re-written as $\mgn = \mgnbm + \mgnr$, 
where
\begin{align}\label{eq_mgnbm}
    \mgnbm &\eqdef \sqrt{P} \rhobm \e^{j\varphibm} \left[ \cc(\taubm) \right]_n \aabs^\trpose(\thetabm) \ff_g \sgn \\ \nonumber
    \mgnr &\eqdef \sqrt{P} \rhor \e^{j\varphir} \left[ \cc(\taur) \right]_n \aarisw^\trpose(\thetarm) \oomegag \aabs^\trpose(\thetabr) \ff_g \sgn  ~.
\end{align}
In \eqref{eq_mgnbm}, $\taur = \taurm + \taubr$ is the delay of the BS-RIS-UE path, 
\begin{align}
    \aarisw(\theta) \eqdef \aaris(\theta) \odot \aaris(\phibr)
\end{align}
denotes the combined RIS steering vector including the effect of both the \ac{AoD} $\theta$ and the \ac{AoA} $\phibr$ as a function of $\theta$, 
\begin{align}
    \cc(\tau) \eqdef \left[ 1 ~ \e^{-j \kappa_1 \tau} \, \cdots \, \e^{-j \kappa_{N-1} \tau} \right]^\trpose
\end{align}
represents the frequency-domain steering vector with $\kappa_n = 2\pi \frac{n}{NT}$, and $\oomegag \in \complexset{\nr}{1}$ is the vector consisting of the diagonal entries of $\Omegag$, i.e., $\Omegag = \diag{\oomegag}$. Here, $\odot$ is the Hadamard (element-wise) product. Hereafter, $\aarisw$ will be used to denote $\aarisw(\thetarm)$ for the sake of brevity. For the derivative expressions in \eqref{eq::FIMelemformula}, we refer the reader to Appendix~\ref{sec_der_fim}.

We now express the FIM elements in \eqref{eq::FIMelemformula} as a function of the BS precoder $\FF$ and the RIS phase profiles $\{ \oomegag \}_{g=1}^{G}$. To that end, the FIM $\JJgamma$ can be written as 
\begin{align}\label{eq_JJgamma_bdiag}
    \JJgamma = \begin{bmatrix} \JJbm & \JJcross \\
    \JJcross^\trpose & \JJr
    \end{bmatrix} 
\end{align}
where $\JJbm \in \realset{4}{4}$ and $\JJr \in \realset{4}{4}$ are the FIM submatrices corresponding to the \ac{LoS} path and the \ac{NLoS} (i.e., BS-RIS-UE) path, respectively, and $\JJcross \in \realset{4}{4}$ represents the \ac{LoS}-\ac{NLoS} path cross-correlation. In addition, let us define
\begin{align} \label{eq_ffg}
    \XX_g &\eqdef \ff_g \ff_g^\hermit \in \complexset{\nbs}{\nbs} 
    \\ \label{eq_oomegag}
    \Psibbig_g &\eqdef \oomegag (\oomegag)^\hermit \in \complexset{\nr}{\nr} ~.
\end{align}
The following \rev{remark} reveals the dependency of the FIM submatrices in \eqref{eq_JJgamma_bdiag} on BS precoder and RIS phase profiles \rev{using} Appendix~\ref{sec_fim_func}.
\rev{\begin{remark}    \label{lemma_fim_func}
    The dependency of the FIM $\JJgamma$ in \eqref{eq_JJgamma_bdiag} on $\FF$ and $\{ \oomegag \}_{g=1}^{G}$ can be specified as follows:
    \begin{itemize}
        \item $\JJbm$ is a linear function of $\{\XX_g\}_{g=1}^{G}$.
        \item $\JJr$ is a bi-linear function of $\{\XX_g\}_{g=1}^{G}$ and $\{\Psibbig_g\}_{g=1}^{G}$.
        \item $\JJcross$ is a bi-linear function of $\{\XX_g\}_{g=1}^{G}$ and $\{\oomegag\}_{g=1}^{G}$.
    \end{itemize}
\end{remark}}

\subsection{FIM in the Location Domain}
To obtain the location-domain FIM from the channel-domain FIM $\JJgamma$ in \eqref{eq_JJgamma_bdiag}, we apply a transformation of variables from the vector of the unknown channel parameters $\bm{\gamma}$ to the vector of location parameters\rev{\footnote{\rev{Note that the channel gains $\rho_{\BM}$, $\varphi_{\BM}$, $\rho_\text{\tiny R}$ and $\varphi_\text{\tiny R}$ are nuisance parameters that need to be estimated for localization, but do not convey any geometric information that can be useful for localization. Hence, they cannot be expressed as a function of other unknown (geometric) parameters and thus appear in both channel and location domain parameter vectors.}}}
\begin{equation}\label{eq_etab}
    \bm{\eta} = \left[p_x \ p_y  \ \rho_{\BM} \ \varphi_{\BM} \ \rho_\text{\tiny R} \ \varphi_\text{\tiny R} \  \Delta \right]^\mathsf{T}. 
\end{equation}
The FIM of $\bm{\eta}$, denoted as $\bm{J}_{\bm{\eta}} \in \realset{7}{7}$, is obtained by means of the transformation matrix $\bm{T} \eqdef \frac{\partial \bm{\gamma}^{\mathsf{T}}}{\partial \bm{\eta}} \in \realset{7}{8}$ as
\begin{equation}\label{eq_Jeta}
\bm{J}_{\bm{\eta}} = \bm{T}\bm{J}_{\bm{\gamma}}\bm{T}^\mathsf{T} ~,
\end{equation}
which preserves the linearity and bi-linearity properties of $\JJgamma$ in \rev{Remark}~\ref{lemma_fim_func}. Please see Appendix~\ref{sec_transform_mat} for the expressions of the elements of $\bm{T}$.


\section{Joint Transmit Precoding and RIS Phase Profile Design}\label{sec_joint_design_opt}
In this section, assuming \textit{perfect knowledge} of  $\etab$ in \eqref{eq_etab}, we tackle the problem of joint design of the transmit BS precoding matrix $\FF$ and the RIS phase profiles $\{ \oomegag \}_{g=1}^{G}$ to maximize the performance of joint localization and synchronization of the UE. First, we apply convex relaxation and alternating optimization techniques to obtain two convex subproblems to optimize BS precoders for a given RIS phase profile and vice versa. Then, we demonstrate the \textit{low-dimensional structure} of the optimal BS precoders and RIS phase profiles, which will be highly instrumental in Sec.~\ref{sec_robust_design} in designing codebooks under imperfect knowledge of UE location.

\subsection{Problem Formulation for Joint Optimization}
To formulate the joint BS precoding and RIS phase profile optimization problem, we adopt the position error bound (PEB) as metric\footnote{Since positioning and synchronization are tightly coupled \cite{Mendrzik_JSTSP_2019}, considering PEB as the optimization metric would also improve the synchronization performance, which will be verified through simulation results in Sec.~\ref{sec_sim_res}. \rev{In particular, please see Fig.~\ref{fig:comparison_RMSE_3m} for an illustration of how PEB-based optimization provides noticeable improvements in the RMSE of both the position and clock offset over the benchmark schemes. For a more detailed comparison between PEB- and CEB-based optimization, we refer the reader to Appendix~\ref{sec::ceb_metric}.}}. From \eqref{eq_etab}-\eqref{eq_Jeta} and \rev{Remark}~\ref{lemma_fim_func}, the PEB can be obtained as a function of the BS beam covariance matrices, the RIS phase profiles and their covariance matrices $\{ \XX_g, \oomegag, \Psibbig_g  \}_{g=1}^{G}$, as follows:
\begin{align} \nonumber
      \E\big\{  \norm{\pphat - \pp}^2  \big\} &\geq  \tracesmall{  \left[ \Jeta^{-1} \right]_{1:2,1:2}  } 
      \\ \label{eq_peb_fim}
      &\eqdef \fpeb\left(\{ \XX_g, \oomegag, \Psibbig_g  \}_{g=1}^{G}; \etab\right)~.
\end{align}
We note that the PEB depends on the unknown parameters $\etab$ (see Appendix~\ref{sec_fim_func}). \rev{In addition, the dependency of the PEB on $\{ \XX_g, \oomegag, \Psibbig_g  \}_{g=1}^{G}$ can be observed through \eqref{eq_Jeta}, \eqref{eq_peb_fim} and Remark~\ref{lemma_fim_func}.} Under perfect knowledge of $\etab$, the PEB minimization problem can be formulated as
\begin{subequations} \label{eq_problem_peb}
\begin{align} \label{eq_problem_obj}
	\mathop{\mathrm{min}}\limits_{\{ \XX_g, \oomegag, \Psibbig_g  \}_{g=1}^{G}} &~~
	\fpeb\left(\{ \XX_g, \oomegag, \Psibbig_g  \}_{g=1}^{G}; \etab\right)
	\\ \label{eq_power_cons}
    \mathrm{s.t.}&~~ \tracebig{\sum_{g=1}^{G}\XX_g} = 1 ~, 
    \\ \label{eq_rank_Xg}
    &~~ \XX_g \succeq 0 \,,\, \rank(\XX_g) = 1 \, ,
    \\ \label{eq_psig_cons}
    &~~ \Psibbig_g = \oomegag (\oomegag)^\hermit \, , \, \abs{\omega_g[n]} = 1 ~,
    \\ \nonumber
    &~~ g=1,\ldots,G ~,
\end{align}
\end{subequations}
where the total power constraint in \eqref{eq_power_cons} is due to Sec.~\ref{sec_tx_model}, \eqref{eq_rank_Xg} results from \eqref{eq_ffg}, and \eqref{eq_psig_cons} comes from the definition in \eqref{eq_oomegag} and the unit-modulus constraints on the elements of the RIS control matrix. We now provide the following lemma on the structure of the objective function in \eqref{eq_problem_obj}.
\begin{lemma}\label{lemma_biconvex}
 $\fpeb\left(\{ \XX_g, \oomegag, \Psibbig_g  \}_{g=1}^{G}; \etab\right)$ is a multi-convex function of $\{ \XX_g  \}_{g=1}^{G}$, $\{ \oomegag  \}_{g=1}^{G}$ and $\{ \Psibbig_g  \}_{g=1}^{G}$.
\end{lemma}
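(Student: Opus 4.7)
My plan is to verify multi-convexity by treating each of the three variable groups in turn while freezing the other two, reducing the problem in every case to the standard fact that the function $M \mapsto \mathrm{tr}(B^{\trpose} M^{-1} B)$ is convex on the cone of positive definite matrices whenever $M$ depends affinely on its argument.

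First, I would rewrite the PEB in a form that makes the matrix-inverse structure explicit. Let $\bm{E} = [\bm{e}_1 \ \bm{e}_2] \in \realset{7}{2}$ select the first two coordinates of $\etab$; then
\begin{equation*}
\fpeb = \tracesmall{\bm{E}^{\trpose} \Jeta^{-1} \bm{E}} = \tracesmall{\bm{E}^{\trpose} (\bm{T}\JJgamma\bm{T}^{\trpose})^{-1} \bm{E}}.
\end{equation*}
Since $\bm{T}$ and $\bm{E}$ are fixed quantities determined only by $\etab$, all variation with respect to $\{\XX_g,\oomegag,\Psibbig_g\}_{g=1}^{G}$ enters through $\JJgamma$.

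Next I would separately establish that $\Jeta$ is an affine function of each single group when the other two are held fixed, using Remark~\ref{lemma_fim_func} together with the block structure of $\JJgamma$ in \eqref{eq_JJgamma_bdiag} and the preservation of (bi-)linearity under the congruence $\bm{T}(\cdot)\bm{T}^{\trpose}$. Specifically: (i) fixing $\{\oomegag\}$ and $\{\Psibbig_g\}$, the block $\JJbm$ is linear in $\{\XX_g\}$, while $\JJr$ and $\JJcross$ become linear in $\{\XX_g\}$ because the other factor in their respective bi-linear dependencies is frozen, so $\Jeta$ is affine in $\{\XX_g\}$; (ii) fixing $\{\XX_g\}$ and $\{\Psibbig_g\}$, both $\JJbm$ and $\JJr$ are constant and only $\JJcross$ varies linearly in $\{\oomegag\}$, so $\Jeta$ is affine in $\{\oomegag\}$; (iii) fixing $\{\XX_g\}$ and $\{\oomegag\}$, only $\JJr$ varies, and it does so linearly in $\{\Psibbig_g\}$, so $\Jeta$ is affine in $\{\Psibbig_g\}$.

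Finally, I would invoke matrix convexity of the map $M \mapsto M^{-1}$ on the positive definite cone, i.e. $(\lambda M_1 + (1-\lambda) M_2)^{-1} \preceq \lambda M_1^{-1} + (1-\lambda) M_2^{-1}$ for $\lambda \in [0,1]$. Pre- and post-multiplying by $\bm{E}^{\trpose}$ and $\bm{E}$ and taking the trace preserves the inequality, so $\fpeb$ is convex along every line in each of the three variable groups. Combined with the affineness established above, this yields multi-convexity. The only step requiring care is ensuring $\Jeta \succ 0$ throughout the feasible set so that the inverse and its convexity are well-defined; this follows from the identifiability of $\etab$ under any precoder/phase-profile choice that spans the required directions, and can be treated as a mild regularity assumption implicit in problem \eqref{eq_problem_peb}. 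I expect the main subtlety to be bookkeeping of the bi-linear dependencies in Remark~\ref{lemma_fim_func} to confirm that no cross-term survives when two groups are frozen; the explicit expressions in Appendix~\ref{sec_fim_func} make this verification routine.
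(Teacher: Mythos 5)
Your proposal is correct and follows essentially the same route as the paper: the paper's proof likewise observes (via Remark~\ref{lemma_fim_func} and \eqref{eq_Jeta}) that $\Jeta$ is multi-linear, hence affine in each variable block with the others frozen, and then invokes the composition rules of convex analysis to conclude that $\tracenormal{[\Jeta^{-1}]_{1:2,1:2}}$ is multi-convex. You have merely unpacked the cited composition rule into its explicit ingredients (operator convexity of the matrix inverse, preservation of the Loewner order under congruence by $\bm{E}$, and monotonicity of the trace), and your caveat about requiring $\Jeta \succ 0$ on the feasible set is a reasonable regularity point that the paper leaves implicit.
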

\begin{proof}
    From \rev{Remark}~\ref{lemma_fim_func} and \eqref{eq_Jeta}, we see that $\JJeta$ is a multi-linear function of $\{ \XX_g  \}_{g=1}^{G}$, $\{ \oomegag  \}_{g=1}^{G}$ and $\{ \Psibbig_g  \}_{g=1}^{G}$. Based on the composition rules \cite[Ch.~3.2.4]{boyd2004convex}, $\tracenormal{  \left[ \Jeta^{-1} \right]_{1:2,1:2}  }$ is a multi-convex function of $\{ \XX_g  \}_{g=1}^{G}$, $\{ \oomegag  \}_{g=1}^{G}$ and $\{ \Psibbig_g  \}_{g=1}^{G}$.
\end{proof}

\rev{To clarify the use of $\{ \XX_g, \oomegag, \Psibbig_g  \}_{g=1}^{G}$ instead of $\{ \ff_g, \oomegag  \}_{g=1}^{G}$ in \eqref{eq_problem_peb} as the optimization variables, the following remark is provided.}
\rev{\begin{remark}    \label{lemma_coupled}
    While our goal is to optimize the BS precoders $\ff_g$ and the RIS phase profiles $ \oomegag$, we employ the covariances $\XX_g = \ff_g \ff_g^\hermit$ and $\Psibbig_g = \oomegag (\oomegag)^\hermit$ as the optimization variables in \eqref{eq_problem_peb}. The reason is that the FIM $\JJgamma$ in \eqref{eq_JJgamma_bdiag} is \textit{linear} in $\XX_g$ and $\Psibbig_g$, but \textit{quadratic} in $\ff_g$ and $\oomegag$, as seen from Appendix~\ref{sec_fim_func}. In particular, 
    \begin{itemize}
        \item all the submatrices in \eqref{eq_JJgamma_bdiag} are linear in $\XX_g$, but quadratic in $\ff_g$,
        \item $\JJcross$ is linear in $\oomegag$, and $\JJr$ is linear in $\Psibbig_g$, but quadratic in $\oomegag$,
    \end{itemize}
    as specified in Lemma~1. For PEB minimization, we wish to keep the variables for which the dependencies are linear and discard the remaining ones. This results from the fact that the PEB minimization problem, when written in the epigraph form as will be shown in \eqref{eq_peb_robust3}, induces a matrix inequality (MI) constraint that involves the FIM, such as in \eqref{eq_problem_peb_robust3_discrete_cons1}, which is convex only if the MI is linear \cite[Ch.~4.6.2]{boyd2004convex}, \cite{fukuda2001branch}. Therefore, to have a convex problem, the FIM needs to depend linearly on the optimization variables. This implies that we should keep $\XX_g$, $\Psibbig_g$ and $\oomegag$ as the variables in \eqref{eq_problem_peb}, where $\JJcross$ is defined as a linear function of $\oomegag$, $\JJr$ is defined as a linear function of $\Psibbig_g$ and the coupling between the two variables $\Psibbig_g = \oomegag (\oomegag)^\hermit$ is imposed as a constraint in \eqref{eq_psig_cons}.
\end{remark}}

\subsection{Relaxed Problem for PEB Minimization}\label{sec_relaxed_peb}
To transform \eqref{eq_problem_peb} into a tractable form, we will perform two simplifications. First, we approximate the channel-domain FIM in \eqref{eq_JJgamma_bdiag} as a block-diagonal matrix, i.e.,
\begin{align}\label{eq_JJgamma_bdiag_approx}
    \JJgamma \approx \JJgammablkdiag \eqdef \begin{bmatrix} \JJbm & \boldzero \\
    \boldzero & \JJr
    \end{bmatrix} ~,
\end{align}
by assuming $\JJcross \approx \boldzero$, which can be justified by the assumption of non-interfering paths under the large bandwidth and large array regime \cite{mendrzik2018harnessing,abu2018error,kakkavas2019performance}. Based on \rev{Remark}~\ref{lemma_fim_func}, this enables removing the dependency of $\JJgamma$ and, consequently, the PEB in \eqref{eq_peb_fim} on $\{ \oomegag \}_{g=1}^{G}$. In this case, the constraint in \eqref{eq_psig_cons} should be replaced by
\begin{align}\label{eq_rank_psig}
    \Psibbig_g \succeq 0 \,,\, \rank(\Psibbig_g) = 1 \, , \,
    \diag{\Psibbig_g} = \boldone ~.
\end{align}
Second, we drop non-convex rank constraints in \eqref{eq_rank_Xg} and \eqref{eq_rank_psig}\!. 

After these simplifications\footnote{It should be emphasized that these two relaxations are performed only to reveal the underlying \textit{low-dimensional structure} of the optimal BS and RIS transmission strategies in Sec.~\ref{sec_ao_optimal}, which facilitates \textit{codebook design} in Sec.~\ref{sec_codebook}. For power optimization in Algorithm~\ref{alg_codebook} of Sec.~\ref{sec_codebook} and for simulation results in Sec.~\ref{sec_sim_res}, we employ the true FIM $\JJgamma$ instead of the approximated FIM $\JJgammablkdiag$.}, a relaxed version of the PEB optimization problem in \eqref{eq_problem_peb} can be cast as
\begin{subequations} \label{eq_problem_peb_relaxed}
\begin{align} \label{eq_problem_peb_relaxed_obj}
	\mathop{\mathrm{min}}\limits_{\{ \XX_g, \Psibbig_g  \}_{g=1}^{G}} &~~
	\fpebblkdiag\left(\{ \XX_g, \Psibbig_g  \}_{g=1}^{G}; \etab\right)
	\\ \label{eq_const_xx}
    \mathrm{s.t.}&~~ \eqref{eq_power_cons}\, , \, \XX_g \succeq 0 ~, 
    \\ \label{eq_const_psi_g}
    &~~ \Psibbig_g \succeq 0 \, , \, \diag{\Psibbig_g} = \boldone ~, \\ 
    \nonumber
    &~~  g=1,\ldots,G ~, 
\end{align}
\end{subequations}
where $  \fpebblkdiag(\{ \XX_g, \Psibbig_g  \}_{g=1}^{G}; \etab) \eqdef \mathrm{tr}(  [ ( \bm{T} \JJgammablkdiag \bm{T}^\mathsf{T} ) ^{-1} ]_{1:2,1:2}  )$.
Using Lemma~\ref{lemma_biconvex} and the linearity of the constraints in \eqref{eq_const_xx} and \eqref{eq_const_psi_g}, it is observed that the problem \eqref{eq_problem_peb_relaxed} is convex in $\{ \XX_g  \}_{g=1}^{G}$ for fixed $\{ \Psibbig_g  \}_{g=1}^{G}$ and convex in $\{ \Psibbig_g  \}_{g=1}^{G}$ for fixed  $\{ \XX_g  \}_{g=1}^{G}$. This motivates alternating optimization to solve \eqref{eq_problem_peb_relaxed}, iterating BS precoders update for fixed RIS phase profiles and RIS phase profiles update for fixed BS precoders.

\subsection{Alternating Optimization to Solve Relaxed Problem}\label{sec_ao_optimal}
\subsubsection{Optimize BS Precoders for Fixed RIS Phase Profiles}
For fixed $\{ \Psibbig_g  \}_{g=1}^{G}$, the subproblem of \eqref{eq_problem_peb_relaxed} to optimize $\{ \XX_g  \}_{g=1}^{G}$ can be expressed as 
\begin{align} \label{eq_problem_peb_relaxed_fixed_RIS}
	\mathop{\mathrm{min}}\limits_{\{ \XX_g \}_{g=1}^{G}} &~~
	\fpebblkdiag\left(\{ \XX_g, \Psibbig_g  \}_{g=1}^{G}; \etab\right)
	\\ \nonumber 
    \mathrm{s.t.}&~~ \eqref{eq_const_xx} ~,
\end{align}
which is a convex problem and can be solved using off-the-shelf solvers \cite{cvx}. To achieve low-complexity optimization, we can exploit the low-dimensional structure of the optimal precoder covariance matrices, as shown in the following result.

\begin{proposition}\label{prop_BS_precoder}
    The optimal BS precoder covariance matrices $\{ \XX_g \}_{g=1}^{G}$ in \eqref{eq_problem_peb_relaxed_fixed_RIS} can be written as $\XX_g = \aabsbig \Upsilonb_g \aabsbig^\hermit $
    where
    \begin{align} \label{eq_a_bs}
        \aabsbig \eqdef \left[ \aabs(\thetabr) ~ \aabs(\thetabm) ~  \aabsdt(\thetabm)  \right]^\conj ~,
    \end{align}
    $\aabsdt(\theta) \eqdef \partial \aabs(\theta)/\partial \theta$ and $\Upsilonb_g \in \complexset{3}{3}$ is a positive semidefinite matrix.
\end{proposition}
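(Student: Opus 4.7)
The plan is to show that the block-diagonal FIM $\JJgammablkdiag$ underlying $\fpebblkdiag$ depends on the BS precoder covariance $\XX_g$ only through the $3\times 3$ ``sufficient statistic'' $\aabsbig^\hermit \XX_g \aabsbig$, and then to use a projection/rescaling argument to conclude that an optimal $\XX_g$ must have its range inside $\mathrm{range}(\aabsbig)$.

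First, I would examine how the noise-free components $\mgnbm$ and $\mgnr$ in \eqref{eq_mgnbm} depend on $\ff_g$. Inspection shows that $\mgnbm$ is linear in $\aabs^\trpose(\thetabm)\ff_g$, so its derivatives with respect to the four LoS parameters $\{\taubm,\thetabm,\rhobm,\varphibm\}$ entering $\JJbm$ are scalar multiples of either $\aabs^\trpose(\thetabm)\ff_g$ (for $\taubm,\rhobm,\varphibm$) or $\aabsdt^\trpose(\thetabm)\ff_g$ (for $\thetabm$), with scalars independent of $\ff_g$. Similarly, $\mgnr$ is linear in $\aabs^\trpose(\thetabr)\ff_g$, so all of its derivatives with respect to the four NLoS parameters $\{\taur,\thetarm,\rhor,\varphir\}$ entering $\JJr$ are scalar multiples of $\aabs^\trpose(\thetabr)\ff_g$, where the scalars may involve $\oomegag$ and the RIS steering vectors but not $\ff_g$. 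Plugging these into the Slepian-Bangs formula \eqref{eq::FIMelemformula} and using the elementary identity $(\bm{v}_l^\trpose\ff_g)^\conj(\bm{v}_k^\trpose\ff_g) = [\aabsbig^\hermit \XX_g \aabsbig]_{k,l}$ for any $\bm{v}_k,\bm{v}_l$ drawn from the column set $\{\aabs(\thetabm),\aabsdt(\thetabm),\aabs(\thetabr)\}$ of $\aabsbig^\conj$, it follows that every entry of $\JJbm$ and $\JJr$, and hence of $\fpebblkdiag$, is an affine function of the entries of $\aabsbig^\hermit \XX_g \aabsbig$.

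Next, I would apply a projection argument. Assuming $\aabsbig$ has full column rank (which holds generically, since $\thetabm\ne\thetabr$ and $\aabsdt(\thetabm)$ is linearly independent from $\aabs(\thetabm)$), let $\projrange{\aabsbig} \eqdef \aabsbig(\aabsbig^\hermit\aabsbig)^{-1}\aabsbig^\hermit$ be the orthogonal projector onto $\mathrm{range}(\aabsbig)$. For any feasible $\XX_g\succeq 0$, define $\widetilde{\XX}_g \eqdef \projrange{\aabsbig}\XX_g\projrange{\aabsbig}\succeq 0$. Since $\projrange{\aabsbig}\aabsbig = \aabsbig$, we have $\aabsbig^\hermit\widetilde{\XX}_g\aabsbig = \aabsbig^\hermit\XX_g\aabsbig$, so the PEB is preserved. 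At the same time, $\tracenormal{\widetilde{\XX}_g} = \tracenormal{\projrange{\aabsbig}\XX_g} \le \tracenormal{\XX_g}$, with strict inequality iff $\mathrm{range}(\XX_g) \not\subseteq \mathrm{range}(\aabsbig)$.

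Finally, if an optimal $\XX_g^\star$ contained a component outside $\mathrm{range}(\aabsbig)$, the total power saved by the projection could be redistributed via a uniform rescaling $\widetilde{\XX}_g^\star \leftarrow c\,\widetilde{\XX}_g^\star$ with $c>1$ chosen so that \eqref{eq_power_cons} is saturated; this scales $\JJgammablkdiag$ by $c$ and strictly reduces $\fpebblkdiag$, contradicting optimality. Hence any optimal $\XX_g$ satisfies $\XX_g = \projrange{\aabsbig}\XX_g\projrange{\aabsbig}$, which, since $\aabsbig$ has full column rank, is equivalent to $\XX_g = \aabsbig\Upsilonb_g\aabsbig^\hermit$ for a unique $\Upsilonb_g\in\complexset{3}{3}$, and $\Upsilonb_g\succeq 0$ then follows from $\XX_g\succeq 0$. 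The main obstacle is the rigorous handling of the rescaling step: one must ensure $\fpebblkdiag$ is strictly decreasing under a uniform scaling of the FIM, which in turn requires $\JJgammablkdiag$ to be non-singular along the argument and uses the identity $\tracenormal{(c\,\bm{M})^{-1}} = c^{-1}\tracenormal{\bm{M}^{-1}}$ for $c>0$; non-singularity is guaranteed by the identifiability of $\etab$ under the assumed geometry, and corner cases where the initial PEB is already infinite can be excluded since then the optimization is vacuous.
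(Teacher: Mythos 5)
Your proposal is correct and follows essentially the same route as the paper's proof in Appendix~\ref{app_proof_BS}: both rest on the observation that the FIM depends on $\XX_g$ only through $\aabsbig^\hermit \XX_g \aabsbig$, followed by a projection onto $\mathrm{range}(\aabsbig)$ that preserves this statistic while saving power, and a contradiction argument in which the saved power is reallocated by a uniform scaling $\zeta>1$ of the projected covariances, strictly reducing the PEB. The only cosmetic difference is that the paper applies the projector decomposition to a square-root factor $\Gammabbig$ of $\XX_g$ rather than to $\XX_g$ directly, which is equivalent; your explicit remarks on FIM non-singularity and the identity $\tracenormal{(c\bm{M})^{-1}}=c^{-1}\tracenormal{\bm{M}^{-1}}$ make the rescaling step slightly more rigorous than the paper's.
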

\begin{proof}
    Please see Appendix~\ref{app_proof_BS}.
\end{proof}

\subsubsection{Optimize RIS Phase Profiles for Fixed BS Precoders}
For fixed $\{ \XX_g  \}_{g=1}^{G}$, we can formulate the subproblem of \eqref{eq_problem_peb_relaxed} to optimize $\{ \Psibbig_g  \}_{g=1}^{G}$ as follows:
\begin{align} \label{eq_problem_peb_relaxed_fixed_BS}
	\mathop{\mathrm{min}}\limits_{\{ \Psibbig_g \}_{g=1}^{G}} &~~
	\fpebblkdiag\left(\{ \XX_g, \Psibbig_g  \}_{g=1}^{G}; \etab\right)
	\\ \nonumber 
    \mathrm{s.t.}&~~  \eqref{eq_const_psi_g} ~, 
\end{align}
which is again a convex problem \cite{boyd2004convex}. Similar to \eqref{eq_problem_peb_relaxed_fixed_RIS}, the inherent low-dimensional structure of the optimal phase profiles can be exploited to obtain fast solutions to \eqref{eq_problem_peb_relaxed_fixed_BS}, as indicated in the following proposition.

\begin{proposition}\label{prop_RIS_profile}
    The optimal RIS phase profile covariance matrices $\{ \Psibbig_g  \}_{g=1}^{G}$ in \eqref{eq_problem_peb_relaxed_fixed_BS} in the absence of the unit-modulus constraints $\diag{\Psibbig_g} = \boldone$ can be expressed as $\Psibbig_g = \aariswbig \, \Xib_g \, \aariswbig^\hermit$, 
    where
    \begin{align} \label{eq_bris}
        \aariswbig \eqdef \left[ \aarisw  ~ \aariswdt \right]^\conj ~,
    \end{align}
    $\aariswdt(\theta) \eqdef \partial \aarisw(\theta)/\partial \theta$, $\aariswdt \equiv \aariswdt(\thetarm)$ and $\Xib_g \in \complexset{2}{2}$ is a positive semidefinite matrix.
\end{proposition}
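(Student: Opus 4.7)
The plan is to mirror the proof of Proposition~\ref{prop_BS_precoder}, by first establishing that $\JJr$ depends on each $\Psibbig_g$ only through a $2 \times 2$ quadratic form supported on $\mathrm{range}(\aariswbig)$, and then exhibiting a low-dimensional reparameterization that preserves this form.

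First I would inspect the partial derivatives of the RIS-path noise-free signal $\mgnr$ in \eqref{eq_mgnbm} with respect to the channel parameters $\{\tau_{\RM}, \thetarm, \rho_{\R}, \varphi_{\R}\}$. Since $\mgnr$ depends on $\oomegag$ only through the inner product $\aarisw^\trpose(\thetarm)\,\oomegag$, the derivatives with respect to $\{\tau_{\RM}, \rho_{\R}, \varphi_{\R}\}$ retain the factor $\aarisw^\trpose \oomegag$, whereas differentiating with respect to $\thetarm$ produces the analogous inner product $(\aariswdt)^\trpose \oomegag$. Each derivative therefore admits a factorization $\partial \mgnr / \partial \gamma_h = \alpha_{h,g,n} \, \bm{u}_h^\trpose \oomegag$, with $\bm{u}_h \in \{\aarisw, \aariswdt\}$ and a scalar $\alpha_{h,g,n}$ absorbing all contributions from $\cc(\taur)$, $\rhor \e^{j\varphir}$, and $\aabs^\trpose(\thetabr)\ff_g$, and hence independent of $\oomegag$. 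Substituting these factorizations into the Slepian-Bangs formula \eqref{eq::FIMelemformula} and using the identity $(\bm{u}_h^\trpose \oomegag)^\conj (\bm{u}_k^\trpose \oomegag) = \bm{u}_k^\trpose \Psibbig_g \bm{u}_h^\conj$, each entry of $\JJr$ becomes a real linear combination of the four entries of the $2 \times 2$ sandwich matrix $\aariswbig^\hermit \Psibbig_g \aariswbig$. Consequently $\JJr$, and hence the PEB, depends on each $\Psibbig_g$ only through this sandwich.

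Second, I would exploit this structural lemma to reduce the optimization. For a generic $\thetarm$, the vectors $\aarisw$ and $\aariswdt$ are linearly independent, so $\aariswbig \in \complexset{\nr}{2}$ has full column rank and $\bm{G} \eqdef \aariswbig^\hermit \aariswbig$ is positive definite. Given any feasible $\Psibbig_g \succeq \boldzero$, I would set $\Xib_g \eqdef \bm{G}^{-1} \aariswbig^\hermit \Psibbig_g \aariswbig \, \bm{G}^{-1}$, which is PSD as a two-sided congruence of a PSD matrix, and define $\Psibbig_g' \eqdef \aariswbig \, \Xib_g \, \aariswbig^\hermit$. Then $\Psibbig_g' \succeq \boldzero$ and a direct computation yields $\aariswbig^\hermit \Psibbig_g' \aariswbig = \aariswbig^\hermit \Psibbig_g \aariswbig$; by the structural lemma, $\Psibbig_g'$ therefore produces the same $\JJr$, and hence the same PEB, as $\Psibbig_g$. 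Since, after dropping the unit-modulus constraint, the only requirement in \eqref{eq_problem_peb_relaxed_fixed_BS} is positive semidefiniteness, restricting the phase-profile covariances to the form $\aariswbig \Xib_g \aariswbig^\hermit$ with $\Xib_g \in \complexset{2}{2}$ PSD is without loss of optimality, which is the claim.

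The main obstacle I expect is the bookkeeping in the first step: tracking the various scalar prefactors through the derivative-product expansion and verifying that \emph{all} $\Psibbig_g$-dependence of $\JJr$ collapses onto the four entries of $\aariswbig^\hermit \Psibbig_g \aariswbig$, with no residual component pointing in directions orthogonal to $\mathrm{range}(\aariswbig)$. Once this lemma is in place, the reparameterization step is a routine application of congruence invariance and closely parallels the argument used for $\aabsbig$ in Proposition~\ref{prop_BS_precoder}.
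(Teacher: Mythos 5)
Your proposal is correct and takes essentially the same route as the paper: the paper likewise first observes (via the explicit FIM entries in Appendix~\ref{sec_fim_nlos}) that $\JJr$ depends on $\Psibbig_g$ only through the $2\times 2$ sandwich $\aariswbig^\hermit \Psibbig_g \aariswbig$, and then applies the same range-space reduction used for Proposition~\ref{prop_BS_precoder}. Your explicit congruence construction $\Xib_g = (\aariswbig^\hermit\aariswbig)^{-1}\aariswbig^\hermit\Psibbig_g\aariswbig(\aariswbig^\hermit\aariswbig)^{-1}$ is precisely the reparameterization the paper records in \eqref{eq_ups_inv}, and you are right that the power-budget contradiction step of the Proposition~\ref{prop_BS_precoder} proof is not needed here, since after dropping the unit-modulus constraint the only remaining requirement on $\Psibbig_g$ is positive semidefiniteness.
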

\begin{proof}
    Please see Appendix~\ref{app_proof_RIS}.
\end{proof}
\rev{Fig.~\ref{fig_beams_prop12} provides a graphical representation of the beams in \eqref{eq_a_bs} and \eqref{eq_bris}.}

\rev{\begin{remark}    \label{lemma_alt_opt}
    It is worth emphasizing that we never solve the problem \eqref{eq_problem_peb_relaxed} to obtain $\XX_g$ and $\Psibbig_g$. The sole purpose of the alternating optimization is to formulate the subproblems \eqref{eq_problem_peb_relaxed_fixed_RIS} and \eqref{eq_problem_peb_relaxed_fixed_BS}, and, based on that, to uncover the low-dimensional structure of the optimal BS and RIS transmission strategies, as shown in Prop.~\ref{prop_BS_precoder} and Prop.~\ref{prop_RIS_profile}. The derived low-dimensional structure will be exploited in Sec.~\ref{sec_robust_design} to design the codebooks in \eqref{eq_prop_codebook} under imperfect knowledge of UE location. Hence, the aim of Sec.~\ref{sec_joint_design_opt} is not to solve the PEB minimization problem under perfect knowledge of UE location, but to extract analytical insights from the structure of the solution that will be conducive to tackling the more practical problem of PEB optimization under UE location uncertainty in Sec.~\ref{sec_robust_design}. 
\end{remark}}

\subsection{Interpretation of Proposition~\ref{prop_BS_precoder} and Proposition~\ref{prop_RIS_profile}}\label{sec_interp}
By focusing on the optimal structure of the precoder covariance matrices obtained in Prop.~\ref{prop_BS_precoder}, it emerges that the BS should transmit different beams along the two main directions of the \ac{AoD}s $\thetabr$ and $\thetabm$, i.e., the BS should serve both the RIS and the UE. Interestingly, a sort of asymmetry exists in \eqref{eq_a_bs}: while for the \ac{AoD} with respect to the RIS, the optimal structure of the precoder includes only the \textit{directional} beam $\aabs(\thetabr)$, for the \ac{AoD} with respect to the UE, the BS employs both a \textit{directional} beam $\aabs(\thetabm)$ and its \textit{derivative} $\aabsdt(\thetabm)$ \cite{li2007range,tasosMultiBeam2019,keskin2021optimal}. This  can be explained by noting that, for positioning purposes, the UE needs to estimate the \ac{AoD} with respect to the BS, and to do so a certain degree of diversity in the received beams should exist \cite{beamspace_loc_2019_TSP,Fascista_FirstTWC}. 
 
On the other hand, in the first tandem channel between the BS and the RIS, there is no need to estimate the \ac{AoD} $\thetabr$ (its value is known a priori, given the known positions of both BS and RIS), and from a PEB perspective, the transmitted power should be concentrated in a single directional beam towards the RIS, so as to maximize the received SNR over the whole BS-RIS-UE channel.

Similar conclusions can be derived from Prop.~\ref{prop_RIS_profile}. Namely, RIS phase profiles should be steered towards the \ac{AoD} $\thetarm$ with respect to the UE. In addition, both the \textit{directional} beam $\aarisw(\thetarm)$ and its \textit{derivative} $\aariswdt(\thetarm)$ should be employed to maximize the performance of \ac{AoD} estimation at the UE, which corresponds to the same principle as used in \textit{sum} and \textit{difference} beams of monopulse radar \cite{monopulse_review}.


\begin{figure}%
    \centering
    \subfloat[\centering \ed Beampatterns of the optimal BS beams $ \aabsbig$ in \eqref{eq_a_bs}.]{{\includegraphics[width=0.35\textwidth]{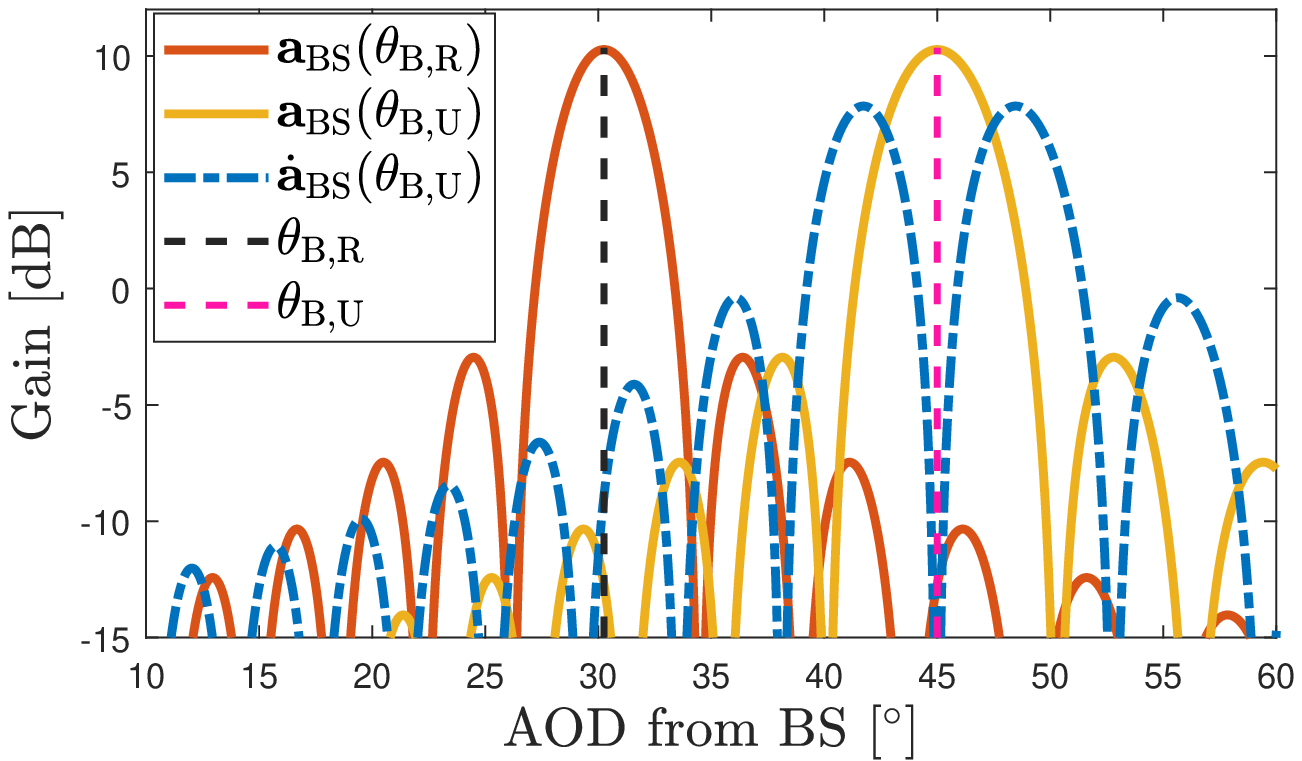} } \label{fig_beams_prop1}}%
    \qquad
    \subfloat[\centering \ed Beampatterns of the optimal RIS beams $\aariswbig$ in \eqref{eq_bris}.]{{\includegraphics[width=0.35\textwidth]{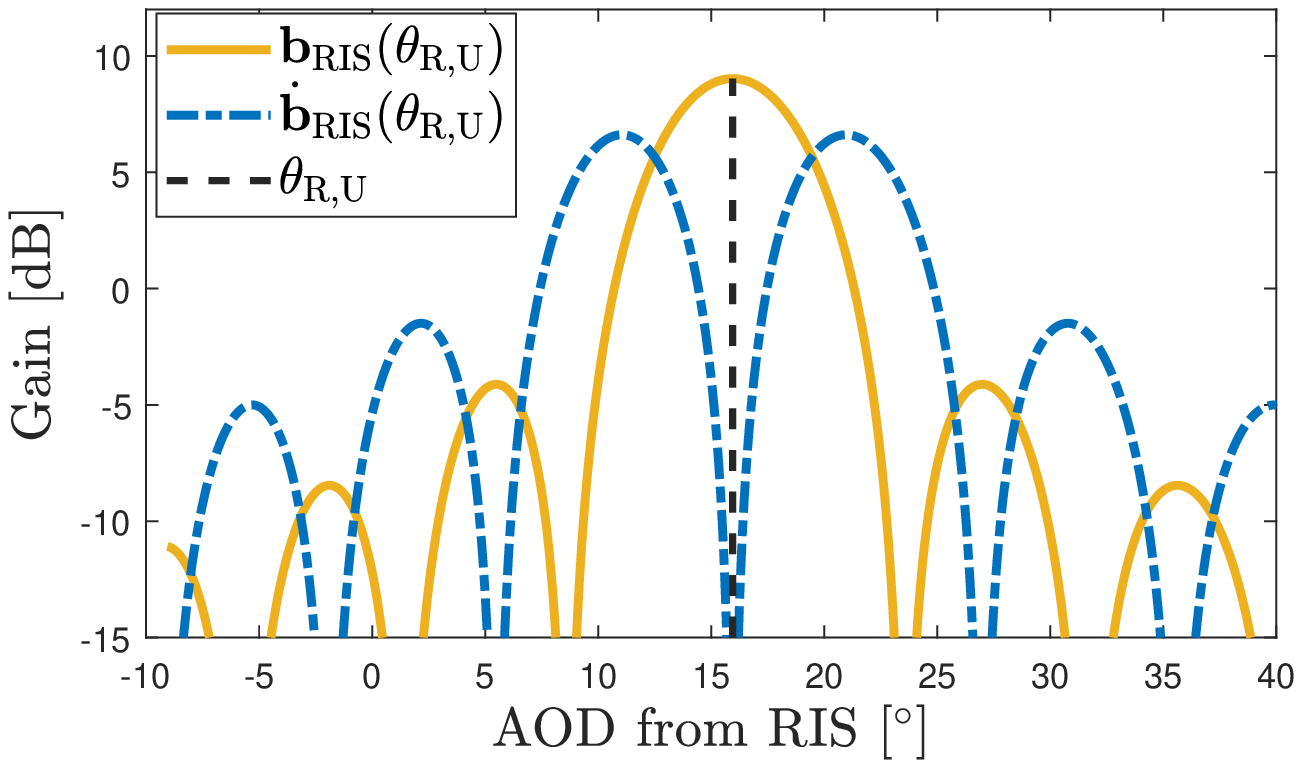} }\label{fig_beams_prop2}}
    \caption{\rev{Beampatterns of the localization-optimal BS and RIS beams, including both the directional and derivative beams, obtained for the setup in Sec.~\ref{sec_sim_setup}. The \textit{directional} beams maximize the SNR at the targeted UE location ($\thetabm$ and $\thetarm$), which serves to provide sufficient SNR for localization, while the \textit{derivative} beams enable the UE to detect small deviations around the nominal direction, similar to monopulse track radars \cite{monopulse_review,zhang2005monopulse}, which can be noticed through sharp bending of the beampattern around $\thetabm$ and $\thetarm$. This sharp curvature around the targeted location allows small deviations in angle to induce large changes in amplitude, thereby facilitating highly accurate mapping from complex amplitude measurements to angles.}}%
    \label{fig_beams_prop12}%
\end{figure}


\section{Robust Joint Design of BS Precoder and RIS Phase Profiles under Location Uncertainty}\label{sec_robust_design}
In this section, inspired by Prop.~\ref{prop_BS_precoder} and Prop.~\ref{prop_RIS_profile} in Sec.~\ref{sec_joint_design_opt}, we develop \textit{robust joint design} strategies for BS precoder and RIS phase profiles under \textit{imperfect knowledge of UE location} $\pp$ in \eqref{eq_etab}. To this end, we consider an optimal unconstrained design (without any specific codebook), which turns out to be intractable, and propose a novel codebook-based design with optimized power allocation for joint BS-RIS beamforming.

\subsection{Optimal Unconstrained Design}\label{sec_uncons}
Solving the PEB minimization problem in \eqref{eq_problem_peb} requires the knowledge of precise UE location\footnote{From the viewpoint of joint BS-RIS beamforming, the most essential information required to solve \eqref{eq_problem_peb} is the UE location (i.e., where to steer the BS and RIS beams). Regarding the other unknown parameters in $\etab$ in \eqref{eq_etab}, we note from Appendix~\ref{sec_fim_func} that the FIM does not depend on \rev{a \textit{specific value} of} the clock offset $\Delta$ \rev{(though the FIM depends \textit{functionally} on $\Delta$, as seen from \eqref{eq_Jeta} and Appendix~\ref{sec_transform_mat})}. Hence, the PEB minimization problem in \eqref{eq_problem_peb} can be solved without the knowledge of $\Delta$. On the other hand, we assume the channel gains in \eqref{eq_etab} are perfectly known. As seen from Appendix~\ref{sec_fim_corr}, the case of uncertain gains leads to intractable PEB expressions due to \ac{LoS}-\ac{NLoS} correlations, is therefore left outside the scope of the current work and will be investigated in a future study.} $\pp$ which, however, may not be available in practice due to  measurement noise and tracking errors. Hence, we assume an uncertainty region $\pp \in \ppreg$ for the UE location and consider the robust design problem that minimizes the worst-case PEB over $\ppreg$ \cite{robust_2013,win_navigation_2018,beamspace_loc_2019_TSP,robust_VLP_TCOM_2018}:
\begin{align}  \label{eq_problem_peb_robust}
	\mathop{\mathrm{min}}\limits_{\{ \XX_g, \oomegag, \Psibbig_g  \}_{g=1}^{G}} &~ \mathop{\mathrm{max}}\limits_{\pp \in \ppreg} ~
	\fpeb\left(\{ \XX_g, \oomegag, \Psibbig_g  \}_{g=1}^{G}; \etab(\pp) \right)
	\\ \nonumber
    \mathrm{s.t.}&~~ \eqref{eq_power_cons}-\eqref{eq_psig_cons} ~,
\end{align}
where $\etab$ is replaced by $\etab(\pp)$ in the $\fpeb$ to highlight its dependency on $\pp$. The epigraph form of \eqref{eq_problem_peb_robust} can be expressed as%
\begin{subequations} \label{eq_problem_peb_robust2}
\begin{align} 
	\mathop{\mathrm{min}}\limits_{\{ \XX_g, \oomegag, \Psibbig_g  \}_{g=1}^{G},t} &~ t \\ \label{eq_problem_peb_robust2_cons1}
    \mathrm{s.t.}&~~ \fpeb\left(\{ \XX_g, \oomegag, \Psibbig_g  \}_{g=1}^{G}; \etab(\pp) \right) \leq t, ~ \forall \pp \in \ppreg  \\ \nonumber
    &~~ \eqref{eq_power_cons}-\eqref{eq_psig_cons} ~.
\end{align}
\end{subequations}
To tackle the semi-infinite optimization problem in \eqref{eq_problem_peb_robust2}, we can discretize $\ppreg$ into $\ngrid$ grid points $\{ \pp_m \}_{m=0}^{\ngrid-1}$ \cite{beamspace_loc_2019_TSP} and obtain the following approximated version using \eqref{eq_peb_fim}:%
\begin{subequations} \label{eq_peb_robust3}
\begin{align} 
	\mathop{\mathrm{min}}\limits_{ \substack{ \{ \XX_g, \oomegag, \Psibbig_g  \}_{g=1}^{G} \\ t,\{u_{m,k}\} } } &~ t \\ \label{eq_problem_peb_robust3_discrete_cons1}
    \mathrm{s.t.}&~~ 
    \begin{bmatrix}  \JJeta(\{\XX_g, \oomegag, \Psibbig_g  \}_{g=1}^{G}; \etab(\pp_m)) & \ekk{k} \\ \ekkt{k} & u_{m,k}  \end{bmatrix} \succeq 0
	\\ \label{eq_problem_peb_robust3_discrete_cons2} &~~ u_{m,0} +  u_{m,1} \leq t ~,
	\\ \nonumber
	&~~ k=0,1, ~ m = 0, \ldots, \ngrid-1 ~,
    \\ \nonumber
    &~~  \eqref{eq_power_cons}-\eqref{eq_psig_cons},
\end{align}
\end{subequations}
where $\ekk{k}$ is the $k$-th column of the identity matrix, and the equivalence between \eqref{eq_problem_peb_robust3_discrete_cons1}, \eqref{eq_problem_peb_robust3_discrete_cons2} and the discretized version of \eqref{eq_problem_peb_robust2_cons1} stems from \cite[Eq.~(7.28)]{boyd2004convex}. In \eqref{eq_problem_peb_robust3_discrete_cons1}, $\JJeta(\{\XX_g, \oomegag, \Psibbig_g  \}_{g=1}^{G}; \etab(\pp_m))$ is the FIM in \eqref{eq_Jeta} evaluated at the grid location $\pp_m$.

Two issues arise that make the problem  \eqref{eq_peb_robust3} intractable. First, \eqref{eq_power_cons}--\eqref{eq_psig_cons} involve non-convex rank and unit-modulus constraints, which can only be handled via relaxations in Sec.~\ref{sec_relaxed_peb}. Second, since $\JJeta(\{\XX_g, \oomegag, \Psibbig_g  \}_{g=1}^{G}; \etab(\pp_m))$ is not linear with respect to $\{\XX_g, \oomegag, \Psibbig_g  \}_{g=1}^{G}$ according to \rev{Remark}~\ref{lemma_fim_func}, \eqref{eq_problem_peb_robust3_discrete_cons1} does not represent a linear matrix inequality (LMI) \cite{fukuda2001branch}, implying that \eqref{eq_peb_robust3} is not convex \cite[Ex.~(2.10)]{boyd2004convex}. As a possible remedy, alternating optimization (AO) of $\{ \XX_g  \}_{g=1}^{G}$ and $\{ \Psibbig_g  \}_{g=1}^{G}$ can be performed (after eliminating the dependency of $\JJeta$ on $\{ \oomegag \}_{g=1}^{G}$ using the approximation in \eqref{eq_JJgamma_bdiag_approx}), where each subproblem becomes convex as bi-linear matrix inequalities (BMIs) degenerate to LMIs when one of the variables is fixed. However, this leads to a high computational complexity roughly given by $O(\nbs^6)$ and $O(\nr^6)$ \cite[Ch.~11]{nemirovski2004interior} for the BS and RIS subproblems, respectively. To devise a practically implementable solution, we propose a low-complexity codebook-based design strategy, as detailed in Sec.~\ref{sec_codebook}.

\subsection{Low-Complexity Codebook-Based Design}\label{sec_codebook}
Motivated by the optimal low-dimensional structure of the BS precoder and the RIS phase profile covariance matrices, derived in Prop.~\ref{prop_BS_precoder} and Prop.~\ref{prop_RIS_profile}, we develop a codebook-based low-complexity design approach as a practical alternative to unconstrained design in Sec.~\ref{sec_uncons}. To this end, let $\{ \thetabm^{(i)} \}_{i=1}^{\lb}$ and $\{ \thetarm^{(i)} \}_{i=1}^{\lr}$ denote the uniformly spaced \ac{AoD}s from the BS to the UE and from the RIS to the UE, respectively, that span the uncertainty region $\ppreg$ of the UE location, where the angular spacing is set to $3 \, \rm{dB}$ (half-power) beamwidth of the corresponding array \cite{zhang2018multibeam,keskin2021optimal}, \cite[Ch.~22.10]{orfanidis2002electromagnetic}. 

Relying on Prop.~\ref{prop_BS_precoder}, Prop.~\ref{prop_RIS_profile} and their interpretation in Sec.~\ref{sec_interp}, we propose the following codebooks for the BS precoder and the RIS phase profiles \cite{keskin2021optimal} consisting of both \textit{directional} and \textit{derivative} beams \rev{(please refer to Appendix~\ref{sec_supp_codebook} for additional details on how to obtain these codebooks)}:
\begin{subequations}\label{eq_prop_codebook}
\begin{align} \label{eq_fbs}
    \FFbs &= [ \aabs(\thetabr) ~ \FFbssum ~ \FFbsdiff ]^\conj \in \complexset{\nbs}{(2 \lb + 1)} ~, \\ \label{eq_fris}
    \FFris &= [ \FFrissum ~ \FFrisdiff ]^\conj \in \complexset{\nr}{2 \lr } ~,
\end{align}
\end{subequations}
where $ \FFbssum \eqdef [ \aabs(\thetabm^{(1)}) \, \cdots \, \aabs(\thetabm^{(\lb)}) ]$, $ \FFbsdiff \eqdef [ \aabsdt(\thetabm^{(1)}) \, \cdots \, \aabsdt(\thetabm^{(\lb)}) ]$ and 
\begin{align} \label{eq_fris_sum}
     \FFrissum &\eqdef [ \aarisw(\thetarm^{(1)}) \, \cdots \, \aarisw(\thetarm^{(\lr)}) ] ~, \\ \label{eq_fris_der}
     \FFrisdiff &\eqdef [ \aariswdttilde(\thetarm^{(1)}) \, \cdots \, \aariswdttilde(\thetarm^{(\lr)}) ] ~.
\end{align}
In \eqref{eq_fris_der}, due to phase-only control of RIS profiles, we employ $\aariswdttilde(\theta)$, which is the best approximation with unit-modulus entries to $\aariswdt(\theta)$ in \eqref{eq_bris}. To obtain $\aariswdttilde(\theta)$ from $\aariswdt(\theta)$, the projected gradient descent algorithm in \cite[Alg.~1]{analogBeamformerDesign_TSP_2017} is used.

For each transmission, we choose a BS-RIS signal pair $\{ \FFbs_{:,i}, \FFris_{:,j} \}$, corresponding to the $\thn{i}$ beam in $\FFbs$ and the $\thn{j}$ beam in $\FFris$, which leads to $G = (2 \lb + 1) 2 \lr$ transmissions in total\rev{\footnote{\rev{Due to the dependence of $\lb$ and $\lr$ on the $3 \, \rm{dB}$ beamwidth of the respective arrays at the BS and RIS, $G$ is a function of the number of elements at the BS and RIS as well as the size of the uncertainty region $\ppreg$. In addition, depending on whether the SNR is sufficient using a single slot of $G$ transmissions, the slot can be repeated multiple times to reach the desired level of SNR.}}}. To minimize the worst-case PEB using this codebook-based approach, we formulate a beam power allocation problem that finds the optimal power $\varrhob = \left[\varrho_1 \ldots \varrho_G \right]^T$ of BS beams in each transmission under total power constraint\footnote{Each beam in $\FFbs$ and $\FFris$ is normalized to have unit norm prior to power optimization.}:%
\begin{subequations} \label{eq_peb_codebook}
\begin{align} \label{eq_peb_codebook_obj}
	\mathop{\mathrm{min}}\limits_{ \substack{\varrhob,t \\\{u_{m,k}\}} } &~ t \\ \label{eq_peb_codebook_lmi}
    \mathrm{s.t.}&~~ 
    \begin{bmatrix}  \JJeta(\{\XX_g, \oomegag, \Psibbig_g  \}_{g=1}^{G}; \etab(\pp_m)) & \ekk{k} \\ \ekkt{k} & u_{m,k}  \end{bmatrix} \succeq 0 ~, 
	\\ \nonumber &~~ u_{m,0} +  u_{m,1} \leq t,  ~ k=0,1, ~m=0,\ldots,\ngrid-1 ~,
    \\ \nonumber &~~  \tracebig{\sum_{g=1}^{G}\XX_g} = 1 \, , \, \varrhob \succeq \boldzero \, , \, \XX_g = \varrho_g \FFbs_{:,i} (\FFbs_{:,i})^\hermit ~,
    \\ \nonumber &~~ \oomegag = \FFris_{:,j} \, , \, \Psibbig_g = \oomegag (\oomegag)^\hermit  \,, \, g=1,\ldots,G ~,
\end{align}
\end{subequations}
where the mapping between the transmission index $g$ and the BS-RIS beam index pair $(i,j)$ is performed according to $g = i + (2\lb+1)(j-1)$ for $i = 1, \ldots, 2\lb+1$ and $j = 1, \ldots, 2 \lr$. As \eqref{eq_peb_codebook_lmi} is LMI in $\varrhob$ and $\{u_{m,k}\}$ (see \rev{Remark}~\ref{lemma_fim_func}), the problem \eqref{eq_peb_codebook} is convex. After obtaining the optimal power allocation vector $\varrhob^{\star} = \left[\varrho_1^{\star} \ldots \varrho_G^{\star} \right]^T$ as the solution to \eqref{eq_peb_codebook}, the optimized codebook is given by the collection of the BS-RIS signal pairs $\Big\{ \sqrt{\varrho_g^{\star}}  \FFbs_{:,i}, \FFris_{:,j} \Big\}_{\forall i,j}$. The overall BS-RIS signal design algorithm is summarized in Algorithm~\ref{alg_codebook}. The computational complexity of \eqref{eq_peb_codebook} is approximately given by \rev{$O(M^3)$} \cite[Ch.~11]{nemirovski2004interior}, \cite{keskin2021optimal}, \rev{under the assumption that $M$ is on the same order as $G$}. Since \rev{$M < \nbs^2$} and \rev{$M < \nr^2$} in practice (see Sec.~\ref{sec_sim_setup}), the proposed (non-iterative) design strategy in Algorithm~\ref{alg_codebook} is more efficient than \rev{even the} individual iterations of an AO approach in Sec.~\ref{sec_uncons}.

{\ed As anticipated, the proposed robust joint design of BS precoders and RIS phase profiles can be in principle extended to the 3D case, using a 2D array (e.g., a URA) in place of the ULA. In this case, three types of beams need to be employed, namely, \textit{directional} beams, \textit{azimuth derivative} beams and \textit{elevation derivative} beams, in contrast to only directional and derivative beams as in the 2D scenario.}

\begin{algorithm}[t]
	\caption{Joint BS Precoder and RIS Phase Profile Design with Power Optimized Codebooks}
	\label{alg_codebook}
	\begin{algorithmic}[1]
	    \State \textbf{Input:} Uncertainty region $\ppreg$ of the UE location $\pp$ in \eqref{eq_etab}.
	    \State \textbf{Output:} Optimized BS-RIS signal pairs $\big\{ \sqrt{\varrho_g^{\star}}  \FFbs_{:,i}, \FFris_{:,j} \big\}_{\forall i,j}$ with the optimal powers $\{ \varrho_g^{\star} \}_{\forall g}$.
	    \begin{enumerate}[label=(\alph*)]
	        \item Determine the uniformly spaced \ac{AoD}s from the BS to the UE $\{ \thetabm^{(i)} \}_{i=1}^{\lb}$ and those from the RIS to the UE $\{ \thetarm^{(i)} \}_{i=1}^{\lr}$ based on $\ppreg$.
	        \item Construct the BS and RIS codebooks in \eqref{eq_prop_codebook}.
	        \item Perform power allocation across $G = (2 \lb + 1) 2 \lr$ transmissions, each employing a different BS-RIS signal pair $\{ \FFbs_{:,i}, \FFris_{:,j} \}$, by solving the problem in \eqref{eq_peb_codebook}.
	    \end{enumerate}
	\end{algorithmic}
	\normalsize
\end{algorithm}

\section{Maximum Likelihood Joint Localization and Synchronization}\label{sec_estimator}
In this section, we first derive the joint ML estimator of the desired position $\bm{p}$ and clock offset $\Delta$. To overcome the need of an exhaustive 3D grid-based optimization of the resulting compressed log-likelihood function, we then provide a reduced-complexity estimator that leverage a suitable reparameterization of the signal model to decouple the dependencies on the delays and \ac{AoD}s, enabling a separate though accurate initial estimation of both $\bm{p}$ and $\Delta$. Such estimated values are subsequently used as initialization for an iterative low-complexity optimization of the joint ML cost function, which provides the  refined position and clock offset estimates.

\subsection{Joint Position and Clock Offset Maximum Likelihood Estimation}\label{sec:JML}
To formulate the joint ML estimation problem, let $\bm{\Theta} = [p_x \ p_y \ \Delta]^\mathsf{T}$ denote the vector containing the desired UE position and clock offset parameters. By parameterizing the unknown \ac{AoD}s ($\theta_\BM$ and $\theta_\RM$) and delays ($\tau_\BM$ and $\tau_\R$) as a function of the sought $\bm{\Theta}$ through  \eqref{geomrelationships}, and stacking all the $N$ signals received over each transmission $g$, we obtain the more compact expression 
 \begin{equation}\label{eq::Model1}
    \bm{y}_g = \sqrt{P}\bm{B}_g \bm{\alpha} +\bm{\nu}_g
    \end{equation}
   with 
   \begin{align*}
  \bm{y}_g &= [y_g[0] \ \cdots \ y_g[N-1]]^\mathsf{T}\\ \bm{\alpha} &=[\alpha_\BM \ \alpha_\R]^\mathsf{T}\\ \bm{B}_g &= [
(\tilde{\bm{S}}_\BM^g)^\mathsf{T}\bm{a}_\BS(\theta_\BM), \;\;  (\tilde{\bm{S}}_\R^g)^\mathsf{T}\bm{A}^\mathsf{T}(\bm{\Omega}^g)^\mathsf{T}\bm{a}_\RIS(\theta_\RM)]\\  
\tilde{\bm{S}}^g_\BM &= [\bm{s}_g[0] \;\; \cdots \;\;\e^{-j\kappa_{N-1} \tau_{\BM}}\bm{s}_g[N-1]]
\end{align*}
where $\bm{s}_g[n] = \bm{f}_g s_g[n]$, $\tilde{\bm{S}}_\R^g$ is defined as $\tilde{\bm{S}}_\BM^g$ but with $\tau_\R$ in place of $\tau_\BM$, $\bm{A} = \bm{a}_{{\text{\tiny{RIS}}}}(\phi_\text{\tiny{B,R}})\bm{a}^\mathsf{T}_{{\text{\tiny{BS}}}}(\theta_\text{\tiny{B,R}})$, and $\alpha_\R = \alpha_\BR \alpha_\RM$. Without loss of generality, we assume that $\sigma^2$ is already known (its estimate can be straightforwardly obtained as $\hat{\sigma}^2 = \sum_{g=1}^G \|\bm{y}_{g} - \sqrt{ P} \bm{B}_{g}\bm{\alpha}\|^2 / (NG)$ once the rest of parameters have been estimated), so leaving $\bm{\alpha}$ as the sole vector of unknown nuisance parameters. Following the ML criterion, the estimation problem can be thus formulated as
    \begin{equation}
    \hat{\bm{\Theta}}^\ML = \arg \min_{\bm{\Theta}} \left[ \min_{\bm{\alpha}} L(\bm{\Theta},\bm{\alpha}) \right]  
    \end{equation}
where 
\begin{equation}\label{original_likelihood}
    L(\bm{\Theta},\bm{\alpha}) = \sum_{g=1}^G \| \bm{y}_g -\sqrt{P}\bm{B}_g \bm{\alpha} \|^2
\end{equation}
represents the likelihood function. It is not difficult to show that the value of the complex vector $\bm{\alpha} \in \mathbb{C}^{2 \times 1}$ minimizing \eqref{original_likelihood} is given by
$
\hat{\bm{\alpha}}^\ML = \frac{1}{\sqrt{P}} \bm{B}^{-1} \sum_{g=1}^G \bm{B}_g^\mathsf{H}\bm{y}_g
$
where $\bm{B} = \sum_{g=1}^G \bm{B}_g^\mathsf{H}\bm{B}_g$. Substituting $\hat{\bm{\alpha}}^\ML$ back into the likelihood function \eqref{original_likelihood} leads to%
\begin{equation}
    L(\bm{\Theta}) = \sum_{g=1}^G \| \bm{y}_g -\sqrt{P}\bm{B}_g(\bm{\Theta}) \hat{\bm{\alpha}}^\ML(\bm{\Theta}) \|^2
\end{equation}
where we explicitly highlighted the remaining dependency on the sole desired parameter vector $\bm{\Theta}$. Accordingly, the final joint ML (JML) estimator of UE position and clock offset is 
\begin{equation}\label{eq::MLcost}
\hat{\bm{\Theta}}^\ML = \arg \min_{\bm{\Theta}}     L(\bm{\Theta}).
\end{equation}
Unfortunately, $\hat{\bm{\Theta}}^\ML$ cannot be effortlessly retrieved being $L(\bm{\Theta})$ a highly non-linear function with multiple potential local minima. A more practical solution consists in finding a good initial estimate of $\bm{\Theta}$ and use it to compute $\hat{\bm{\Theta}}^\ML$ by means of a low-complexity iterative optimization. {\ed The latter consists in adopting a numerical optimization approach such as the Nelder-Mead algorithm to iteratively optimize the JML cost function in \eqref{eq::MLcost} starting from a more accurate initial estimate $\hat{\bm{\Theta}}$. As well-known, the Nelder-Mead procedure does not require any derivative information, which makes it suitable for problems with non-smooth functions like \eqref{eq::MLcost}, and is recognized to be extremely fast to converge (in all our trials, the number of required iterations was always less than 30).} A direct way to obtain such initialization is to perform an exhaustive grid search over the 3D space of the unknown $\bm{p}$ and $\Delta$. To overcome the burden of a full-dimensional optimization, in the next section we present a relaxed ML estimator of the position and the clock offset, able to provide a good initialization for the iterative optimization of \eqref{eq::MLcost}, but at a considerably lower computational complexity.
\subsection{Proposed Reduced-Complexity Estimator}\label{sec::RML_section}
\subsubsection{Relaxed Maximum Likelihood Position Estimation} We start by stacking all the observations collected over the $G$ transmissions and by further manipulating the resulting model, obtaining the new expression
    \begin{equation}\label{relaxedmodel}
        \underbrace{\begin{bmatrix}
        \bm{y}_1 \\
        \vdots \\
        \bm{y}_G
        \end{bmatrix}}_{\bm{y} \in \mathbb{C}^{GN\times 1}} = \underbrace{\begin{bmatrix}
        \bm{\Phi}^1_{\text{\tiny B,U}}(\theta_{\text{\tiny B,U}}(\bm{p})) & \bm{\Phi}^1_{\text{\tiny R,U}}(\theta_{\text{\tiny R,U}}(\bm{p})) \\
        \vdots & \vdots \\
        \bm{\Phi}^G_{\text{\tiny B,U}}(\theta_{\text{\tiny B,U}}(\bm{p})) & \bm{\Phi}^G_{\text{\tiny R,U}}(\theta_{\text{\tiny R,U}}(\bm{p}))
        \end{bmatrix}}_{\bm{\Phi}(\theta_\BM(\bm{p}),\theta_\RM(\bm{p})) \eqdef \bm{\Phi}(\bm{p}) \in \mathbb{C}^{GN \times 2N}}
        \underbrace{\begin{bmatrix}
        \bm{e}_{\text{\tiny B,U}} \\
        \bm{e}_{\text{\tiny R}}
        \end{bmatrix}}_{\bm{e}\in \mathbb{C}^{2N \times 1}} + \begin{bmatrix}
        \bm{\nu}_1 \\
        \vdots \\
        \bm{\nu}_G
        \end{bmatrix}
    \end{equation}
where $\bm{\Phi}^g_{\text{\tiny B,U}}(\theta_{\text{\tiny B,U}}(\bm{p})) = \mathrm{diag}(\bm{a}^\mathsf{T}_{{\text{\tiny{BS}}}}(\theta_\text{\tiny{B,U}}(\bm{p}))\bm{S}^g)$, $\bm{\Phi}^g_{\text{\tiny R,U}}(\theta_{\text{\tiny R,U}}(\bm{p})) = \mathrm{diag}(\bm{a}^\mathsf{T}_{{\text{\tiny{RIS}}}}(\theta_\text{\tiny{R,U}}(\bm{p}))\bm{\Omega}^g\bm{A}\bm{S}^g)$, $\bm{S}^g = [\bm{s}_g[0] \ \cdots \ \bm{s}_g[N-1]]$, $g=1,\ldots,G$, and
\begin{equation}\label{eq:vectore}
\bm{e}_{\text{\tiny B,U}} =  \sqrt{P}\alpha_{\text{\tiny B,U}}  \begin{bmatrix}
1 \\
\e^{-j \kappa_1 \tau_\text{\tiny{B,U}}} \\
\vdots \\
\e^{-j\kappa_{N-1}\tau_\text{\tiny{B,U}} }
\end{bmatrix}, \ \bm{e}_{\text{\tiny R}} = \sqrt{P}\alpha_{\text{\tiny R}}\begin{bmatrix}
1 \\
\e^{-j \kappa_1\tau_\text{\tiny{R}}} \\
\vdots \\
\e^{-j\kappa_{N-1}\tau_\text{\tiny{R}}}
\end{bmatrix}.
\end{equation}
We now observe that \eqref{relaxedmodel} allows us to decouple the dependencies on the delays and \ac{AoD}s in \eqref{eq::Model1}, with the new matrix $\bm{\Phi}$ that depends only on the desired $\bm{p}$ through the geometric relationships with the corresponding \ac{AoD}s $\theta_{\text{\tiny B,U}}(\bm{p})$ and $\theta_{\text{\tiny R,U}}(\bm{p})$. By relaxing the dependency of $\bm{e}$ on the delays $\tau_\BM$ and $\tau_\R$, and considering it as a generic unstructured $2N$-dimensional vector, a relaxed ML-based estimator (RML) of $\bm{p}$ can be derived as
\begin{equation}\label{eq::unstructML}
\hat{\bm{p}}^\RML = \arg \min_{\bm{p}} \left[ \min_{\bm{e}} \| \bm{y} - \bm{\Phi}(\bm{p})\bm{e} \|^2\right].   
\end{equation}
The
inner minimization of \eqref{eq::unstructML} can be more easily solved by decomposing it over the different $N$ subcarriers as
\begin{equation}\label{new_lowcost}
 \min_{\bm{e}} \| \bm{y} - \bm{\Phi}(\bm{p})\bm{e} \|^2 =  \min_{\bm{e}_0,\ldots,\bm{e}_{N-1}}  \sum_{n=0}^{N-1} \|\bm{y}_n - \bm{\Phi}_n \bm{e}_n\|^2
\end{equation}
where we exploited the peculiar structure of $\bm{\Phi}(\bm{p})$, which consists of blocks of $N \times N$ diagonal matrices, with $\bm{y}_n = [y_1[n] \ \cdots \ y_G[n]]^\mathsf{T}$, 
\begin{equation}
\bm{\Phi}_n(\bm{p}) = \begin{bmatrix}
    \phi^1_{\text{\tiny B,U},n}(\bm{p}) & \phi^1_{\text{\tiny R,U},n}(\bm{p}) \\
    \vdots & \vdots \\
    \phi^G_{\text{\tiny B,U},n}(\bm{p}) & \phi^G_{\text{\tiny R,U},n}(\bm{p})
\end{bmatrix} \in \mathbb{C}^{G \times 2}
\end{equation}
$\bm{e}_n = [\bm{e}_\BM[n] \ \bm{e}_\text{\tiny R}[n]]^\mathsf{T} \in \mathbb{C}^{2 \times 1}$, $\phi^g_{\text{\tiny B,U},n}(\bm{p}) = \bm{a}^\mathsf{T}_{{\text{\tiny{BS}}}}(\bm{p}) \bm{s}_g[n]$ and $\phi^g_{\text{\tiny R,U},n}(\bm{p}) = \bm{a}^\mathsf{T}_{{\text{\tiny{RIS}}}}(\bm{p})\bm{\Omega}^g\bm{A}\bm{s}_g[n]$, for $n=0,\ldots,N-1$, $g=1,\ldots,G$. Each unknown vector $\bm{e}_n$ minimizing \eqref{new_lowcost} can be separately obtained as
\begin{equation}\label{eq::estvecte}
\hat{\bm{e}}^\RML_n(\bm{p}) = (\bm{\Phi}_n^\mathsf{H}(\bm{p})\bm{\Phi}_n(\bm{p}))^{-1} \bm{\Phi}_n^\mathsf{H}(\bm{p}) \bm{y}_n
\end{equation}
that is, each $\bm{e}_n$ is estimated by pseudo-inverting the corresponding matrix $\bm{\Phi}_n(\bm{p})$. 
The inverse in \eqref{eq::estvecte} can be  computed in closed-form 
\begin{equation}
    (\bm{\Phi}_n^\mathsf{H}\bm{\Phi}_n)^{-1} = \frac{1}{u_n z_n - v_n w_n}\begin{bmatrix}
        z_n & -v_n \\
        -w_n & u_n
    \end{bmatrix}
\end{equation}
where $u_n = \sum_{g=1}^G \left|\phi^g_{\text{\tiny B,U},n} \right|^2$,  $ v_n = \sum_{g=1}^G (\phi^g_{\text{\tiny B,U},n})^*\phi^g_{\text{\tiny R,U},n}$, $w_n  = \sum_{g=1}^G (\phi^g_{\text{\tiny R,U},n})^*\phi^g_{\text{\tiny B,U},n}$, and $ z_n = \sum_{g=1}^G \left|\phi^g_{\text{\tiny R,U},n}\right|^2$, 
and we omitted the dependency on $\bm{p}$ for brevity.
Accordingly, the RML estimator can be more conveniently obtained as
\begin{equation}\label{eq::final_RMLest}
\hat{\bm{p}}^\RML = \arg \min_{\bm{p}} \sum_{n=0}^{N-1}\| \bm{y}_n - \bm{l}_n(\bm{p}) \|^2 
\end{equation}
with the elements of the vector $\bm{l}_n(\bm{p})$ given by
\begin{align}
    \!\!\!l_g[n](\bm{p}) = &\frac{1}{u_n z_n - v_n w_n}\Big[(\phi^g_{\text{\tiny B,U},n} z_n - \phi^g_{\text{\tiny R,U},n} w_n)\sum_{\ell=1}^G (\phi^\ell_{\text{\tiny B,U},n})^* y_{\ell}[n] \nonumber \\
    &+ (\phi^g_{\text{\tiny R,U},n} u_n - \phi^g_{\text{\tiny B,U},n} v_n)\sum_{\ell=1}^G (\phi^\ell_{\text{\tiny R,U},n})^* y_{\ell}[n]\Big].
\end{align}
{\ed A 2D grid search is then performed on the RML cost function provided in \eqref{eq::final_RMLest} to obtain the initial UE position estimate $\hat{\bm{p}}^\text{\tiny RML}$, which will be used together with the clock offset estimate obtained in the next section as initial point to iteratively optimize the 3D plain JML cost function given in \eqref{eq::MLcost}.}

\subsubsection{FFT-based Clock Offset Estimation} As a byproduct of the above estimation of $\bm{p}$, it is possible to derive an efficient estimator of the unknown delays $\tau_\BM$ and $\tau_\R$, which in turn will be used to retrieve a closed-form estimate of the sought $\Delta$. Specifically, we first plug  $\hat{\bm{p}}^\RML$ back in \eqref{eq::estvecte} to obtain an estimate of the vectors $\bm{e}_n$ $n=0,\ldots,N-1$. The elements of the estimated vectors $\hat{\bm{e}}^\RML_n$ can be then merged according to \eqref{eq:vectore} to obtain an estimate of the two vectors $\hat{\bm{e}}_\BM(\hat{\bm{p}}^\RML)$ and $\hat{\bm{e}}_\R(\hat{\bm{p}}^\RML)$, respectively. The key observation consists in the fact that the elements of both $\hat{\bm{e}}_\BM(\hat{\bm{p}}^\RML)$ and $\hat{\bm{e}}_\R(\hat{\bm{p}}^\RML)$ can be interpreted as discrete samples of  complex exponentials having normalized frequencies $\nu_\BM = -\frac{\tau_\BM}{NT_s}$ and $\nu_\R =-\frac{\tau_\R}{NT_s}$, respectively. This allows to estimate the delays $\tau_\BM$ and $\tau_\R$ by searching for the dominant peaks in the FFT of the corresponding vectors $\hat{\bm{e}}_\BM(\hat{\bm{p}}^\RML)$ and $\hat{\bm{e}}_\R(\hat{\bm{p}}^\RML)$. By defining $\bm{f}_h(\hat{\bm{p}}^\RML) = \mathrm{FFT}(\hat{\bm{e}}_h(\hat{\bm{p}}^\RML))$ as the FFT of the vector $\hat{\bm{e}}_h(\hat{\bm{p}}^\RML)$ (with either $h = B,U$ or $h = R$) computed on $N_{\text{F}}$ points, we first seek for the index corresponding to the maximum element in $\bm{f}_h(\hat{\bm{p}}^\RML)$
\begin{equation}\label{eq::FFTmax}
\hat{k}_h(\hat{\bm{p}}^\RML) = \arg \max_{k} \left[|f_h(\hat{\bm{p}}^\RML)[k]|: 0 \leq k \leq N_F-1\right]
\end{equation}
with $|f_h(\hat{\bm{p}}^\RML)[k]|$ denoting the absolute value of the $k$-th element of $\bm{f}_h(\hat{\bm{p}}^\RML)$. Since the first $N_F/2 +1$ elements correspond to positive values of the normalized frequency $\nu_o \in [0,1/2]$, while the remaining $N_F/2 - 1$ are associated to the negative part of the spectrum, i.e., $\nu_h \in (-1/2,0)$, the estimate of the delays can be obtained by mapping the corresponding $\hat{k}_h(\hat{\bm{p}}^\RML)$ as
\begin{equation}\label{eq::map_index}
\hat{\tau}^\FFT_h = \begin{cases}
-\frac{\hat{k}_h}{N_F}NT_S &\mbox{if } 0 \leq \hat{k}_h \leq N_F/2 \\
(1/2 - \frac{\hat{k}_h}{N_F})NT_S &\mbox{if } N_F/2+1 \leq \hat{k}_h \leq N_F-1
\end{cases}
\end{equation}
where we omitted the dependency on $\hat{\bm{p}}^\RML$ for conciseness. Once the two delays have been estimated, the sought clock offset $\Delta$ can be obtained in closed-form as
\begin{equation}\label{eq::Delta_RML}
\hat{\Delta}^\FFT= \frac{1}{2} \Big[\hat{\tau}^\FFT_\BM - \|\hat{\bm{p}}^\RML\|/c + \hat{\tau}^\FFT_\R - (\|\bm{r}\| + \| \bm{r}-\hat{\bm{p}}^\RML \|)/c \Big].    
\end{equation}
The obtained estimate $\hat{\bm{\theta}}^\RML = [\hat{\bm{p}}^\RML \ \hat{\Delta}^\FFT]^\mathsf{T}$ is then used to initialize an iterative optimization procedure (e.g., Nelder-Mead) to efficiently solve the JML estimation problem in \eqref{eq::MLcost}. The main steps of the proposed reduced-complexity estimation algorithm are summarized in Algorithm \ref{alg_estimator}.

{\ed It is worth noting that also the proposed joint localization and estimation algorithm can be  extended to the 3D case, in which also elevation angles are considered. In fact, the properties used to obtain the relaxation of the ML cost function and to estimate the delays via FFT are fulfilled not only by ULAs but also by uniform rectangular arrays (URAs). The final position estimation in the RML approach would be then performed on a 3D grid instead of a 2D one. The computational complexity of the procedure, of course, would be higher as in any higher-dimensional problem, but no additional theoretical issues arise.}

\begin{algorithm}[t]
	\caption{Low-Complexity Joint Localization and Synchronization Algorithm}
	\label{alg_estimator}
	\begin{algorithmic}[1]
	    \State \textbf{Input:} Received signals $\left\{y^g[n]\right\}_{\forall n,g}$, optimized BS-RIS precoders $\big\{ \sqrt{\varrho_g^{\star}}  \FFbs_{:,i}, \FFris_{:,j} \big\}_{\forall i,j}$.
	    \State \textbf{Output:} UE position  $\hat{\bm{p}}^\ML$ and clock offset $\hat{\Delta}^\ML$.
	    \begin{enumerate}[label=(\alph*)]
	        \item Perform a coarse 2D search to obtain an initial estimate $\hat{\bm{p}}^\RML$ via RML in \eqref{eq::final_RMLest}.
	        \item Use $\hat{\bm{p}}^\RML$ to reconstruct the two vectors $\hat{\bm{e}}_\BM(\hat{\bm{p}}^\RML)$ and $\hat{\bm{e}}_\R(\hat{\bm{p}}^\RML)$ based on \eqref{eq::estvecte} and \eqref{eq:vectore}.
	        \item Search for the dominant peaks in the FFT-transformed vectors ($\bm{f}_\BM(\hat{\bm{p}}^\RML),\bm{f}_\R(\hat{\bm{p}}^\RML))$ and compute the corresponding delays estimates $(\hat{\tau}^\FFT_\BM,\hat{\tau}^\FFT_\R$).
	        \item Compute the initial estimate $\hat{\Delta}^\FFT$ using \eqref{eq::Delta_RML}.
	        \item Use $\hat{\bm{\theta}}^\RML = [\hat{\bm{p}}^\RML \ \hat{\Delta}^\FFT]^\mathsf{T}$ as initialization to iteratively solve the JML in \eqref{eq::MLcost} and obtain the final estimates $\hat{\bm{p}}^\ML$ and $\hat{\Delta}^\ML$.
	    \end{enumerate}
	\end{algorithmic}
	\normalsize
\end{algorithm}

{\ed 
\subsection{Complexity Analysis}
In this section, we analyze the computational complexity of the joint localization and synchronization algorithm proposed in Sec.~\ref{sec::RML_section}, also in comparison to the plain 3D JML estimator derived in Sec.~\ref{sec:JML}. Asymptotically speaking, we observe that the complexity in performing the 3D optimization required by the plain JML estimator in \eqref{eq::MLcost} is on the order of $O(Q^3GN N_\text{\tiny E})$, where $Q$ denotes the number of evaluation points per dimension (either $p_x$ coordinate, $p_y$ coordinate of the UE position, or clock offset $\Delta$), assumed to be the same for all the three dimensions for the sake of exposition, and $N_\text{\tiny E} = N_\text{\tiny BS}+N_\text{\tiny BS}N_\text{\tiny RIS} + N^2_\text{\tiny RIS}$ a term related to the number of elements at both BS and RIS. On the other hand, by analyzing the different steps involved in the proposed joint localization and synchronization algorithm (Algorithm \ref{alg_estimator}), it emerges that the overall  complexity is given by the sum of three terms 
\begin{equation}\label{eq::complexity}
O(Q^2GN N_\text{\tiny E}) + O(N_F \log N_F) + O(N_\text{\tiny I}GN N_\text{\tiny E}).
\end{equation}
The first term $O(Q^2GN N_\text{\tiny E})$ corresponds to the two-dimensional optimization required to obtain the initial UE position estimate $\hat{\bm{p}}^\text{\tiny RML}$ according to \eqref{eq::final_RMLest}. The second term $O(N_F \log N_F)$ denotes the complexity required to compute the FFT of the two vectors $\hat{\bm{e}}_\BM(\hat{\bm{p}}^\RML)$ and $\hat{\bm{e}}_\R(\hat{\bm{p}}^\RML)$ and to search for the dominant peaks yielding the clock offset estimate $\hat{\Delta}^\text{\tiny FFT}$ in \eqref{eq::Delta_RML}. The third and last term represents instead the complexity required by the Nelder-Mead procedure to iteratively optimize the JML cost function starting from the initial point $\hat{\bm{\Theta}}^\text{\tiny RML}$, with $N_\text{\tiny I}$ denoting the number of total iterations. This contribution is practically negligible compared to the first term in \eqref{eq::complexity} being the Nelder-Mead procedure extremely fast and typically converging in a few iterations (in all our trials, the number of iterations was always $N_\text{\tiny I} < 30$).

Considering that the minimum number of points required to compute the FFT is equal to the length of the involved vectors, i.e. $N_F \geq N$ (in our simulations we set $N_F = 512$) and that the FFT step is performed just once, it is apparent that the overall complexity is practically dominated by the first term $O(Q^2GN N_\text{\tiny E})$, namely by the two-dimensional search required  to obtain an initial estimate of $\bm{p}$. In this respect, the proposed joint localization and synchronization algorithm is able to reduce the complexity required by the plain JML estimator, which is cubic in $Q$, to a quadratic cost in $Q$ plus two very low-cost subsequent estimation steps (FFT and iterative optimization).
}
\section{Simulation Analysis and  Results}\label{sec_sim_res}
In this section, we conduct a numerical analysis to assess the performance of the low-complexity localization and synchronization algorithm presented in Sec.~\ref{sec_estimator}, when fed with the robust strategy for joint design of BS precoding and RIS phase profiles proposed in Sec.~\ref{sec_robust_design}.
The performance of the proposed approach is compared with the theoretical lower bounds derived in Sec.~\ref{sec_fim}, as well as against other state-of-the-art strategies for the design of BS and RIS precoding matrices, under different values of the main system parameters. We consider the 
root mean squared error (RMSE) as performance metric, estimated on  1000 independent Monte Carlo trials.

\subsection{Simulation Setup}\label{sec_sim_setup}
The analyzed scenario consists of a single BS placed at known position $\bm{q} = [0 \ 0]^\mathsf{T}$ m, a RIS placed at $\bm{r} = [12 \ 7]^\mathsf{T}$ m, and a UE with unknown location $\bm{p} = [5 \ 5]^\mathsf{T}$ m. The numerical evaluations are conducted assuming the transmission of $G = (2 \lb + 1) 2 \lr$ OFDM pilot signals in DL over a typical mmWave carrier frequency $f_c = 28$ GHz with bandwidth $B = 100$ MHz, along $N$ subcarriers equally spaced in frequency by $\Delta f = 240$ kHz. The BS is equipped with $N_\BS = 16$ antennas, while the RIS has $N_\R = 32$ elements. The channel amplitudes are generated according to the common path loss model in free space, i.e., $\rho_\BR = \lambda_c/(4\pi\|\bm{r}\|)$,  $\rho_\BM = \lambda_c/(4\pi\|\bm{p}\|)$, and $\rho_\RM = \lambda_c/(4\pi\|\bm{p}-\bm{r}\|)$, respectively, while the phases   $\varphi_\BM$ and $\varphi_\R$ are assumed to be uniformly distributed over $[-\pi,\pi]$. We set the clock offset to $\Delta =\frac{1}{8} \cdot NT_s$, while the transmitted power $P$ is varied in order to obtain different ranges of the received SNR over the \ac{LoS} path, defined as $\text{SNR} = 10 \log_{10} (P\rho^2_\BM/(N_0B))$, where $N_0$ is the noise power spectral density and $\sigma^2 = N_0B$. In the following, we consider an uncertainty region $\ppreg$ for the UE position having an extent of 3 m along both $x$ and $y$ directions. For this setup, using the typical $3 \, \rm{dB}$ beamwidth angular spacing of an ULA (about $1.8/N_\BS)$ leads to $\lb = 7$ and $\lr = 6$, which in turn correspond to $G = 180$ OFDM symbols. In the Appendices, we report additional performance analyses also for the case in which the uncertainty is increased to 5 m. \rev{The number of discrete UE positions $\left\{\mathbf{p}_m\right\}_{m=1}^M$ used to solve \eqref{eq_peb_codebook} is set to $M=9$. For more details on the setting of $M$, we refer the reader to Appendix~\ref{sec::choiceM}.}

\subsection{Benchmark Precoding Schemes}\label{subs:benchmarks}
To benchmark the proposed joint BS-RIS signal design algorithm proposed in Algorithm~\ref{alg_codebook}, we consider the following state-of-the-art schemes.
\subsubsection*{Directional Codebook (Uniform)}
    This  scheme considers only directional beams in the codebook and uses uniform (equal) power allocation among them, i.e, the BS does not implement the optimal power allocation provided in \eqref{eq_peb_codebook}. To guarantee a fair comparison, we double the angular sampling rate of the uncertainty region of the UE, so obtaining the same number of transmissions $G$ used by proposed codebooks in \eqref{eq_prop_codebook}. This leads to a set of \ac{AoD}s from the BS to the UE $\{ \thetabmtilde^{(i)} \}_{i=1}^{2\lb}$ and of \ac{AoD}s from the RIS to the UE $\{ \thetarmtilde^{(i)} \}_{i=1}^{2\lr}$. Accordingly, we consider the following directional codebooks for the BS and RIS transmissions:
    \begin{subequations}\label{eq_directional_codebook_uniform}
        \begin{align} \label{eq_fbs_dir}
    \FFbs &= [ \aabs(\thetabr) ~ \FFbssumdouble  ]^\conj \in \complexset{\nbs}{(2 \lb + 1)} ~, \\ \label{eq_fris_dir}
    \FFris &=  [ \aarisw(\thetarmtilde^{(1)}) \, \ldots \, \aarisw(\thetarmtilde^{(2\lr)}) ]^\conj  \in \complexset{\nr}{2     \lr } ~,
        \end{align}
    \end{subequations}
    where $\FFbssumdouble \eqdef [ \aabs(\thetabmtilde^{(1)}) \, \ldots \, \aabs(\thetabmtilde^{(2\lb)}) ].$

\subsubsection*{Directional Codebook (Optimized)}

This scheme uses the same directional codebook in \eqref{eq_directional_codebook_uniform} and performs the optimal power allocation for the BS beams in \eqref{eq_fbs_dir} according to \eqref{eq_peb_codebook}.

\subsubsection*{DFT Codebook (Optimized)} 

Let $\GG^N \in \complexset{N}{N}$ denote a DFT matrix. In addition, denote by $\thetabmc$ and $\thetarmc$, respectively, the \ac{AoD} from BS to UE and the \ac{AoD} from RIS to UE, corresponding to the center of the two \ac{AoD}s uncertainty regions computed from $\ppreg$. This scheme selects the columns from the corresponding DFT matrices that are closest to the center \ac{AoD}s as:
        \begin{subequations}\label{eq_dft_codebook_}
        \begin{align} 
    \FFbsdft &= \GG^{\nbs}_{:, (\ellbm-\lb+1):(\ellbm+\lb)} \in \complexset{\nbs}{2 \lb} ~, \\ 
     \FFrisdft &= \GG^{\nr}_{:, (\ellrm-\lr+1):(\ellrm+\lr)} \in \complexset{\nr}{2 \lr} ~, 
        \end{align}
    \end{subequations}
    where $ \ellbm \eqdef \arg \min_{\ell} \Vert \GG^{\nbs}_{:, \ell} - \aabs^\conj(\thetabmc) \Vert$ and $ \ellrm \eqdef \arg \min_{\ell} \Vert \GG^{\nr}_{:, \ell} - \aarisw^\conj(\thetarmc) \Vert$.
    Based on \eqref{eq_dft_codebook_}, the DFT codebooks \cite{mmWave_codebook_2017,egc_mimo_2003,tasos_precoding2020} for the BS and RIS transmission can be expressed as follows:
        \begin{subequations}\label{eq_dft_codebook}
        \begin{align} \label{eq_fbs_dir_dft}
    \FFbs &= \left[ \aabs^\conj(\thetabr) ~ \FFbsdft  \right] \in \complexset{\nbs}{(2 \lb + 1)} ~, \\ \label{eq_fris_dir_dft}
    \FFris &=  \FFrisdft  \in \complexset{\nr}{2     \lr } ~.
        \end{align}
    \end{subequations}
   Also for this scheme, we perform power allocation for the BS-RIS beams in \eqref{eq_dft_codebook} using \eqref{eq_peb_codebook}.

\begin{figure}
 \includegraphics[width=0.5\textwidth]{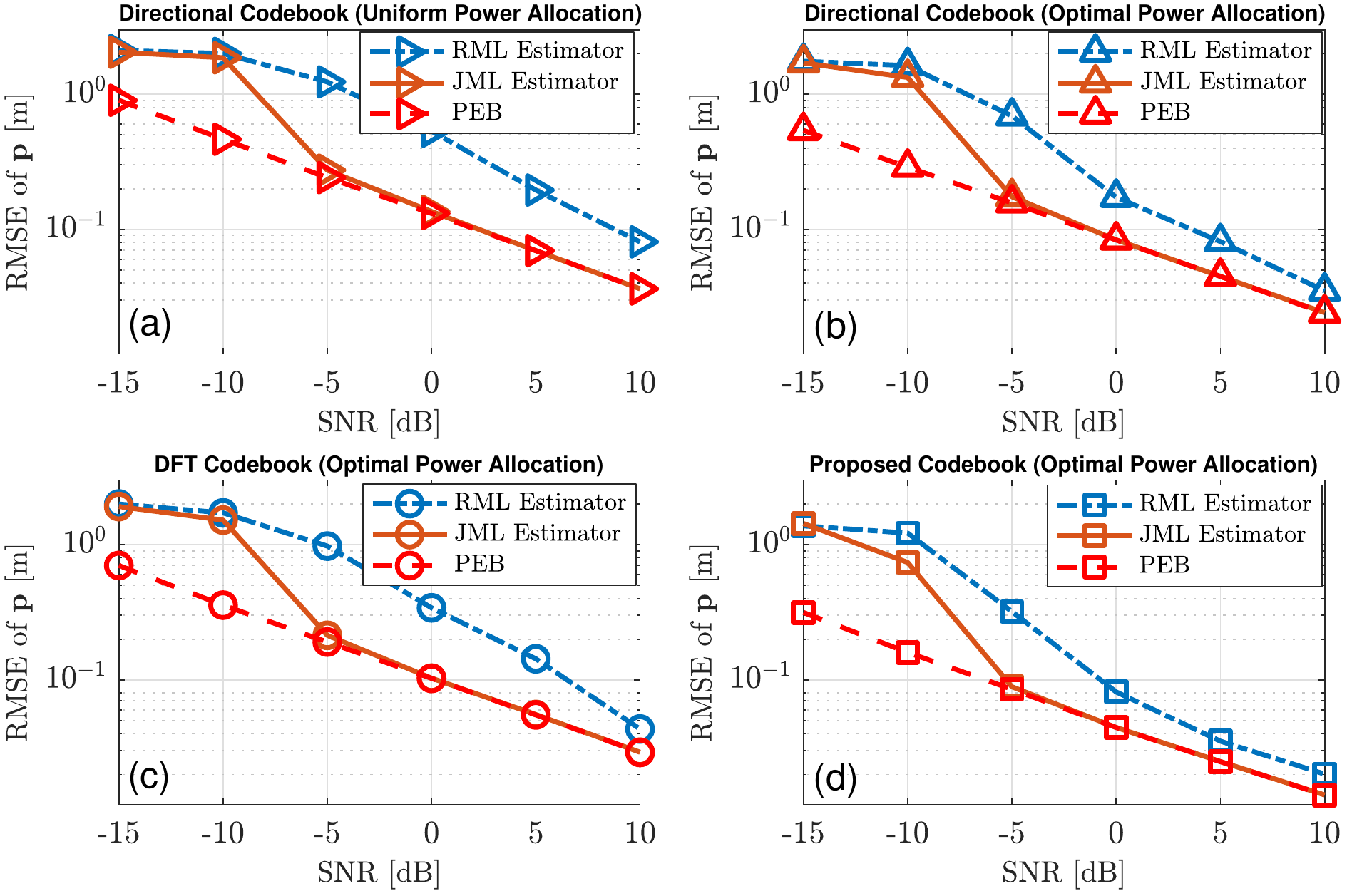}
 	\caption{RMSEs on the estimation of $\bm{p}$ as a function of the SNR for the directional codebook, DFT codebook, and proposed codebook.}
\label{fig:RMSE_pos_3m}
 \end{figure}

\begin{figure}
 \includegraphics[width=0.5\textwidth]{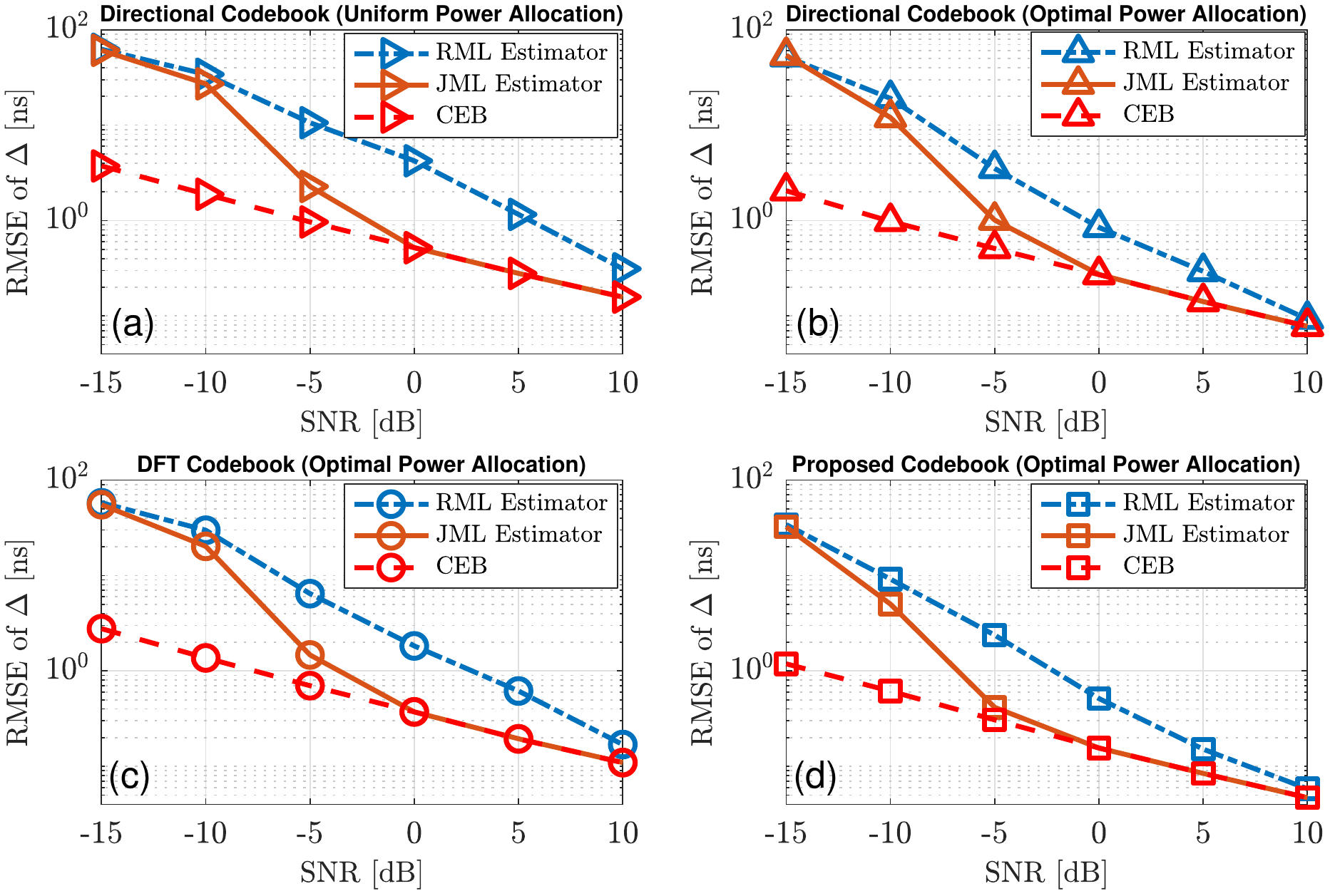}
 	\caption{RMSE on the estimation of $\Delta$ as a function of the SNR for the directional codebook, DFT codebook, and proposed codebook.}
\label{fig:RMSE_clock_3m}
 \end{figure}

\subsection{Results and Discussion}
\subsubsection{Comparison Between Uniform and Proposed Beam Power Allocation} We start the numerical analysis by assessing the validity of the beam power allocation strategy proposed in Sec.~\ref{sec_robust_design}. In this respect, we select as a precoding scheme the directional codebook given by \eqref{eq_directional_codebook_uniform} and perform a direct comparison between the case in which all the $G$ beams share the total transmitted power uniformly, i.e., each beam is transmitted with a  power equal to $P/G$, with the case in which at the $g$-th beam is allocated a fraction of the total power given by the corresponding $g$-th element of the allocation vector $\varrhob^*$, the latter obtained as a solution of the optimal power allocation problem in \eqref{eq_peb_codebook}.
Figs.~\ref{fig:RMSE_pos_3m}a-\ref{fig:RMSE_pos_3m}b and Figs.~\ref{fig:RMSE_clock_3m}a-\ref{fig:RMSE_clock_3m}b show the RMSEs on the estimation of the UE position $\bm{p}$ and clock offset $\Delta$, respectively, as a function of the SNR, for the directional codebook with uniform and optimal power allocation, also in comparison to the theoretical lower bounds (PEB and CEB\footnote{The \ac{CEB} is given by $\big[ \Jeta^{-1} \big]_{7,7}$, where $\Jeta$ is the FIM in \eqref{eq_Jeta}.}) derived in Sec.~\ref{sec_fim}. The proposed low-complexity estimation algorithm is denoted as ``JML" and it is implemented as described in Algorithm \ref{alg_estimator}, using the power allocation strategy proposed in Algorithm \ref{alg_codebook}. For completeness, we also report the performance of the RML estimator which is used to obtain an initial estimate of the vector $\bm{\Theta}$. By comparing the RMSEs in the Fig.~\ref{fig:RMSE_pos_3m}a and Fig.~\ref{fig:RMSE_pos_3m}b (analogously Fig.~\ref{fig:RMSE_clock_3m}a and Fig.~\ref{fig:RMSE_clock_3m}b), it clearly emerges that the proposed power allocation strategy yields more accurate estimates of the UE parameters compared to the uniform power allocation, as also confirmed by the gap between the corresponding lower bounds. Such results demonstrate that adopting a simple uncontrolled (uniform) power allocation scheme at the BS side likely leads to a waste of energy towards directions that do not provide useful contributions for the estimation process. The inefficient use of the transmitted power becomes even more critical in an RIS-assisted localization scenario, being the \ac{NLoS} channel linking the BS and the UE through the RIS subject to a more severe path loss resulting from the product of two separated tandem channels (ref. \eqref{eq_hbr} and \eqref{eq_hru}). In light of these considerations, the subsequent comparisons  will be  conducted assuming  optimal power allocation.

\subsubsection{Comparison Between State-of-the-art and Proposed Joint BS-RIS Signal Design} The set of figures reported in Fig.~\ref{fig:RMSE_pos_3m} and Fig.~\ref{fig:RMSE_clock_3m} show a detailed performance comparison between the proposed joint BS-RIS precoding scheme and the state-of-the-art approaches listed in Sec.~\ref{subs:benchmarks}, when used within the proposed low-complexity localization and synchronization algorithm. On the one hand, the obtained results demonstrate the effectiveness of the proposed estimation approaches: despite its intrinsic suboptimality, the RML algorithm (dash-dot curves) provides satisfactory initial estimates of both $\bm{p}$ and $\Delta$ parameters for all the considered precoding schemes, with an accuracy that tend to increase with the SNR and with a complexity reduced to a 2D search in the estimation process. {\ed Accordingly, the RMSEs of the RML and JML estimators are close when the SNR is small because in that regime the initial estimates $\hat{\bm{\Theta}}^{\text{\tiny RML}} = [\hat{\bm{p}}^\text{\tiny RML} \ \hat{\Delta}^{\text{\tiny FFT}}]^\mathsf{T}$ provided by the RML estimator are quite inaccurate. As a result, the iterative optimization procedure gets trapped into local wrong minima and leads to solutions (in terms of position and clock offset estimates) for the JML that are very close to that of the RML estimator.} Remarkably, the RMSEs of the JML estimator (solid curves) immediately attain the corresponding lower bounds as soon as the initialization provided by the RML becomes sufficiently accurate, providing excellent localization and synchronization performance already at $\text{SNR} = -5$ dB for all the considered precoding schemes.
\rev{To better highlight the necessity of adopting the more accurate initialization $\hat{\bm{\Theta}}^{\text{\tiny RML}}$ obtained via the proposed RML estimator, we also evaluate the performance of the JML estimator initialized with a random realization of the vector $\bm{\Theta}$. More specifically, we conduct an additional simulation analysis to directly compare the two versions of the JML estimator, using for the randomly-initialized JML an initial point $\bm{\Theta}^{\text{\tiny RND}} = [\bm{p}^{\text{\tiny RND}} \ \Delta^\text{\tiny RND}]^\mathsf{T}$ obtained by selecting $\bm{p}^{\text{\tiny RND}}$ as a random position within the assumed uncertainty region $\mathcal{P}$ and by generating $\Delta^\text{\tiny RND} \sim \mathcal{U}(0,2\Delta)$. To make the comparison fair, we re-generated the value of $\bm{\Theta}^{\text{\tiny RND}}$ for each independent Monte Carlo trial. The  results in terms of RMSEs reported in Fig.~\ref{fig::JMLrandvsinit} demonstrate that the performance of the JML estimator significantly worsen when $\bm{\Theta}^{\text{\tiny RND}}$ is used as initial point. This behavior is due to the fact that the iterative minimization of a highly non-linear cost function such as that of the JML estimator  gets trapped into local erroneous minima when a random initialization is used, and in turn produces wrong position and clock offset estimates. This confirms the need to seek for a more accurate initialization as the one proposed in Sec.~VI-B, which allows the JML estimator to attain the  theoretical lower bounds.
\begin{figure}
		\centering
		\includegraphics[width=0.35\textwidth]{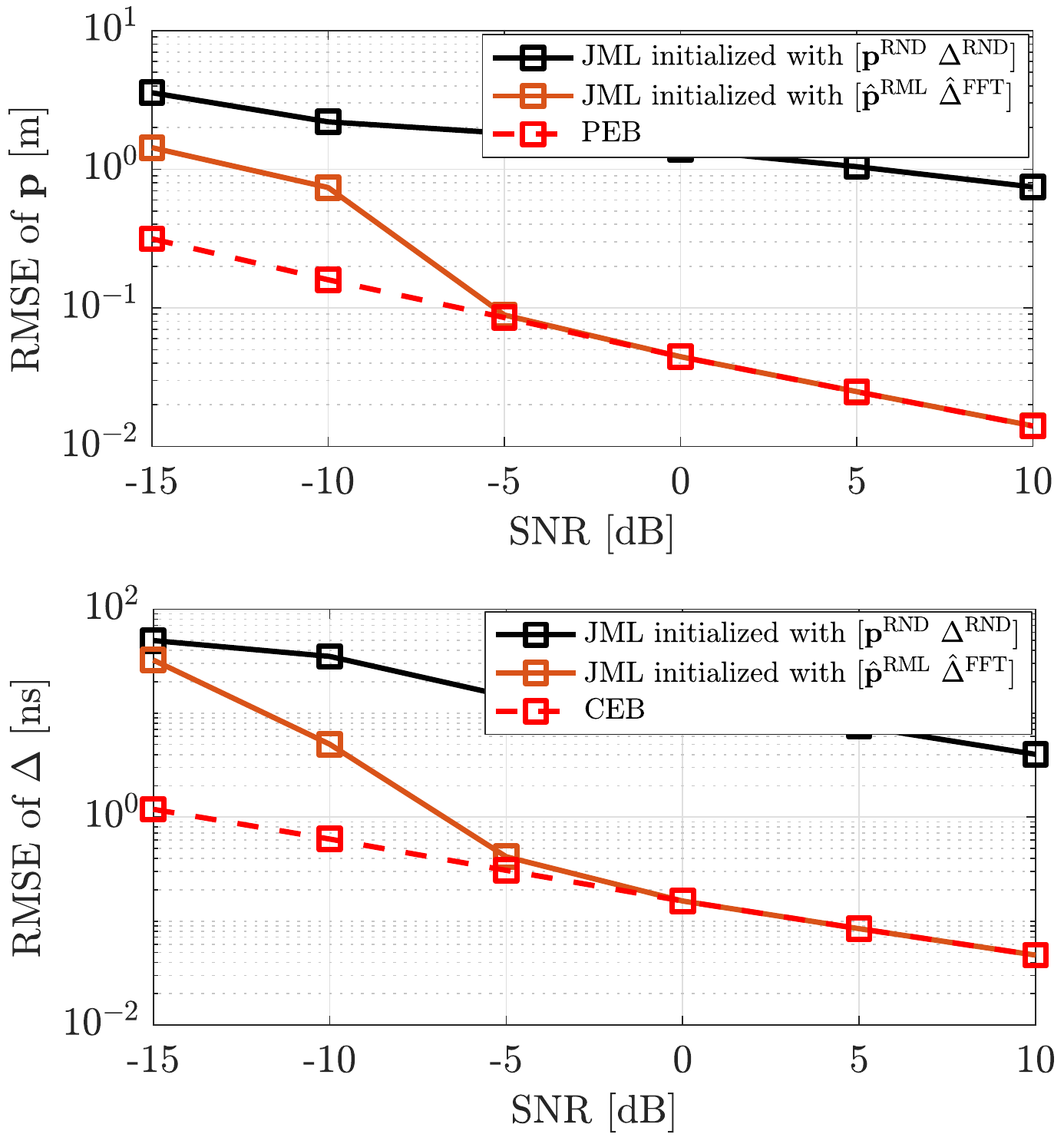}
		\caption{\rev{Comparison between the JML  initialized with the estimates $\hat{\bm{\Theta}}^{\text{\tiny RML}}$ provided by the proposed RML estimator and with  random values $\bm{\Theta}^{\text{\tiny RND}}$.}}\label{fig::JMLrandvsinit}
	\end{figure}
}

On the other hand, a direct comparison among the PEBs and CEBs in Figs.~\ref{fig:RMSE_pos_3m}-\ref{fig:RMSE_clock_3m} (dashed curves) reveals that the proposed robust joint BS-RIS precoding strategy offers the best localization and synchronization performance among all the considered schemes. To better highlight the advantages of the proposed approach, in Fig.~\ref{fig:comparison_RMSE_3m} we report an explicit comparison among the RMSEs on the estimation of $\bm{p}$ and $\Delta$ for the JML estimator fed with different precoding schemes.
\begin{figure}%
    \centering
    \subfloat[\centering RMSE on the estimation of $\bm{p}$.]{{\includegraphics[width=0.35\textwidth]{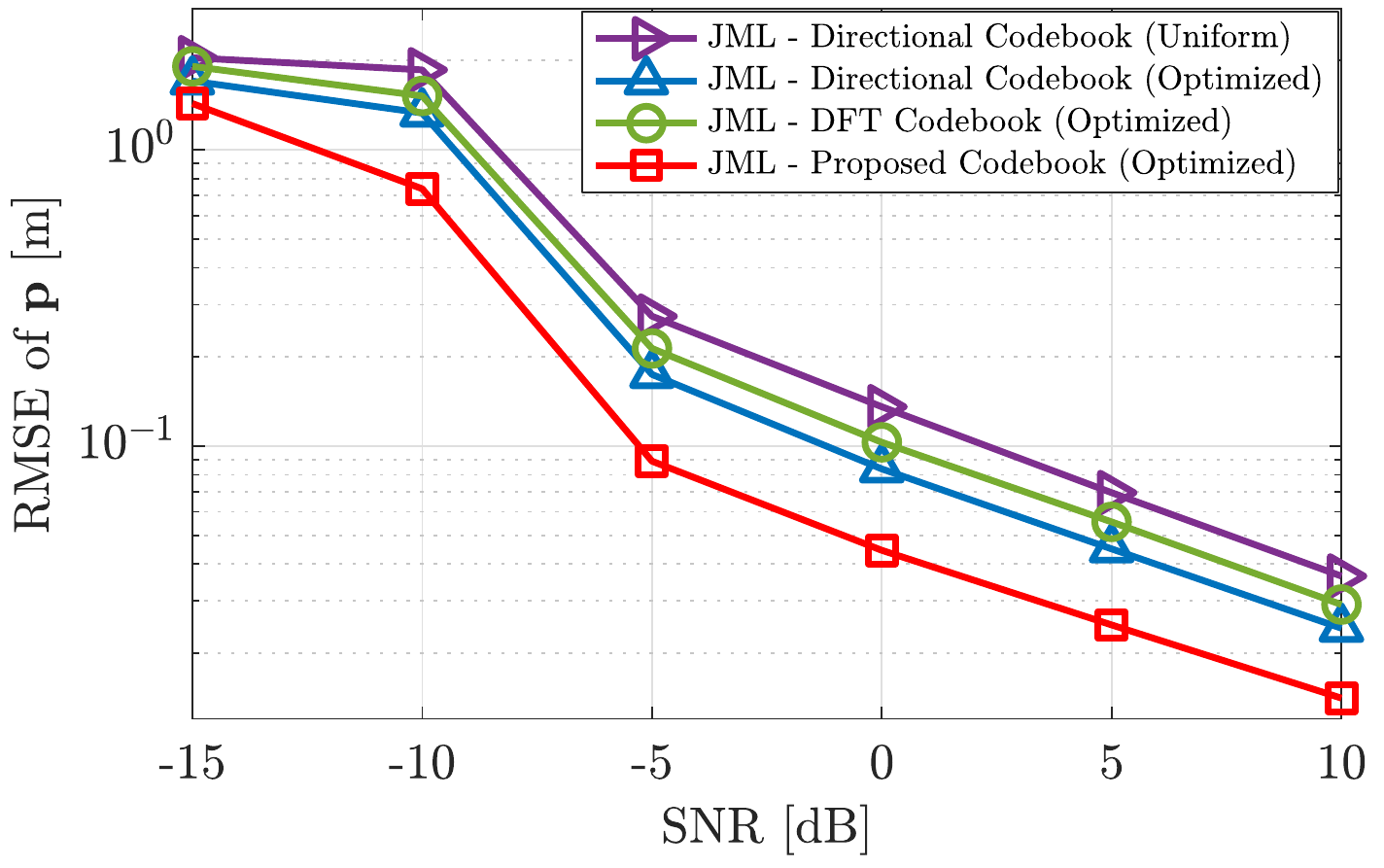} } \label{fig:comparison_pos_3m}}%
    \qquad
    \subfloat[\centering RMSE on the estimation of $\Delta$.]{{\includegraphics[width=0.35\textwidth]{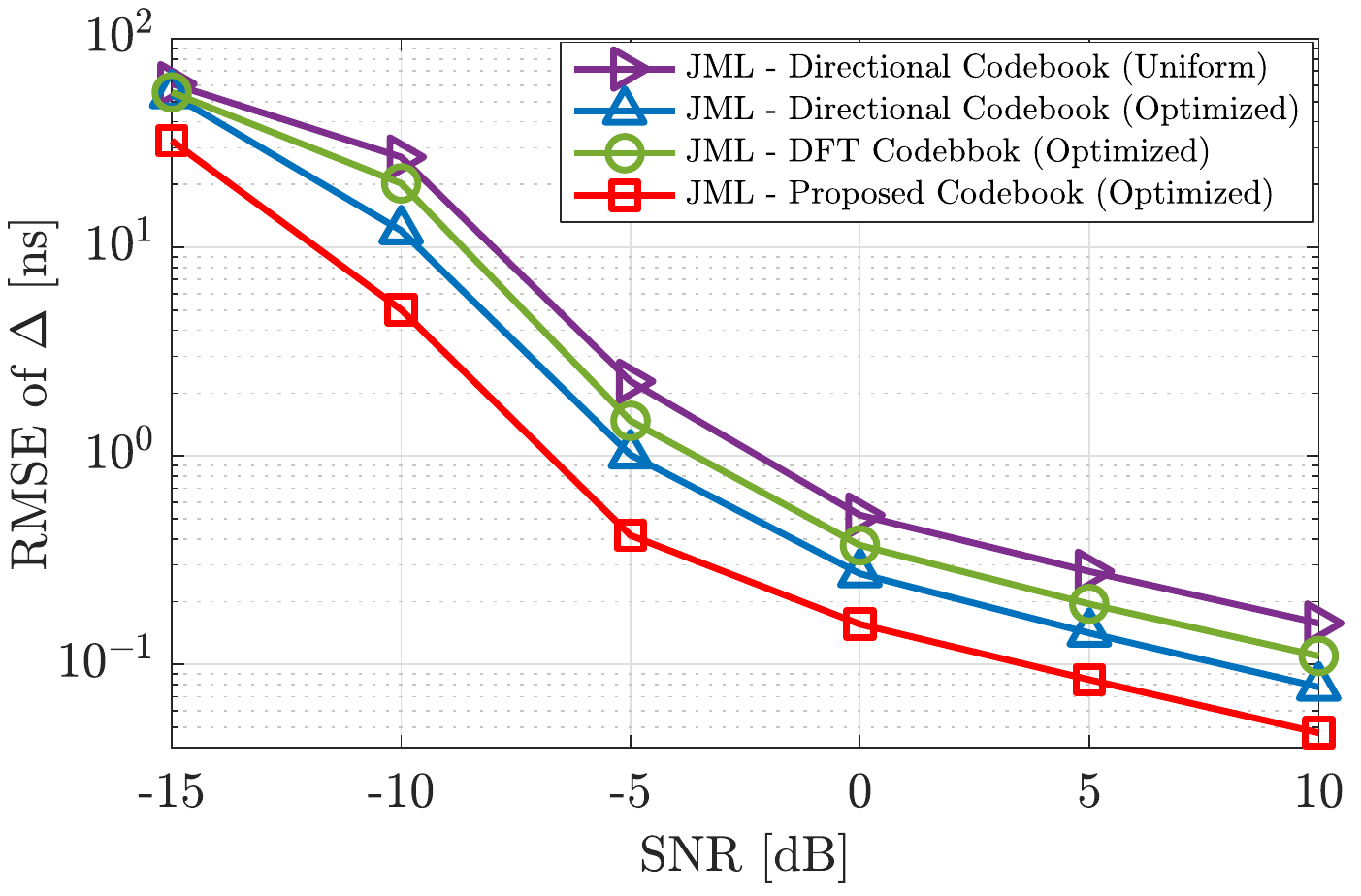} }\label{fig:comparison_clock_3m}}%
    \caption{Comparison between the RMSEs of (a) $\bm{p}$ and (b) $\Delta$ using the proposed JML estimator for different precoding schemes, as a function of the SNR.}%
    \label{fig:comparison_RMSE_3m}%
\end{figure}
As it can be noticed, the proposed robust joint BS-RIS precoding scheme significantly outperforms both the directional and DFT codebooks. Interestingly, the values assumed by the corresponding RMSEs  in Figs.~\ref{fig:comparison_pos_3m}-\ref{fig:comparison_clock_3m} (solid curves with $\square$ marker) demonstrate that the UE can be localized with an error lower than $10$ cm and, at the same time, synchronization can be recovered with a sub-nanosecond precision, for $\text{SNR} \geq -5$ dB. From this analysis, we can conclude that combining the proposed codebooks in \eqref{eq_fbs}-\eqref{eq_fris} with a power allocation strategy that aims at minimizing the worst-case PEB allows us to achieve a better coverage of the uncertainty region $\ppreg$, while properly taking into account the different  directional and derivative beams transmitted towards the UE and the RIS.

{\ed
\subsubsection{Robustness Analysis}  We now conduct an  analysis aimed at assessing the effective robustness of the proposed joint BS and RIS beamforming design strategy to different UE positions falling within the assumed uncertainty region $\mathcal{P}$. More specifically, we test the values assumed by the PEB when the UE  spans different locations around the nominal one $\mathbf{p}$, considering both the proposed robust design approach and its corresponding non-robust version, the latter obtained by simply shrinking the extent of the uncertainty region to a very small area of 0.1 m  around the nominal UE position. The results reported in Fig.~\ref{fig:PEB_robust}, obtained for $\text{SNR} = 0$ dB, show that the PEB exhibits quite similar values within the whole uncertainty region, confirming the robustness of the proposed joint active BS and passive RIS beamforming design strategy. Interestingly, the PEB keeps reasonable values even when the UE falls slightly outside the considered region $\mathcal{P}$. Conversely, the values assumed by the PEB in Fig.~\ref{fig:PEB_nonrobust} clearly indicate an evident position accuracy degradation for UE locations different from the nominal one, leading to errors that are almost three times those experienced in Fig.~\ref{fig:PEB_robust} with the proposed robust joint design strategy.
\begin{figure}%
    \centering
    \subfloat[\centering \ed PEB evaluation for the proposed robust joint BS-RIS design strategy.]{{\includegraphics[width=0.35\textwidth]{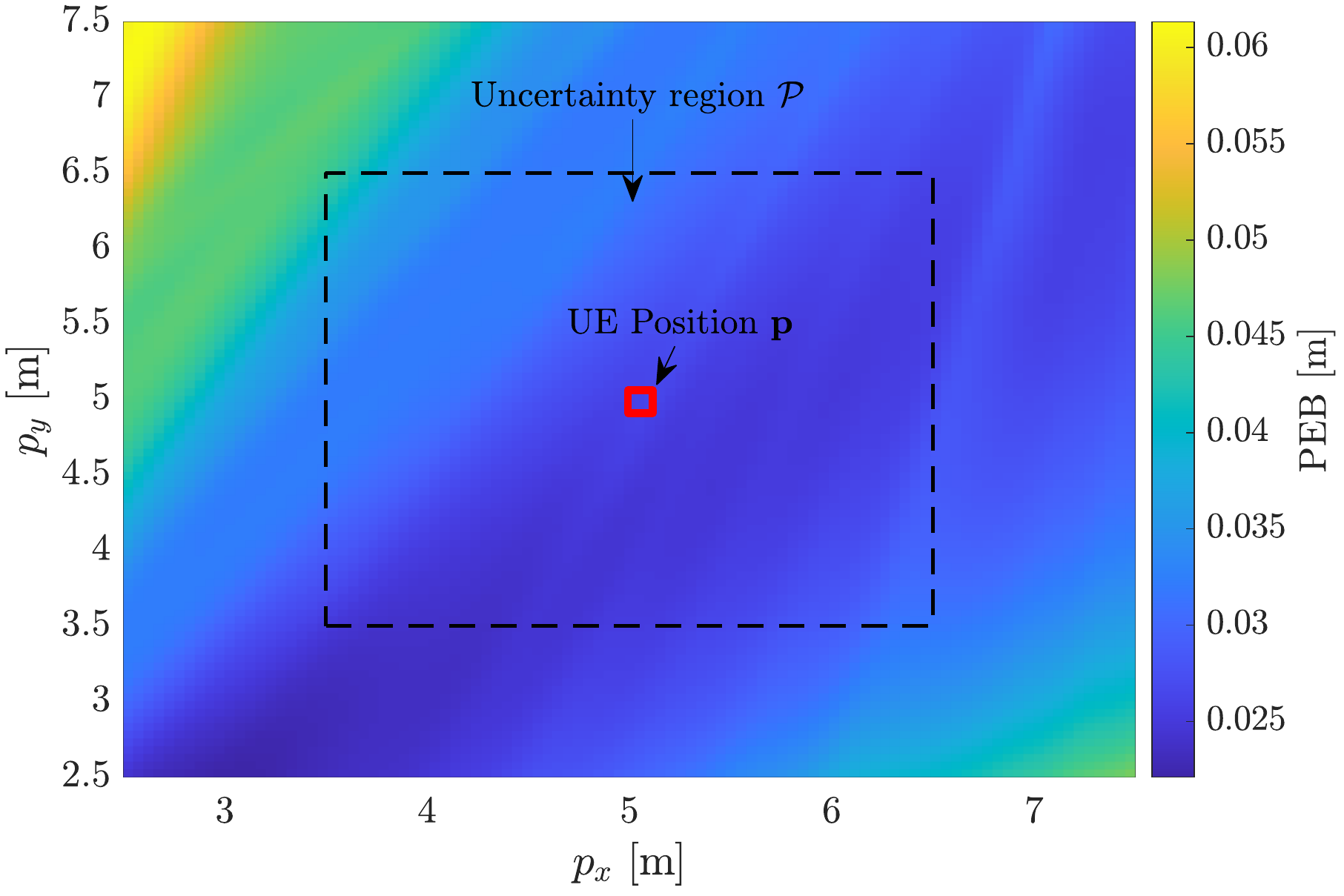} } \label{fig:PEB_robust}}%
    \quad
    \subfloat[\centering \ed PEB evaluation for  non-robust joint BS-RIS design strategy.]{{\includegraphics[width=0.35\textwidth]{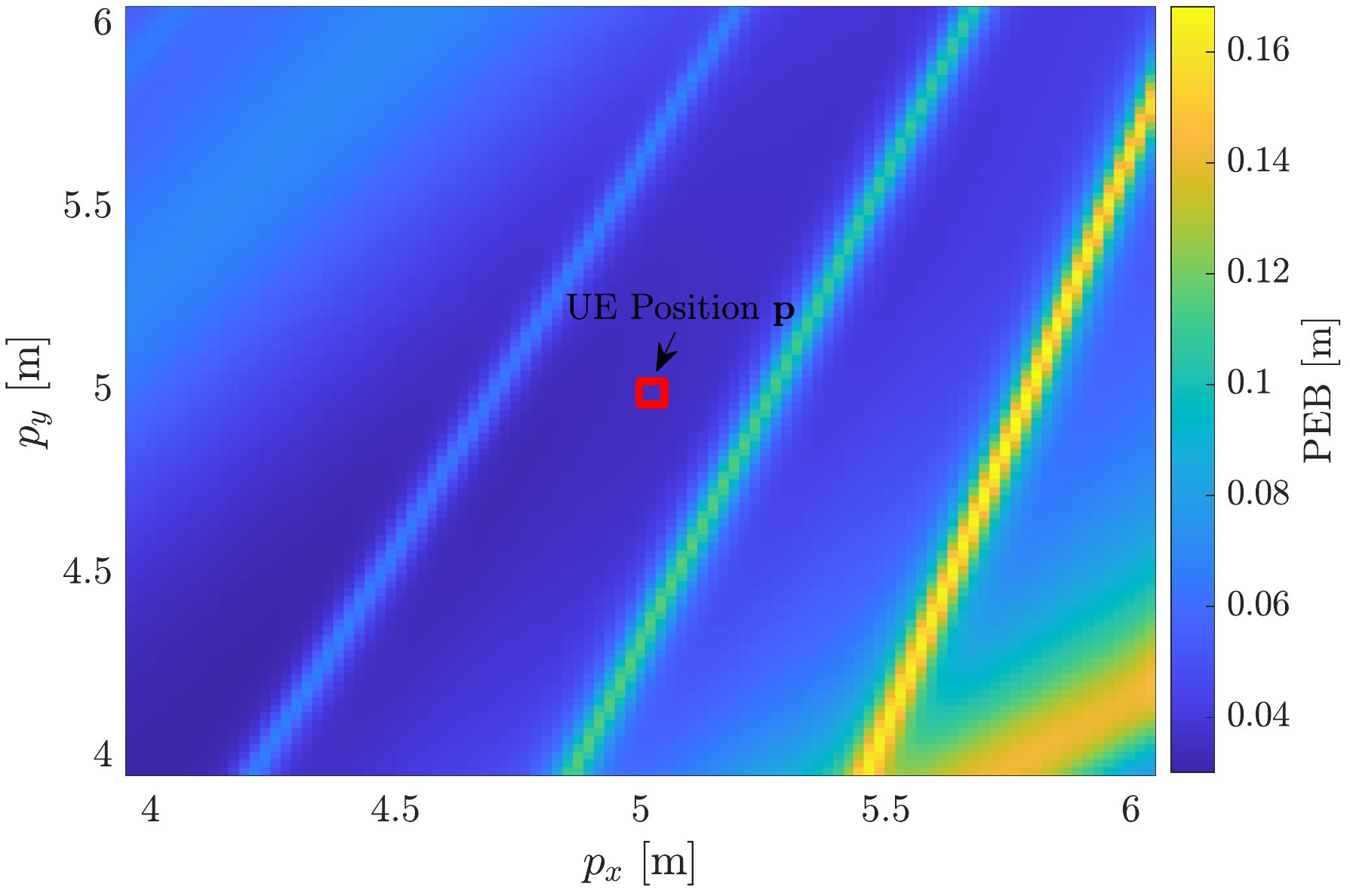} }\label{fig:PEB_nonrobust}}%
    \caption{\ed Comparison between the PEBs with robust and non-robust joint BS-RIS beamforming design strategies.}%
    \label{fig:comparison_robustness}%
\end{figure}
}

\subsubsection{Performance Assessment for Reduced Number of Transmitted Beams} \label{sec_reducedG} To corroborate the above results, we investigate the possibility to adopt an ad-hoc heuristic that allows to reduce the total number of transmitted beams $G$. The main idea originates from observing that, when the BS is transmitting a beam towards the UE, all the different configurations of the RIS phase profiles should not have a significant impact onto the ultimate localization and synchronization performance, being the BS-RIS path likely illuminated  with a negligible amount of transmitted power. In other words, when
$\FFbs_{:,i} = \bm{a}^*_\BS(\thetabm^{(i)})$ or $\FFbs_{:,i} = \aabsdt^\conj(\thetabm^{(i)})$, we propose to neglect the transmission of the $2\lr$ different RIS beams $\FFris_{:,j} = \aarisw^\conj(\thetarm^{(j)})$ and $\FFris_{:,j} = \aariswdttilde^\conj(\thetarm^{(j)})$, for $j = 1,\ldots, \lr$, and use for the corresponding pairs $\{ \FFbs_{:,i}, \FFris_{:,j} \}$ a single configuration of the RIS phase profile given by $\FFris_{:,j} = \aarisw(\thetarm^{c})$. {\ed Conversely, when the BS is transmitting the beam towards the RIS, that is, the precoding vector is set to $\FFbs_{:,i} = \aabs^*(\thetabr)$, we consider for $\FFris_{:,j}$ all the $2\lr$ possible configurations of the RIS phase profile. In doing so, the signal received by the UE in \eqref{eq_mgnbm} will be always observed for different RIS phase profiles, providing the necessary information to estimate the \ac{AoD} $\theta_{{\RM}}$. This procedure allows us to reduce the total number of transmitted beams to $G = 2\lb + 2\lr + 1$.}

To validate such an intuition, in Fig.~\ref{fig:reducedG} we compare the RMSEs on the estimation of $\bm{p}$ and the related PEBs as a function of the SNR, for both cases of full and reduced number of transmissions $G$.
\begin{figure}%
    \centering
    {\includegraphics[width=0.35\textwidth]{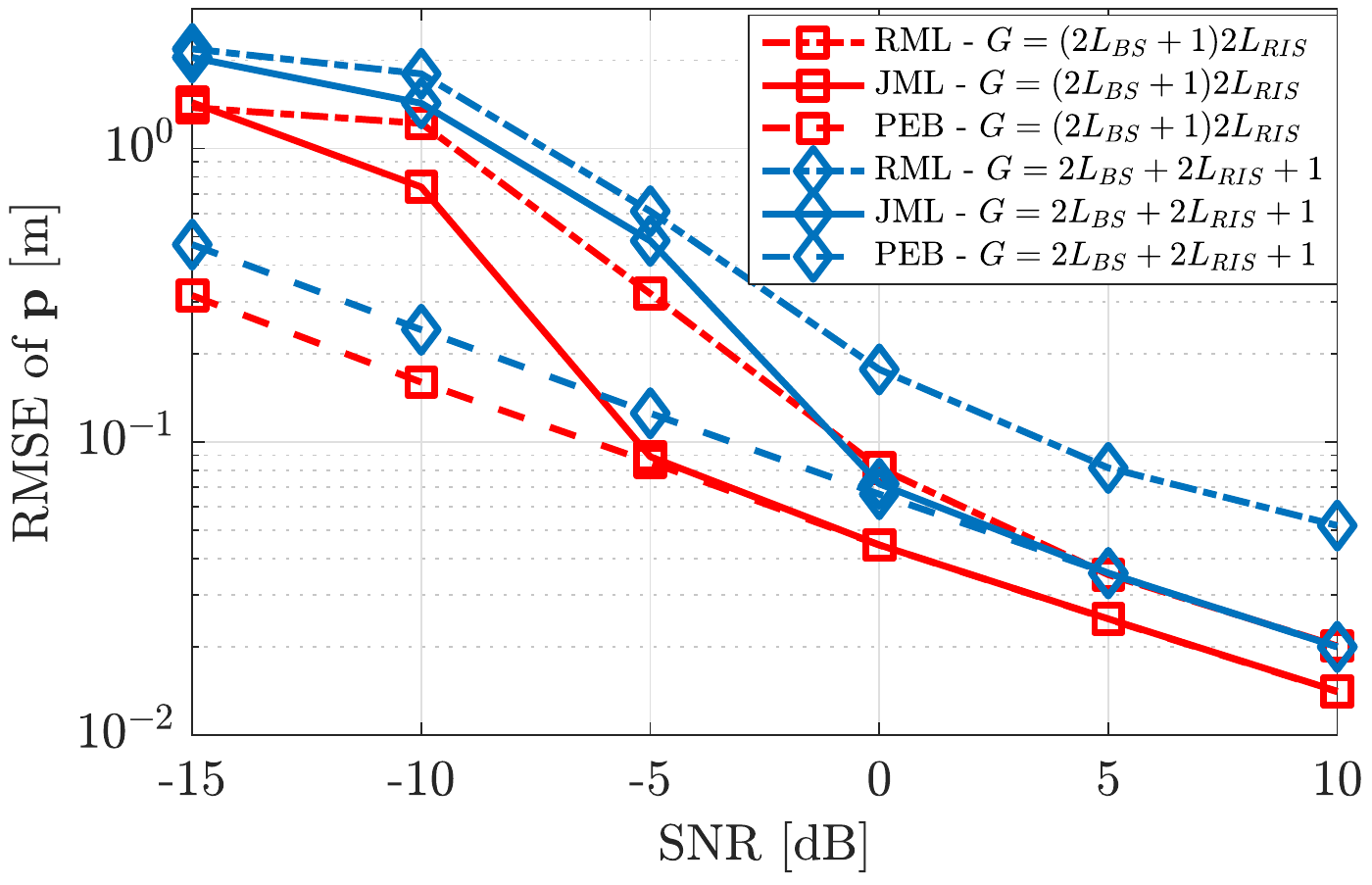}}
    \caption{Performance comparison between the case of full number of transmitted beams $G = (2 \lb + 1) 2 \lr$ and proposed heuristic using a reduced $G = 2\lb + 2\lr + 1$.}%
    \label{fig:reducedG}%
\end{figure}
By comparing the PEBs in Fig.~\ref{fig:reducedG}, we appreciate a slight degradation of the theoretical accuracy achievable in case of reduced $G$ (analogous behavior is obtained for $\Delta$). Interestingly, despite the more challenging scenario, the JML estimator combined with the proposed power allocation strategy (solid curves with $\diamond$ marker) is still able to provide very accurate localization performance, though attaining the bounds at higher SNR of $0$ dB. In this respect, an important trade-off between the estimation accuracy and the total number of transmission tends to emerge: for this specific case, the proposed heuristic leads to a $85$\% reduction in the number of involved transmissions (and, consequently, in the time needed to localize and synchronize the UE), but at the price of slightly increased values of RMSEs and related bounds. In Appendix~\ref{sec_5m_uncertainty}, we have conducted a similar analysis for the case in which the uncertainty region $\ppreg$ has been increased to 5 m along each direction. The obtained results reveal that the gaps between the estimation performance in cases of full and reduced $G$ tend to increase as the uncertainty increases. This behavior can be explained by noting that the proposed heuristic is based on the underling assumption that almost no power is received by the RIS when the BS is transmitting a beam in the directions of the UE. However, when the uncertainty region $\ppreg$ grows, the corresponding set of \ac{AoD}s from the BS to the UE $\{ \thetabmtilde^{(i)} \}_{i=1}^{2\lb}$ progressively spans an increased area and, consequently, some of the beams directed towards the UE could likely illuminate the RIS path with a non-negligible amount of power. In these cases, the different configurations of the RIS phase profiles start to have a noticeable effect on the resulting estimation accuracy, thus preventing the possibility to reduce $G$ without experiencing evident performance losses. Overall, an interesting insight can be derived from this analysis: the smaller the initial uncertainty about the UE position, the shorter the time required to localize and synchronize it accurately. 

{\ed To further corroborate these insights, we consider a second different scenario in which the UE position is moved to $\bm{p} = [3 \ -1]^\mathsf{T}$ m, so that the angular separation $|\theta_{{\BM}} - \theta_{{\RM}}|$ between the UE and the RIS  increases from $14.7^\circ$ to $48.7^\circ$ in this new configuration, while the rest of the parameters are kept the same as for Fig.~\ref{fig:reducedG}. The effect of this change on the gap between the estimation performances of the full-$G$ and reduced-$G$ cases can be observed in Fig.~\ref{fig:reducedG_largeseparation}. More specifically, the gap between the RMSEs on the estimation of $\bm{p}$ of the full-$G$ and reduced-$G$ cases is significantly reduced by moving the UE to a location where the \ac{AoD} difference becomes much larger (analogous behavior is obtained for $\Delta$). This behavior is perfectly in line with our previous findings and can be explained by noting that the RIS practically receives a negligible amount of power when the BS is illuminating the UE, being the extent of the uncertainty region $\mathcal{P}$ not sufficiently large to include beams that  illuminate the RIS. Hence, it can be concluded that, for scenarios with widely separated \ac{AoD}s and sufficiently small uncertainty regions, it is reasonable to employ the codebook with reduced $G$ since it provides almost the same performance as in the case of full $G$ using smaller number of transmissions. In this respect, it is worth noting that the gap between the two cases will similarly reduce also in the cases in which the \ac{AoD}s from BS to RIS and UE are close, but the uncertainty region is small enough to guarantee that the beams do not illuminate the RIS path. Overall, we can conclude that the performances in cases of full and reduced G are mainly related to both the UE location and the extent of the uncertainty region $\mathcal{P}$.

\begin{figure}%
    \centering
    {\includegraphics[width=0.35\textwidth]{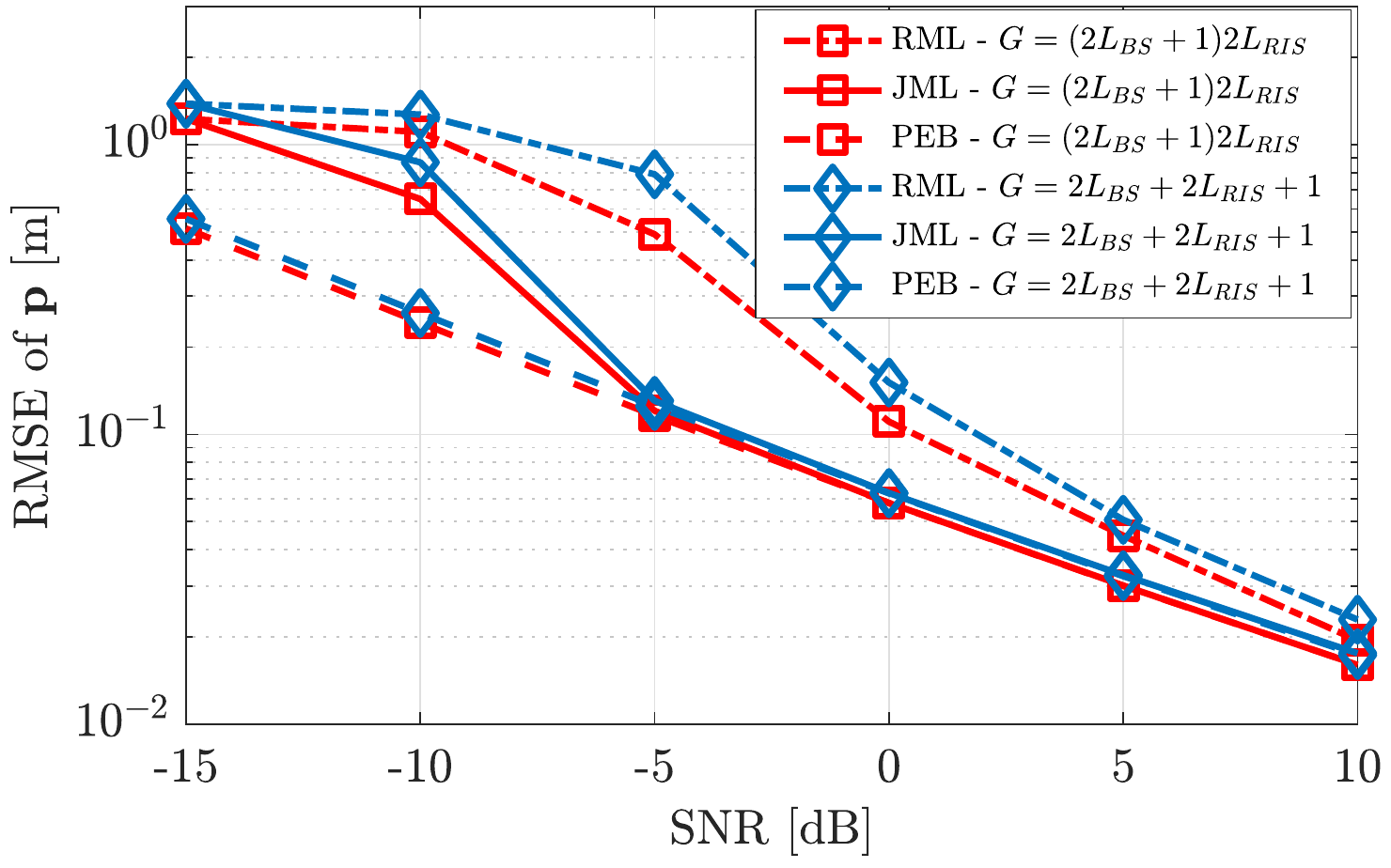}}
    \caption{\ed Performance comparison between the case of full number of transmitted beams $G = (2 \lb + 1) 2 \lr$ and proposed heuristic using a reduced $G = 2\lb + 2\lr + 1$ for increased \ac{AoD} separation between UE and RIS.}%
    \label{fig:reducedG_largeseparation}%
\end{figure}
}
\subsubsection{Performance Assessment in Presence of Uncontrollable Multipath}\label{sec_uncontrolled_path} 
\begin{figure}%
    \centering
    \includegraphics[width=0.36\textwidth]{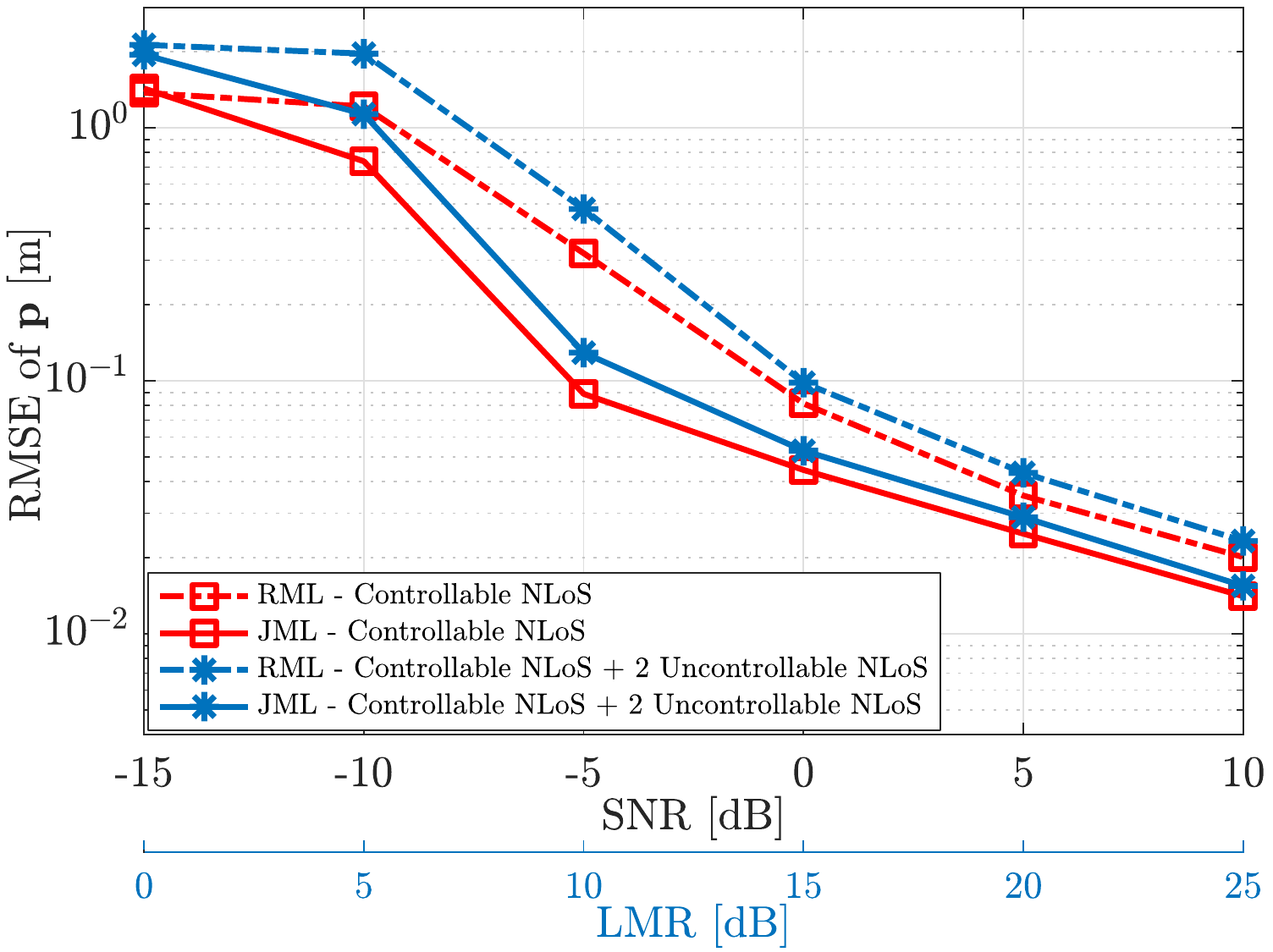}
     \caption{Performance comparison between the matched scenario with a single controllable \ac{NLoS} path through RIS and the mismatched scenario with two additional uncontrollable \ac{NLoS} paths.}%
    \label{fig:multipath}%
\end{figure}
To further challenge the proposed approach, we also investigate a scenario accounting for the simultaneous presence of the controllable \ac{NLoS} path through the RIS, as well as of two additional uncontrollable \ac{NLoS} paths generated by two local scatterers in the surrounding environment, located at unknown positions $\bm{m}_1 = [2 \ 7]^\mathsf{T}$ m and $\bm{m}_2 = [6 \ 2]^\mathsf{T}$, respectively. In doing so, we can test the robustness of the algorithms in a propagation environment that is mismatched with respect to the model considered at design stage. Assuming that each uncontrollable \ac{NLoS} path is characterized by a single dominant ray, we generate the absolute value of the complex amplitudes as $|\alpha_{\text{\tiny NLOS},i}| = \Gamma \lambda_c/(4\pi[\|\bm{m}_i\| + \|\bm{m}_i - \bm{p}\|)])$, with $\Gamma = 0.7$ reflection coefficient \cite{abu2018error}.
To analyze the robustness under different multipath conditions, we keep fixed the power of the uncontrollable \ac{NLoS} paths (denoted by $P^i_\text{\tiny \ac{NLoS}}, i=1,2$) and increase only the power along the \ac{LoS} path (denoted by $P_\text{\tiny \ac{LoS}}$) and the controllable RIS path, using the \ac{LMR} defined as $\text{LMR} = P_\text{\tiny LOS}/\sum_{i=1}^2 P^i_\text{\tiny \ac{NLoS}}$. For the considered setup, 
varying the SNR in the range from $-15$ dB up to $10$ dB corresponds to a LMR varying from $0$ dB up to $25$ dB, with $5$ dB steps.
In Fig.~\ref{fig:multipath}, we show the evolution of the RMSEs on the estimation of $\bm{p}$ as a function of the SNR, for both cases with and without uncontrollable \ac{NLoS} paths.
The obtained results reveal that the proposed approach is effective also in this more challenging scenario, with both the RML and JML algorithms that exhibit a slight degradation of the achieved localization performance only for small values of the SNR (similar considerations hold true for the clock offset $\Delta$), that is, when the multipath in terms of LMR is more severe. 

\section{Conclusion}
In this paper, we have considered the problem of joint localization and synchronization of a single-antenna \ac{UE} served by a single \ac{BS} in the presence of a \ac{RIS}, assuming the existence of a \ac{LoS} path and a controllable \ac{NLoS} path through the \ac{RIS}. To maximize the performance of localization and synchronization under UE location uncertainty, a novel codebook-based low-complexity design strategy for joint optimization of active BS precoding and passive RIS phase shifts has been proposed, based on the derived low-dimensional structure of precoders and phase profiles. In addition, we have developed a reduced-complexity \ac{ML}-based estimator by exploiting the special signal structure that enables decoupled estimation of UE location and clock offset. Extensive  simulations showed that the proposed joint BS-RIS beamforming approach provides significant improvements in both localization and synchronization performance (on the order of meters and nanoseconds, respectively, at SNR = $-10$ dB) over the state-of-the-art benchmarks. Moreover, the proposed estimator is able to attain the corresponding theoretical limits at a relatively low SNR (around $-5$ dB) and found to be resilient against uncontrolled multipath. As a future direction, we plan to evaluate the impact of discrete RIS phase shifts on the estimation performance.


\appendices

\section{Proof of Proposition~\ref{prop_BS_precoder}}\label{app_proof_BS}
\noindent Following similar arguments as in \cite[App.~C]{li2007range}, we represent the covariance matrix in \eqref{eq_problem_peb_relaxed_fixed_RIS} for the $g$-th transmission as
\begin{align}\label{eq_xxgstar}
    \XX_g = \Gammabbig \Gammabbig^\hermit ~,
\end{align}
where $\Gammabbig$ admits a decomposition
\begin{align}\label{eq_gammabig}
    \Gammabbig = \projrange{\aabsbig} \Gammabbig + \projnull{\aabsbig} \Gammabbig ~,
\end{align}
with $\projrange{\XX} \eqdef \XX (\XX^H \XX)^{-1} \XX^H$ denoting the orthogonal projector onto the columns of $\XX$ and $\projnull{\XX} \eqdef \Imatrix - \projrange{\XX}$. Then, $\XX_g$ in \eqref{eq_xxgstar} can be re-written using \eqref{eq_gammabig} as
\begin{align}\label{eq_xxgstar_dec}
    \XX_g = \XXgbar + \XXgtilde ~,
\end{align}
where
\begin{align} \label{eq_xxgbar_def}
\XXgbar &\eqdef \projrange{\aabsbig} \Gammabbig \Gammabbig^\hermit \projrange{\aabsbig} \\ \label{eq_xxgtilde_def}
    \XXgtilde &\eqdef \projrange{\aabsbig} \Gammabbig \Gammabbig^\hermit \projnull{\aabsbig} + \projnull{\aabsbig} \Gammabbig \Gammabbig^\hermit \projrange{\aabsbig}
    \\ \nonumber &~~~~+ \projnull{\aabsbig} \Gammabbig \Gammabbig^\hermit \projnull{\aabsbig}~.
\end{align}
Since $\projnull{\aabsbig} \aabsbig = \boldzero$ by definition, we have
\begin{align}\label{eq_xxgtilde_zero}
    \aabsbig^\hermit \XXgtilde \aabsbig = \boldzero ~.
\end{align}

\noindent We now provide three lemmas to facilitate the proof of Prop.~\ref{prop_BS_precoder}.
\begin{lemma} \label{lemma_fim_depend}
    The FIM $\JJgamma$ in \eqref{eq::FIMelemformula} does not depend on the component $\XXgtilde$ of $\XX_g$ in \eqref{eq_xxgstar_dec}.
\end{lemma}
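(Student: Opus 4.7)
The plan is to reduce the claim to the orthogonality relation \eqref{eq_xxgtilde_zero} by showing that every entry of $\JJgamma$ depends on $\XX_g$ only through quadratic forms $\bm{u}^\hermit \XX_g \bm{v}$ with $\bm{u},\bm{v}$ drawn from the columns of $\aabsbig$ defined in \eqref{eq_a_bs}.

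First, I would catalog how $\ff_g$ enters the noise-free signal $\mgn = \mgnbm + \mgnr$ and its partial derivatives with respect to $\gammab$. From \eqref{eq_mgnbm}, $\mgnbm$ depends on $\ff_g$ only through the scalar $\aabs^\trpose(\thetabm)\ff_g$, so the partial derivatives of $\mgnbm$ with respect to $\{\taubm,\rhobm,\varphibm\}$ preserve this factor, while the partial with respect to $\thetabm$ replaces it with $\aabsdt^\trpose(\thetabm)\ff_g$. Similarly, $\mgnr$ depends on $\ff_g$ only through $\aabs^\trpose(\thetabr)\ff_g$, and since $\thetabr$ is known and therefore absent from $\gammab$, no partial of $\mgnr$ differentiates this factor. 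Consequently, each partial $\partial \mgn/\partial \gamma_h$ admits a representation $A_h(g,n)\, \aabs^\trpose(\theta_h)\ff_g$, where $A_h(g,n)$ collects all factors independent of $\ff_g$ and $\aabs(\theta_h)\in \{\aabs(\thetabr),\aabs(\thetabm),\aabsdt(\thetabm)\}$.

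Substituting into the Slepian--Bangs formula \eqref{eq::FIMelemformula} and using the elementary identity $(\aabs^\trpose(\theta_h)\ff_g)^\conj (\aabs^\trpose(\theta_k)\ff_g) = \aabs^\trpose(\theta_k)\,\XX_g\, \aabs^\conj(\theta_h)$, each FIM entry becomes a real linear combination of terms of the form $\bm{u}^\hermit \XX_g \bm{v}$ where $\bm{u},\bm{v}$ are columns of $\aabsbig$. Invoking \eqref{eq_xxgtilde_zero}, which states $\aabsbig^\hermit \XXgtilde\, \aabsbig = \boldzero$, all such terms vanish when $\XX_g$ is replaced by $\XXgtilde$, so $\JJgamma$ carries no dependence on $\XXgtilde$.

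The principal obstacle is the bookkeeping in the cataloging step: one must confirm, by inspection of the derivative expressions in Appendix~\ref{sec_der_fim}, that no partial with respect to any of the eight parameters in $\gammab$ introduces a BS-side vector outside the set $\{\aabs(\thetabr),\aabs(\thetabm),\aabsdt(\thetabm)\}$. Once this is checked, the remainder of the argument reduces to the one-line projection identity already encoded in \eqref{eq_xxgtilde_zero}, delivered via the cyclic/trace manipulation above.
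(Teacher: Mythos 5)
Your proof is correct and takes essentially the same route as the paper: the paper's proof simply observes from the tabulated FIM entries in Appendix~\ref{sec_fim_func} that $\JJgamma$ depends on $\XX_g$ only through $\aabsbig^\hermit \XX_g \aabsbig$ and then invokes \eqref{eq_xxgtilde_zero}. You reconstruct that observation from the derivative expressions and the identity $(\aabs^\trpose(\theta_h)\ff_g)^\conj (\aabs^\trpose(\theta_k)\ff_g) = \aabs^\trpose(\theta_k)\XX_g\aabs^\conj(\theta_h)$ rather than citing the tabulated entries, which is just the same argument with the bookkeeping made explicit.
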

\begin{proof}
Based on the definition of $\aabsbig$ in \eqref{eq_a_bs} and the FIM elements in Appendix~\ref{sec_fim_func}, we observe that the dependence of the FIM $\JJgamma$ on $\XX_g$ is only through the elements of $\aabsbig^\hermit \XX_g \aabsbig \in \complexset{3}{3}$. Then, it follows from \eqref{eq_xxgstar_dec} and \eqref{eq_xxgtilde_zero} that the FIM does not depend on $\XXgtilde$, i.e., the dependence of the FIM on $\XX_g$ is only through $\XXgbar$ in \eqref{eq_xxgstar_dec}.
\end{proof}

\rev{\begin{remark}\label{lemma_power}
The component $\XXgtilde$ of $\XX_g$ in \eqref{eq_xxgstar_dec} contributes non-negatively \! to the total power consumption, i.e., $\! \tracenormal{\! \XXgtilde}\! \geq\! 0$.
\end{remark}}
\begin{proof}
Opening up the terms in $\XXgtilde$ in \eqref{eq_xxgtilde_def}, we have
\begin{align}\nonumber
    \tracenormal{\XXgtilde} &= \tracesmall{\projrange{\aabsbig} \Gammabbig \Gammabbig^\hermit \projnull{\aabsbig}} + \tracesmall{ \projnull{\aabsbig} \Gammabbig \Gammabbig^\hermit \projrange{\aabsbig}} 
    \\ \nonumber
    &~~~~+ \tracesmall{ \projnull{\aabsbig} \Gammabbig \Gammabbig^\hermit \projnull{\aabsbig} }
    \\ \nonumber
    &= \tracesmall{\Gammabbig \Gammabbig^\hermit \projnull{\aabsbig} \projrange{\aabsbig}} + \tracesmall{ \Gammabbig \Gammabbig^\hermit \projrange{\aabsbig} \projnull{\aabsbig} } 
    \\ \nonumber
    &~~~~+ \norm{\Gammabbig^\hermit \projnull{\aabsbig}}_{\frob}^2
    \\ \label{eq_tr_xxg}
    &= \norm{\Gammabbig^\hermit \projnull{\aabsbig}}_{\frob}^2 \geq 0 ~,
\end{align}
where $\norm{\cdot}_{\frob}$ represents the matrix Frobenius norm. 
\end{proof}

\begin{lemma}\label{lemma_power_zero}
The component $\XXgtildestar$ in \eqref{eq_xxgstar_dec} of an optimal $\XXgstar$ obtained as the solution to \eqref{eq_problem_peb_relaxed_fixed_RIS} satisfies $\tracenormal{\XXgtildestar} = 0$.
\end{lemma}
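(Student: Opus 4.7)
The plan is to argue by contradiction. Assume there is an optimal $\{\XXgstar\}_{g=1}^G$ for \eqref{eq_problem_peb_relaxed_fixed_RIS} with $\sum_{g=1}^G \tracenormal{\XXgtildestar} > 0$, and construct a strictly better feasible point, contradicting optimality.

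First, I would replace each $\XXgstar$ by its ``useful'' part $\XXgbarstar$. By Lemma~\ref{lemma_fim_depend}, this replacement leaves $\JJgammablkdiag$, and hence $\fpebblkdiag$, unchanged, and each $\XXgbarstar$ is manifestly PSD thanks to its Gramian form $(\projrange{\aabsbig}\Gammabbig)(\projrange{\aabsbig}\Gammabbig)^\hermit$ in \eqref{eq_xxgbar_def}. However, Remark~\ref{lemma_power} (combined with \eqref{eq_xxgstar_dec}) gives $\tracenormal{\XXgbarstar} = \tracenormal{\XXgstar} - \tracenormal{\XXgtildestar}$, so the total power drops strictly below $1$: $\sum_g \tracenormal{\XXgbarstar} = 1 - \sum_g \tracenormal{\XXgtildestar} < 1$.

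Second, I would uniformly rescale by $1+\epsilon \eqdef 1/\sum_g \tracenormal{\XXgbarstar} > 1$, defining $\XXgstarr \eqdef (1+\epsilon)\XXgbarstar$. The rescaled matrices remain PSD and now satisfy \eqref{eq_power_cons} with equality, hence are feasible for \eqref{eq_problem_peb_relaxed_fixed_RIS}. With the RIS profiles $\{\oomegag,\Psibbig_g\}$ held fixed, Remark~\ref{lemma_fim_func} states that $\JJbm$, $\JJcross$ and $\JJr$ are all linear in the tuple $\{\XX_g\}$, so $\Jeta(\{\XXgstarr\}) = (1+\epsilon)\,\Jeta(\{\XXgstar\})$ via \eqref{eq_Jeta}. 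Taking the inverse and the trace of the top-left $2\times 2$ block then yields $\fpebblkdiag(\{\XXgstarr\}) = \fpebblkdiag(\{\XXgstar\})/(1+\epsilon) < \fpebblkdiag(\{\XXgstar\})$, the desired contradiction.

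The main obstacle, in my view, is the second step: guaranteeing that the uniform rescaling ``passes through'' the trace-of-inverse objective to produce a strict decrease. This hinges on two ingredients provided by Remark~\ref{lemma_fim_func}, namely the linearity of $\Jeta$ in $\{\XX_g\}$ at fixed RIS variables and the invertibility of $\Jeta$ at any optimal point (otherwise $\fpebblkdiag$ would already be infinite). A minor but important subtlety is that the argument critically exploits the \emph{equality} form of \eqref{eq_power_cons}: it is precisely the inability to leave transmit power unspent that makes the wasteful component $\XXgtildestar$ fatal at an optimum.
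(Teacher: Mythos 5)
Your proof is correct and follows essentially the same route as the paper: the same contradiction argument, the same replacement of each $\XXgstar$ by a rescaled $\XXgbarstar$ with the identical scaling factor $1/\tracesmall{\sum_g \XXgbarstar}>1$, the same appeal to Lemma~\ref{lemma_fim_depend}, and the same conclusion that the FIM scales by a factor greater than one so the PEB strictly decreases. Your added remarks on the linearity of $\Jeta$ in $\{\XX_g\}$ and on the role of the equality power constraint only make explicit what the paper leaves implicit.
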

\begin{proof}
To prove the lemma, we resort to proof by contradiction. For a given optimal solution 
\begin{align} \label{eq_xxgstar_opt}
    \XXgstar = \XXgbarstar + \XXgtildestar, ~~ g = 1, \ldots, G 
\end{align}
with $\tracenormal{\XXgtildestar} > 0$ for some $g$, consider an alternative solution
\begin{align} \label{eq_xxgstarr}
    \XXgstarr = \XXgbarstarr + \XXgtildestarr, ~~ g = 1, \ldots, G  
\end{align}
where
\begin{align} \label{eq_xxgbarstarr}
     \XXgbarstarr &\eqdef \XXgbarstar \left(1 + \frac{\tracesmall{\sum_{g=1}^{G}\XXgtildestar}}{\tracesmall{\sum_{g=1}^{G}\XXgbarstar}}  \right)
     \\ \label{eq_xxgtildestarr}
     \XXgtildestarr &\eqdef \boldzero ~.
\end{align}
It can be readily verified from \eqref{eq_xxgstar_opt}--\eqref{eq_xxgtildestarr} that
\begin{align} \label{eq_power_const_satisfied}
    \tracebig{\sum_{g=1}^{G} \XXgstarr} = \tracebig{\sum_{g=1}^{G} \XXgstar} ~.
\end{align}
In addition, from Lemma~\ref{lemma_fim_depend}, we note that the FIM obtained for $\XXgstar$ in \eqref{eq_xxgstar_opt} and $\XXgstarr$ in \eqref{eq_xxgstarr} depend only on $\aabsbig^\hermit \XXgbarstar \aabsbig$ and $\aabsbig^\hermit \XXgbarstarr \aabsbig$, respectively. Since $\aabsbig^\hermit \XXgbarstarr \aabsbig = \zeta \aabsbig^\hermit \XXgbarstar \aabsbig  $ for some $\zeta > 1$ according to \eqref{eq_xxgbarstarr}, the alternative solution $\XXgstarr$ in \eqref{eq_xxgstarr} would achieve smaller PEB than the optimal solution $\XXgstar$ in \eqref{eq_xxgstar_opt} (due to scaling of the FIM by $\zeta > 1$). Combining this with \eqref{eq_power_const_satisfied} shows that $\XXgstar$ cannot be an optimal solution of \eqref{eq_problem_peb_relaxed_fixed_RIS}, which completes the proof.
\end{proof}

Based on \eqref{eq_tr_xxg} in \rev{Remark}~\ref{lemma_power}, it can be observed that $\tracenormal{\XXgtilde} = 0$ implies $\Gammabbig^\hermit \projnull{\aabsbig} = 0$, which in turn yields $\XXgtilde = \boldzero$ using \eqref{eq_xxgtilde_def}. Hence, from \rev{Remark}~\ref{lemma_power} and Lemma~\ref{lemma_power_zero}, we infer that $\XXgtildestar$ of an optimal $\XXgstar$ should satisfy $\XXgtildestar = \boldzero$. Finally, from \eqref{eq_xxgstar_dec} and \eqref{eq_xxgbar_def}, an optimal $\XXgstar$ obtained as the solution to \eqref{eq_problem_peb_relaxed_fixed_RIS} can be expressed as
\begin{align} \label{eq_xxgstar_abs}
    \XXgstar &= \projrange{\aabsbig} \Gammabbig \Gammabbig^\hermit \projrange{\aabsbig} 
    \\ \nonumber
    &= \aabsbig \Upsilonb_g \aabsbig^\hermit ~,
\end{align}
where
$
    \Upsilonb_g \eqdef (\aabsbig^\hermit \aabsbig)^{-1} \aabsbig^\hermit \Gammabbig \Gammabbig^\hermit \aabsbig  (\aabsbig^\hermit \aabsbig)^{-1},
$
which completes the proof of Proposition~\ref{prop_BS_precoder}. 

\rev{We note that there exists an equivalent orthogonal solution $\Upsilonb_g$ (corresponding to the full-rank version of $\aabsbig$) in \eqref{eq_xxgstar_abs}, leading to the same covariance $\XX_g$, as shown in Appendix~\ref{sec_supp_orthogonal}. Moreover, it is worth highlighting that $\XX_g$ and $\Upsilonb_g$ are two identical solutions (having different dimensions) of the problem \eqref{eq_problem_peb_relaxed_fixed_RIS}, and thus one can always be obtained from the other using \eqref{eq_xxgstar_abs} and
\begin{align}\label{eq_ups_inv}
    \Upsilonb_g &= (\aabsbig^\hermit \aabsbig)^{-1} \aabsbig^\hermit \XX_g \aabsbig (\aabsbig^\hermit \aabsbig)^{-1} ~.
\end{align}
In other words, \textit{(i)} one can either solve \eqref{eq_problem_peb_relaxed_fixed_RIS} directly with respect to $\XX_g \in \complexset{\nbs}{\nbs}$, or, \textit{(ii)} one can solve \eqref{eq_problem_peb_relaxed_fixed_RIS} with respect to $\Upsilonb_g \in \complexset{3}{3}$ by inserting the relation $\XX_g = \aabsbig \Upsilonb_g \aabsbig^\hermit $ into both the objective \eqref{eq_problem_peb_relaxed_fixed_RIS} and the constraint \eqref{eq_const_xx}, and find the corresponding $\XX_g $ through $\XX_g = \aabsbig \Upsilonb_g \aabsbig^\hermit $.
}

\section{Proof of Proposition~\ref{prop_RIS_profile}}\label{app_proof_RIS}
From Appendix~\ref{sec_fim_nlos}, we observe that the FIM $\JJgammablkdiag$ in \eqref{eq_JJgamma_bdiag_approx} depends on $\Psibbig_g$ only through the elements of the matrix $\aariswbig^\hermit \Psibbig_g \aariswbig \in \complexset{2}{2}$. Then, the claim in the proposition can easily be proved by employing similar arguments to those in Appendix~\ref{app_proof_BS}.


\section{FIM in the Channel Domain}\label{sec_fim_channel}
In this part, we derive the entries of the channel domain FIM in \eqref{eq::FIMelemformula}.

\subsection{Derivatives as a Function of $\ff_g$ and $\oomegag$}\label{sec_der_fim}
Based on \eqref{eq_mgnbm}, we can obtain the derivatives as follows:
\begin{align*} 
 \frac{\partial \mgn}{\partial \taubm} &= \sqrt{P} \rhobm \e^{j\varphibm} \left[ \ccdt(\taubm) \right]_n \aabs^\trpose(\thetabm) \ff_g \sgn
 \\ 
    \frac{\partial \mgn}{\partial \thetabm} &= \sqrt{P} \rhobm \e^{j\varphibm} \left[ \cc(\taubm) \right]_n \aabsdt^\trpose(\thetabm) \ff_g \sgn
 \\ 
     \frac{\partial \mgn}{\partial \rhobm} &= \sqrt{P}  \e^{j\varphibm} \left[ \cc(\taubm) \right]_n \aabs^\trpose(\thetabm) \ff_g \sgn
 \\ 
     \frac{\partial \mgn}{\partial \varphibm} &= j \sqrt{P} \rhobm \e^{j\varphibm} \left[ \cc(\taubm) \right]_n \aabs^\trpose(\thetabm) \ff_g \sgn
     \\ 
     \frac{\partial \mgn}{\partial \taurm} &= \sqrt{P} \rhor \e^{j\varphir} \left[ \ccdt(\taur) \right]_n \aarisw^\trpose \oomegag \aabs^\trpose(\thetabr) \ff_g \sgn
     \\ 
     \frac{\partial \mgn}{\partial \thetarm} &= \sqrt{P} \rhor \e^{j\varphir} \left[ \cc(\taur) \right]_n \aariswdt^\trpose \oomegag \aabs^\trpose(\thetabr) \ff_g \sgn
     \\ 
     \frac{\partial \mgn}{\partial \rhor} &= \sqrt{P}  \e^{j\varphir} \left[ \cc(\taur) \right]_n \aarisw^\trpose \oomegag \aabs^\trpose(\thetabr) \ff_g \sgn
     \\ 
     \frac{\partial \mgn}{\partial \varphir} &= j \sqrt{P} \rhor \e^{j\varphir} \left[ \cc(\taur) \right]_n \aarisw^\trpose \oomegag \aabs^\trpose(\thetabr) \ff_g \sgn
\end{align*}
where $\aabsdt(\theta) \eqdef \partial \aabs(\theta) / \partial \theta$ and $\aariswdt(\theta) \eqdef \partial \aarisw(\theta) / \partial \theta$.

\subsection{FIM Entries as a Function of $\XX_g$, $\Psibbig_g $ and $\oomegag$}\label{sec_fim_func}

\subsubsection{FIM Submatrix for the LoS Path}\label{sec_fim_los}
The elements of $\JJbm$ in \eqref{eq_JJgamma_bdiag} can be obtained as follows:
\begin{align*}
    \Lambda(\taubm, \taubm) &= \frac{2 P \rhobm^2}{\sigma^2}  \sum_{g=1}^{G} \sum_{n=0}^{N-1}  \kappa_n^2 \, \sgnsq \aabs^\trpose(\thetabm)  \XX_g \aabs^\conj(\thetabm)  
    \\
    \Lambda(\thetabm, \taubm)& \\
    = \frac{2 P \rhobm^2}{\sigma^2} & \Re \left\{ j  \sum_{g=1}^{G} \sum_{n=0}^{N-1} \kappa_n \, \sgnsq \aabsdt^\trpose(\thetabm)  \XX_g \aabs^\conj(\thetabm)  \right\} 
    \\
    \Lambda(\rhobm, \taubm) &= 0 
    \\
     \Lambda(\varphibm, \taubm)& \\
     = -\frac{2 P \rhobm^2}{\sigma^2} & \sum_{g=1}^{G} \sum_{n=0}^{N-1} \kappa_n \, \sgnsq \aabs^\trpose(\thetabm)  \XX_g \aabs^\conj(\thetabm)   
     \\
     \Lambda(\thetabm, \thetabm) &= \frac{2 P \rhobm^2}{\sigma^2}   N \sum_{g=1}^{G} \aabsdt^\trpose(\thetabm)  \XX_g \aabsdt^\conj(\thetabm)  
      \\
     \Lambda(\rhobm, \thetabm) &= \frac{2 P \rhobm}{\sigma^2}   \Re \left\{ N  \sum_{g=1}^{G} \aabs^\trpose(\thetabm)  \XX_g \aabsdt^\conj(\thetabm)  \right\}
     \\
     \Lambda(\varphibm, \thetabm) &= \frac{2 P \rhobm^2}{\sigma^2}   \Re \left\{j N \sum_{g=1}^{G} \aabs^\trpose(\thetabm)  \XX_g \aabsdt^\conj(\thetabm)  \right\}
     \\
     \Lambda(\rhobm, \rhobm) &= \frac{2 P}{\sigma^2}  N  \sum_{g=1}^{G} \aabs^\trpose(\thetabm)  \XX_g \aabs^\conj(\thetabm)  
     \\
     \Lambda(\varphibm, \rhobm) &= 0
     \\
     \Lambda(\varphibm, \varphibm) &= \frac{2 P \rhobm^2}{\sigma^2} N \sum_{g=1}^{G}    \aabs^\trpose(\thetabm)  \XX_g \aabs^\conj(\thetabm) ~.
\end{align*}

\subsubsection{FIM Submatrix for the NLoS Path}\label{sec_fim_nlos}
The elements of $\JJr$ in \eqref{eq_JJgamma_bdiag} can be obtained as follows:
\begin{align*}
    \Lambda(\taurm, \taurm) & \\
    = \frac{2 P \rhor^2}{\sigma^2} & \sum_{g=1}^{G} \sum_{n=0}^{N-1} \kappa_n^2 \, \sgnsq  \aabs^\trpose(\thetabr)  \XX_g \aabs^\conj(\thetabr) \aarisw^\trpose \Psibbig_g \aarisw^\conj 
    \\
    \Lambda(\thetarm, \taurm) & \\
    = \frac{2 P \rhor^2}{\sigma^2} & \Re \left\{ j  \sum_{g=1}^{G} \sum_{n=0}^{N-1} \kappa_n \, \sgnsq \aabs^\trpose(\thetabr)  \XX_g \aabs^\conj(\thetabr) \aariswdt^\trpose \Psibbig_g \aarisw^\conj \right\} 
    \\
    \Lambda(\rhor, \taurm) &= 0 
    \\
     \Lambda(\varphir, \taurm)& \\
     = -\frac{2 P \rhor^2}{\sigma^2} & \sum_{g=1}^{G} \sum_{n=0}^{N-1} \kappa_n \, \sgnsq  \aabs^\trpose(\thetabr)  \XX_g \aabs^\conj(\thetabr) \aarisw^\trpose \Psibbig_g \aarisw^\conj 
     \\
     \Lambda(\thetarm, \thetarm) &= \frac{2 P \rhor^2}{\sigma^2} N \sum_{g=1}^{G} \aabs^\trpose(\thetabr)  \XX_g \aabs^\conj(\thetabr) \aariswdt^\trpose \Psibbig_g \aariswdt^\conj 
      \\
     \Lambda(\rhor, \thetarm) &=  \frac{2 P \rhor}{\sigma^2} \Re \left\{ N \sum_{g=1}^{G} \aabs^\trpose(\thetabr)  \XX_g \aabs^\conj(\thetabr) \aarisw^\trpose \Psibbig_g \aariswdt^\conj \right\}
     \\
     \Lambda(\varphir, \thetarm) &= \frac{2 P \rhor^2}{\sigma^2} \Re \left\{ j N \sum_{g=1}^{G} \aabs^\trpose(\thetabr)  \XX_g \aabs^\conj(\thetabr) \aarisw^\trpose \Psibbig_g \aariswdt^\conj \right\}
     \\
     \Lambda(\rhor, \rhor) &= \frac{2 P }{\sigma^2} N  \sum_{g=1}^{G} \aabs^\trpose(\thetabr)  \XX_g \aabs^\conj(\thetabr) \aarisw^\trpose \Psibbig_g \aarisw^\conj 
     \\
     \Lambda(\varphir, \rhor) &= 0
     \\ 
     \Lambda(\varphir, \varphir) &= \frac{2 P \rhor^2 }{\sigma^2} N \sum_{g=1}^{G} \aabs^\trpose(\thetabr)  \XX_g \aabs^\conj(\thetabr) \aarisw^\trpose \Psibbig_g \aarisw^\conj  ~.
\end{align*}

\subsubsection{FIM Submatrix for the LoS-NLoS Cross-Correlation}\label{sec_fim_corr}
The elements of $\JJcross$ in \eqref{eq_JJgamma_bdiag} can be computed as follows:
\begin{align*}
    \Lambda(\taurm, \taubm) &= \frac{2 P}{\sigma^2} \rhobm \rhor \, \realpbig{ \e^{j(\varphir-\varphibm)}  \sum_{g=1}^{G} \sum_{n=0}^{N-1} \kappa_n^2 \, \sgnsq
    \\ &~~~~ \times
    [\cc(\taur-\taubm) ]_n  \aabs^\trpose(\thetabr)  \XX_g \aabs^\conj(\thetabm) \aarisw^\trpose \oomegag}
    \\
    \Lambda(\thetarm, \taubm)&=  \frac{2 P}{\sigma^2} \rhobm \rhor \, \realpbig{ j \e^{j(\varphir-\varphibm)}  \sum_{g=1}^{G} \sum_{n=0}^{N-1} \kappa_n  \, \sgnsq
    \\ &~~~~ \times
    [\cc(\taur-\taubm) ]_n  \aabs^\trpose(\thetabr)  \XX_g \aabs^\conj(\thetabm) \aariswdt^\trpose \oomegag}
    \\
    \Lambda(\rhor, \taubm) &= \frac{2 P}{\sigma^2} \rhobm  \, \realpbig{ j \e^{j(\varphir-\varphibm)}  \sum_{g=1}^{G} \sum_{n=0}^{N-1} \kappa_n  \, \sgnsq
    \\ &~~~~ \times
    [\cc(\taur-\taubm) ]_n  \aabs^\trpose(\thetabr)  \XX_g \aabs^\conj(\thetabm) \aarisw^\trpose \oomegag}
    \\
    \Lambda(\varphir, \taubm) &= -\frac{2 P}{\sigma^2} \rhobm \rhor \, \realpbig{ \e^{j(\varphir-\varphibm)}  \sum_{g=1}^{G} \sum_{n=0}^{N-1} \kappa_n  \, \sgnsq
    \\ &~~~~ \times
    [\cc(\taur-\taubm) ]_n  \aabs^\trpose(\thetabr)  \XX_g \aabs^\conj(\thetabm) \aarisw^\trpose \oomegag}
    \\
    \Lambda(\taurm, \thetabm) &= -\frac{2 P}{\sigma^2} \rhobm \rhor \, \realpbig{ j \e^{j(\varphir-\varphibm)}  \sum_{g=1}^{G} \sum_{n=0}^{N-1} \kappa_n  \, \sgnsq
    \\ &~~~~ \times
    [\cc(\taur-\taubm) ]_n  \aabs^\trpose(\thetabr)  \XX_g \aabsdt^\conj(\thetabm) \aarisw^\trpose \oomegag}
    \\
    \Lambda(\thetarm, \thetabm)&= \frac{2 P}{\sigma^2} \rhobm \rhor \, \realpbig{ \e^{j(\varphir-\varphibm)}  \sum_{g=1}^{G} \sum_{n=0}^{N-1}  \sgnsq
    \\ &~~~~ \times
    [\cc(\taur-\taubm) ]_n  \aabs^\trpose(\thetabr)  \XX_g \aabsdt^\conj(\thetabm) \aariswdt^\trpose \oomegag}
    \\
    \Lambda(\rhor, \thetabm) &= \frac{2 P}{\sigma^2} \rhobm  \, \realpbig{ \e^{j(\varphir-\varphibm)}  \sum_{g=1}^{G} \sum_{n=0}^{N-1}  \sgnsq
    \\ &~~~~ \times
    [\cc(\taur-\taubm) ]_n  \aabs^\trpose(\thetabr)  \XX_g \aabsdt^\conj(\thetabm) \aarisw^\trpose \oomegag}
    \\
    \Lambda(\varphir, \thetabm) &= \frac{2 P}{\sigma^2} \rhobm \rhor  \, \realpbig{ j \e^{j(\varphir-\varphibm)}  \sum_{g=1}^{G} \sum_{n=0}^{N-1}  \sgnsq
    \\ &~~~~ \times
    [\cc(\taur-\taubm) ]_n  \aabs^\trpose(\thetabr)  \XX_g \aabsdt^\conj(\thetabm) \aarisw^\trpose \oomegag}
    \\
    \Lambda(\taurm, \rhobm) &= \frac{2 P}{\sigma^2}  \rhor  \, \realpbig{ -j \e^{j(\varphir-\varphibm)}  \sum_{g=1}^{G} \sum_{n=0}^{N-1} \kappa_n  \sgnsq
    \\ &~~~~ \times
    [\cc(\taur-\taubm) ]_n  \aabs^\trpose(\thetabr)  \XX_g \aabs^\conj(\thetabm) \aarisw^\trpose \oomegag}
    \\
    \Lambda(\thetarm, \rhobm)&=  \frac{2 P}{\sigma^2}  \rhor  \, \realpbig{  \e^{j(\varphir-\varphibm)}  \sum_{g=1}^{G} \sum_{n=0}^{N-1}   \sgnsq
    \\ &~~~~ \times
    [\cc(\taur-\taubm) ]_n  \aabs^\trpose(\thetabr)  \XX_g \aabs^\conj(\thetabm) \aariswdt^\trpose \oomegag}
    \\
    \Lambda(\rhor, \rhobm) &= \frac{2 P}{\sigma^2}  \realpbig{  \e^{j(\varphir-\varphibm)}  \sum_{g=1}^{G} \sum_{n=0}^{N-1}   \sgnsq
    \\ &~~~~ \times
    [\cc(\taur-\taubm) ]_n  \aabs^\trpose(\thetabr)  \XX_g \aabs^\conj(\thetabm) \aarisw^\trpose \oomegag}
    \\
    \Lambda(\varphir, \rhobm) &= \frac{2 P}{\sigma^2} \rhor \, \realpbig{ j \e^{j(\varphir-\varphibm)}  \sum_{g=1}^{G} \sum_{n=0}^{N-1}   \sgnsq
    \\ &~~~~ \times
    [\cc(\taur-\taubm) ]_n  \aabs^\trpose(\thetabr)  \XX_g \aabs^\conj(\thetabm) \aarisw^\trpose \oomegag}
    \\
    \Lambda(\taurm, \varphibm) &= -\frac{2 P}{\sigma^2} \rhobm \rhor  \, \realpbig{ \e^{j(\varphir-\varphibm)}  \sum_{g=1}^{G} \sum_{n=0}^{N-1} \kappa_n  \sgnsq
    \\ &~~~~ \times
    [\cc(\taur-\taubm) ]_n  \aabs^\trpose(\thetabr)  \XX_g \aabs^\conj(\thetabm) \aarisw^\trpose \oomegag}
    \\
    \Lambda(\thetarm, \varphibm)&=  -\frac{2 P}{\sigma^2} \rhobm \rhor  \, \realpbig{ j \e^{j(\varphir-\varphibm)}  \sum_{g=1}^{G} \sum_{n=0}^{N-1}   \sgnsq
    \\ &~~~~ \times
    [\cc(\taur-\taubm) ]_n  \aabs^\trpose(\thetabr)  \XX_g \aabs^\conj(\thetabm) \aariswdt^\trpose \oomegag}
    \\
    \Lambda(\rhor, \varphibm) &= -\frac{2 P}{\sigma^2} \rhobm   \, \realpbig{ j \e^{j(\varphir-\varphibm)}  \sum_{g=1}^{G} \sum_{n=0}^{N-1}   \sgnsq
    \\ &~~~~ \times
    [\cc(\taur-\taubm) ]_n  \aabs^\trpose(\thetabr)  \XX_g \aabs^\conj(\thetabm) \aarisw^\trpose \oomegag}
    \\
    \Lambda(\varphir, \varphibm) &= \frac{2 P}{\sigma^2} \rhobm \rhor   \, \realpbig{ \e^{j(\varphir-\varphibm)}  \sum_{g=1}^{G} \sum_{n=0}^{N-1}   \sgnsq
    \\ &~~~~ \times
    [\cc(\taur-\taubm) ]_n  \aabs^\trpose(\thetabr)  \XX_g \aabs^\conj(\thetabm) \aarisw^\trpose \oomegag} ~.
\end{align*}

\section{Entries of Transformation Matrix in \eqref{eq_Jeta}}\label{sec_transform_mat}

The transformation matrix in \eqref{eq_Jeta} can be written as
\begin{align} \nonumber
\bm{T} &= \frac{\partial \bm{\gamma}^{\mathsf{T}}}{\partial \bm{\eta}} =\begin{bmatrix}
\partial \tau_{{\BM}}/\partial p_x & \partial \theta_{{\BM}}/\partial p_x & \cdots & \partial \varphi_{\text{\tiny R}}/\partial p_x\\
\partial \tau_{{\BM}}/\partial p_y & \partial \theta_{{\BM}}/\partial p_y & \cdots & \partial \varphi_{\text{\tiny R}}/\partial p_y\\
\vdots & \vdots & \vdots \\
\partial \tau_{{\BM}}/\partial \Delta & \partial \theta_{{\BM}}/\partial \Delta & \cdots & \partial \varphi_{\text{\tiny R}}/\partial \Delta
\end{bmatrix}.
\end{align}

The entries of $\bm{T}$ are given by
\begin{align}
\frac{\partial \tau_\BM}{\partial p_x} = \frac{p_x}{c\|\bm{p}\|}, &   \qquad \frac{\partial \tau_\BM}{\partial p_y} = \frac{p_y}{c\|\bm{p}\|}, \nonumber \\
\frac{\partial \theta_\BM}{\partial p_x} = \frac{-p_y/p^2_x}{1+ (p_y/p_x)^2}, &\qquad \frac{\partial \theta_\BM}{\partial p_y} = \frac{1/p_x}{1+ (p_y/p_x)^2}, \nonumber \\
\frac{\partial \tau_\RM}{\partial p_x} = \frac{p_x - r_x}{c\|\bm{r}-\bm{p}\|}, & \qquad \frac{\partial \tau_\RM}{\partial p_y} =\frac{p_y - r_y}{c\|\bm{r}-\bm{p}\|}, \nonumber \\
\frac{\partial \theta_\RM}{\partial p_x} = \frac{-(p_y-r_y)/(p_x-r_x)^2}{1+\left(\frac{p_y - r_y}{p_x - r_x}\right)^2}, & \qquad \frac{\partial \theta_\RM}{\partial p_y} = \frac{1/(p_x-r_x)}{1+\left(\frac{p_y - r_y}{p_x - r_x}\right)^2}, \nonumber \\
\frac{\partial \tau_\BM}{\partial \Delta} = \frac{\partial \tau_\RM}{\partial \Delta} = 1, & \qquad \frac{\partial \rho_\BM}{\partial \rho_\BM} = \frac{\partial \varphi_\BM}{\partial \varphi_\BM} = \frac{\partial \rho_\R}{\partial \rho_\R} = \frac{\partial \varphi_\R}{\partial \varphi_\R} = 1, \nonumber
\end{align}
while the rest of the entries in $\bm{T}$ are zero.


\newpage
\newpage

{\ed 
\section{Robust Joint BS-RIS Beamforming Design Using CEB as Performance Metric}\label{sec::ceb_metric}
In this section, we derive the joint design of BS precoders and RIS phase profiles by using the clock error bound (CEB) as performance metric. We recall that the CEB is given by
\begin{align}
    \fceb = \big[ \Jeta^{-1} \big]_{7,7}~,
\end{align}
where $\Jeta$ is the location domain FIM in \eqref{eq_Jeta}. Then, the worst-case CEB minimizing version of the beam power allocation problem in \eqref{eq_peb_codebook} can be formulated as follows:
\begin{subequations} \label{eq_ceb_codebook}
\begin{align} \label{eq_ceb_codebook_obj}
	\mathop{\mathrm{min}}\limits_{ \substack{\varrhob,t \\\{u_{m}\}} } &~ t \\ \label{eq_ceb_codebook_lmi}
    \mathrm{s.t.}&~~ 
    \begin{bmatrix}  \JJeta(\{\XX_g, \oomegag, \Psibbig_g  \}_{g=1}^{G}; \etab(\pp_m)) & \ekk{7} \\ \ekkt{7} & u_m  \end{bmatrix} \succeq 0 ~, 
	\\ \nonumber &~~ u_m \leq t, ~m=0,\ldots,\ngrid-1 ~,
    \\ \nonumber &~~  \tracebig{\sum_{g=1}^{G}\XX_g} = 1 \, , \, \varrhob \succeq \boldzero \, , \, \XX_g = \varrho_g \FFbs_{:,i} (\FFbs_{:,i})^\hermit ~,
    \\ \nonumber &~~ \oomegag = \FFris_{:,j} \, , \, \Psibbig_g = \oomegag (\oomegag)^\hermit  \,, \, g=1,\ldots,G ~,
\end{align}
\end{subequations}
where $\ekk{7}$ is the $7$-th column of the identity matrix. 

To analyze if the CEB based design in \eqref{eq_ceb_codebook} has any differences compared to the PEB based design in \eqref{eq_peb_codebook}, we have obtained the optimal power allocation $\varrhob^{\star}$ across $G$ transmissions by solving the problem \eqref{eq_ceb_codebook} in Step~(c) of Algorithm~\ref{alg_codebook} instead of solving \eqref{eq_peb_codebook}. In Fig.~\ref{fig_peb_vs_ceb}, we plot the resulting PEB and CEB values corresponding to the optimal power allocation obtained by solving the CEB-based problem in \eqref{eq_ceb_codebook}, along with the PEB and CEB values corresponding to the solution of the PEB-based optimization in \eqref{eq_peb_codebook}. As seen from the figure, the CEB minimization yields practically the same results as the PEB minimization, which implies that positioning and synchronization are strongly coupled. Hence, by optimizing the PEB based criterion, we inherently take into account the error of synchronization.

\begin{figure}[h]
		\centering
		\includegraphics[width=0.35\textwidth]{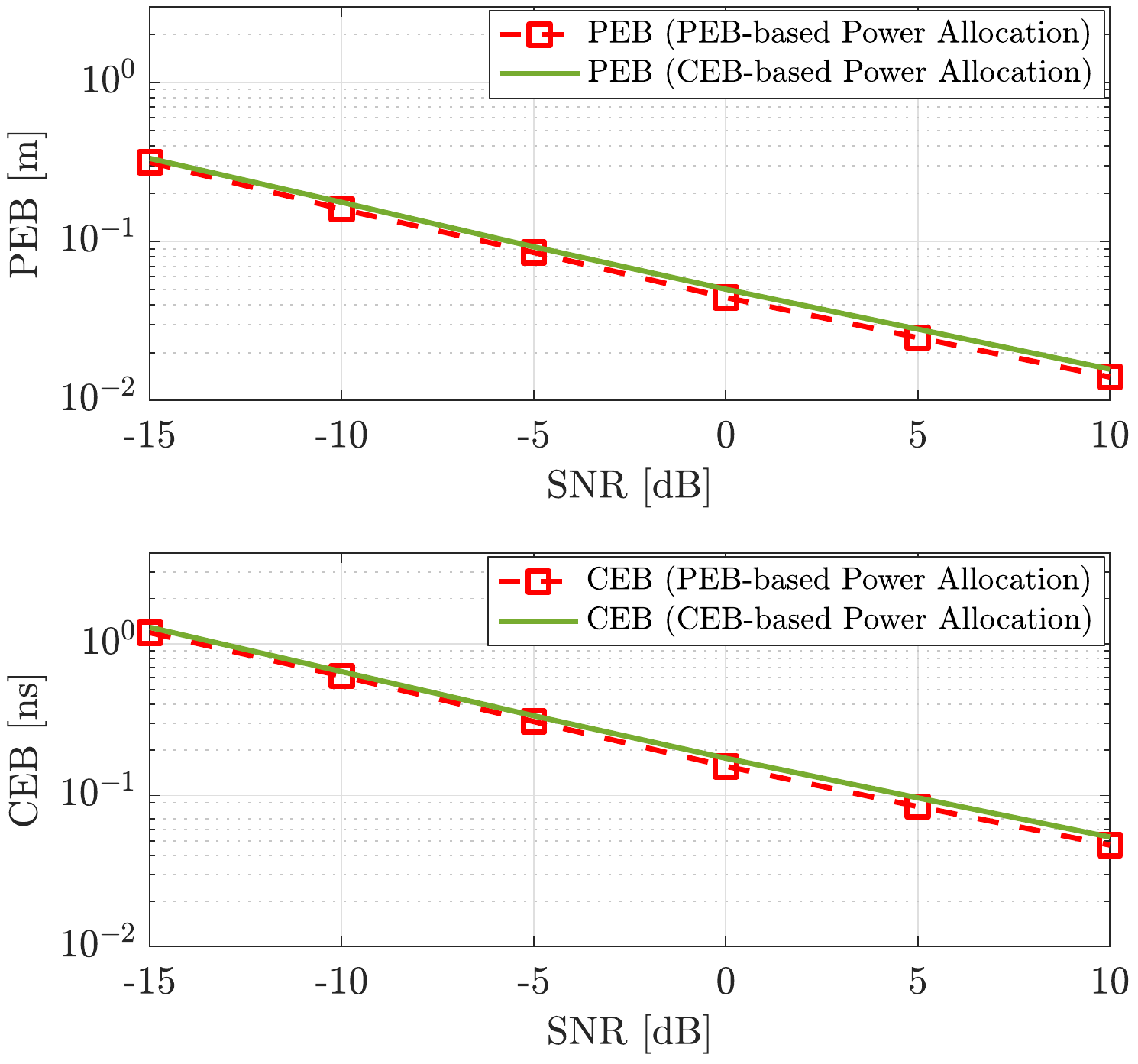}
		\caption{\ed Comparison of PEB-based power allocation in (31) and CEB-based power allocation in \eqref{eq_ceb_codebook} in terms of the resulting PEB and CEB values.}\label{fig_peb_vs_ceb}
	\end{figure}
}


{\ed 
\section{Obtaining Codebooks in \eqref{eq_prop_codebook} from Proposition~\ref{prop_BS_precoder} and Proposition~\ref{prop_RIS_profile}}\label{sec_supp_codebook}
In this section, we provide the methodology to arrive at the proposed codebooks in \eqref{eq_prop_codebook} based on the results in Prop.~\ref{prop_BS_precoder} and Prop.~\ref{prop_RIS_profile}. According to Prop.~\ref{prop_BS_precoder}, the optimal BS precoder covariance matrices $\XX_g = \ff_g \ff_g^\hermit$ in the absence of the rank-one constraints $\rank(\XX_g) = 1$ can be expressed as
\begin{align}\label{eq_xxg_r1}
    \XX_g = \aabsbig \Upsilonb_g \aabsbig^\hermit ~,
\end{align}
where $ \Upsilonb_g  \succeq 0$ and
\begin{align} \label{eq_a_bs_r1}
        \aabsbig = \left[ \aabs(\thetabr) ~ \aabs(\thetabm) ~  \aabsdt(\thetabm)  \right]^\conj \in \complexset{\nbs}{3} ~.
    \end{align}
Since $ \Upsilonb_g  \succeq 0$, it can be written as $ \Upsilonb_g  = \Lambdab_g \Lambdab_g^\hermit$ for some $\Lambdab_g \in \complexset{3}{3}$. Accordingly, \eqref{eq_xxg_r1} becomes
\begin{align}\label{eq_xxg_r11}
    \XX_g = \aabsbig \Lambdab_g (\aabsbig \Lambdab_g)^\hermit ~.
\end{align}
To make $\XX_g$ in \eqref{eq_xxg_r11} rank-one for every transmission $g$, we set $\Lambdab_g$ to be a diagonal matrix where only a single diagonal entry is non-zero, representing which beam (column) in $\aabsbig$ is selected. More rigorously, we let
\begin{align}\label{eq_lambda_r1}
    \Lambdab_g = \sqrt{\varrho_g} \, \diag{ \ekk{\chi(g)} } ~,
\end{align}
where $\ekk{k}$ denotes the $k$-th column of identity matrix, $\chi: \{1, \ldots, G\} \to \{1, 2, 3\} $ is a mapping from transmission indices to beam indices in $\aabsbig$, and $\varrho_g$ is the power allocated to the $g$-th transmission, as defined in \eqref{eq_peb_codebook}. Inserting \eqref{eq_lambda_r1} into \eqref{eq_xxg_r11} yields
\begin{align}\label{eq_xxg_r12}
    \XX_g = \varrho_g \, [ \aabsbig ]_{:, \chi(g)} [ \aabsbig ]_{:, \chi(g)}^\hermit ~,
\end{align}
where $[ \aabsbig ]_{:, k}$ represents the $k$-th column of $ \aabsbig$. 

It is obvious that $\XX_g$ in \eqref{eq_xxg_r12} is now rank-one, with $\varrho_g$ representing the adjustable power of the $g$-th transmission and $\chi(g)$ the adjustable beam index in $\aabsbig$. As a summary of the above procedure, to be able to satisfy the rank-one constraint for $\XX_g$ in \eqref{eq_xxg_r1}, we reduce the degrees of freedom in optimizing the matrix $\Upsilonb_g$, or, equivalently $\Lambdab_g$, by constraining $\Lambdab_g$ to have the structure in \eqref{eq_lambda_r1}, where the degrees of freedom are the \textit{selection of the beam index} $\chi(g)$ and the corresponding \textit{power level} $\varrho_g$. Therefore, the reason why the beamforming matrix $ \aabsbig$ becomes a single beamforming vector in the codebooks of \eqref{eq_prop_codebook} is due to the \textit{rank-one constraint} on $\XX_g$. We note that by sweeping through different beam indices $\chi(g)$'s for different transmissions, the BS transmit precoding can effectively span the subspace determined by $\aabsbig$ in \eqref{eq_a_bs_r1}, which is exactly what the codebooks in \eqref{eq_prop_codebook} do. Additionally, we remark that the same arguments above can be used to construct rank-one matrices $\Psibbig_g = \oomegag (\oomegag)^\hermit$ for the RIS phase profiles from $\aariswbig$ in \eqref{eq_bris}.

The above methodology to obtain rank-one solutions is developed for the case of perfect knowledge of UE location. To extend it to the case of uncertainty in UE location in Sec.~\ref{sec_robust_design}, we propose to span the corresponding uncertainty region of UE by adding more beamforming vectors to $\aabsbig$ and $\aariswbig$ such that the corresponding AoDs $\thetabm$ and $\thetarm$ in the arguments of the respective BS and RIS steering vectors cover the uncertainty interval in AoD. It is important to note that $\thetabr$ is a known constant (as it represents the AoD from the BS to RIS) and thus $\aabs(\thetabr)$ in \eqref{eq_a_bs_r1} remains to be a single beam in the codebooks of \eqref{eq_prop_codebook}. As seen from \eqref{eq_prop_codebook}, \eqref{eq_fris_sum} and \eqref{eq_fris_der}, in the proposed codebooks, the uncertainty region is covered by the uniformly spaced AoDs $\{ \thetabm^{(i)} \}_{i=1}^{\lb}$ and $\{ \thetarm^{(i)} \}_{i=1}^{\lr}$ from the BS to the UE and from the RIS to the UE, respectively.

}


{\ed 
\section{Equivalence of Orthogonal and Non-orthogonal Beams in Proposition~\ref{prop_BS_precoder}}\label{sec_supp_orthogonal}
In this section, we prove the equivalence of orthogonal and non-orthogonal beams in Prop.~\ref{prop_BS_precoder}, obtained as the solutions to \eqref{eq_problem_peb_relaxed_fixed_RIS} by showing that the orthogonality constraint is not needed to achieve the optimal solution. Let us consider the following two equivalent solutions of \eqref{eq_problem_peb_relaxed_fixed_RIS}:
\begin{align}\label{eq_xxg_r1_nonorth}
    \XX_g &= \aabsbig \Upsilonb_g \aabsbig^\hermit ~, \\ \label{eq_xxg_r1_orth}
    \XX_g &= \aabsbigtilde \Upsilonborth_g \aabsbigtilde^\hermit ~,
\end{align}
where
\begin{align}
    \aabsbigtilde &\eqdef \aabsbig ( \aabsbig^\hermit \aabsbig )^{-1/2} ~, \\ \label{eq_relation_orth}
    \Upsilonborth_g &\eqdef ( \aabsbig^\hermit \aabsbig )^{1/2} \Upsilonb_g ( \aabsbig^\hermit \aabsbig )^{1/2} ~.
\end{align}
Here, \eqref{eq_xxg_r1_nonorth} and \eqref{eq_xxg_r1_orth} represent equivalent solutions where the beams are non-orthogonal and orthogonal, respectively (notice that $\aabsbigtilde^\hermit \aabsbigtilde = \Imatrix$). We note that there is a one-to-one relation between the solutions $\Upsilonb_g$ and $\Upsilonborth_g$; one can always obtain one from the other by using \eqref{eq_relation_orth}. Hence, the beams in the proposed codebook \eqref{eq_prop_codebook}, derived from Prop.~\ref{prop_BS_precoder} and Prop.~\ref{prop_RIS_profile}, do not have to be orthogonal. For each non-orthogonal solution, there exists an equivalent orthogonal solution that leads to the same precoder covariance $\XX_g$.
}


{\ed 
\section{Setting of number of grid points $M$ in \eqref{eq_peb_codebook}}\label{sec::choiceM}
In this section, we investigate the impact of the number of grid points $M$ used in solving \eqref{eq_peb_codebook} in terms of both estimation performance and computational complexity. Being $M$ the number of discrete UE positions $\left\{\mathbf{p}_m\right\}_{m=1}^M$ on which the worst-case PEB is evaluated to solve eq. (31) using the proposed codebook-based approach, it is reasonable to expect that the larger the value of $M$ (i.e., the finer the grid), the more accurate the design of the joint active BS and passive RIS beamforming vectors, and in turn the more accurate the ultimate localization and synchronization performance. At the same time, it is quite intuitive to observe that the more UE positions $M$ need to be tested for the resolution of eq. (31), the increased will be the complexity involved in the design procedure. In order to clearly assess this trade-off under the considered parameter setup and motivate the choice of the adopted $M$, we report an additional analysis aimed at evaluating the RMSEs on the estimation of the UE position $\mathbf{p}$ when different values of $M$ are used to design the joint BS and RIS beamforming vectors, recording at the same time the total execution time required to solve the power optimization problem in eq. (31). For the sake of the analysis, we run 1000 independent Monte Carlo trials. The results reported in Fig.~\ref{fig:Discretization} reveal that values of $M$ lower than 9 produce evident performance losses being the corresponding values of RMSE on both position and clock-offset estimation higher, while for values of $M \geq 9$ there are no appreciable improvements in the ultimate accuracy. Considering that grids of $M = 16$ or $M = 25$ points would lead to doubling or even tripling the average total execution time compared to the case of $M = 9$, it is apparent that the latter represents the most convenient choice to achieve the best trade-off between accuracy and computational cost.

\begin{figure}[h]
		\centering
		\includegraphics[width=0.4\textwidth]{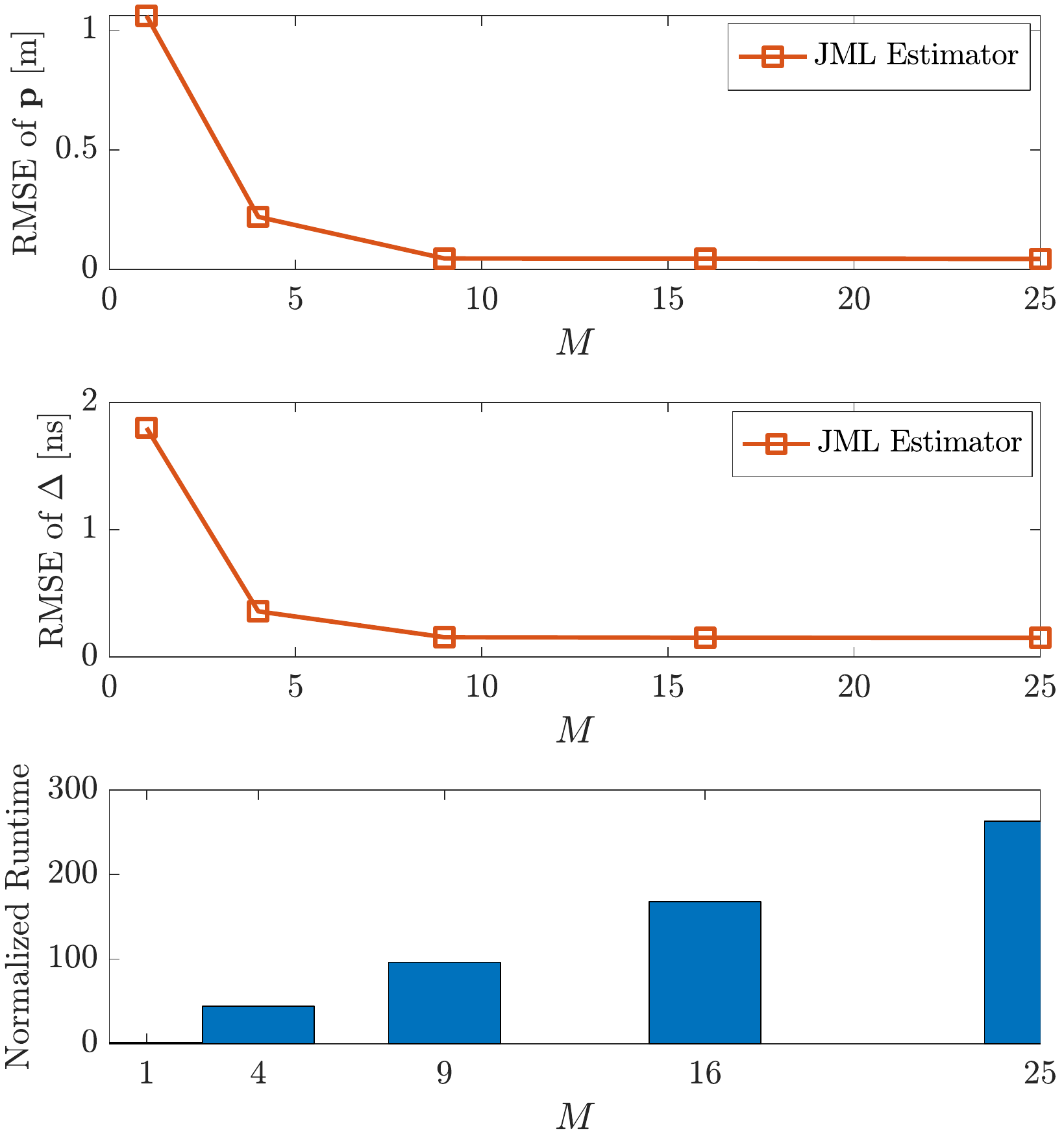}
		\caption{\ed Performance in terms of RMSEs and normalized execution time as a function of the number of discretization points $M$.}\label{fig:Discretization}
	\end{figure}
}

\section{Additional Simulation Results for Increased UE Uncertainty}\label{sec_5m_uncertainty}
In this section, we carry out an additional simulation analysis to investigate the performance of the proposed approach when considering an uncertainty region $\mathcal{P}$ for the UE location whose extent is increased to 5 m along each direction. For this setup, it turns out that $\lb = 20$ and $\lr = 18$, resulting in a total of $G = 378$ transmitted beams, while the rest of the simulation parameters remain unchanged.

\begin{figure}
 \includegraphics[width=0.5\textwidth]{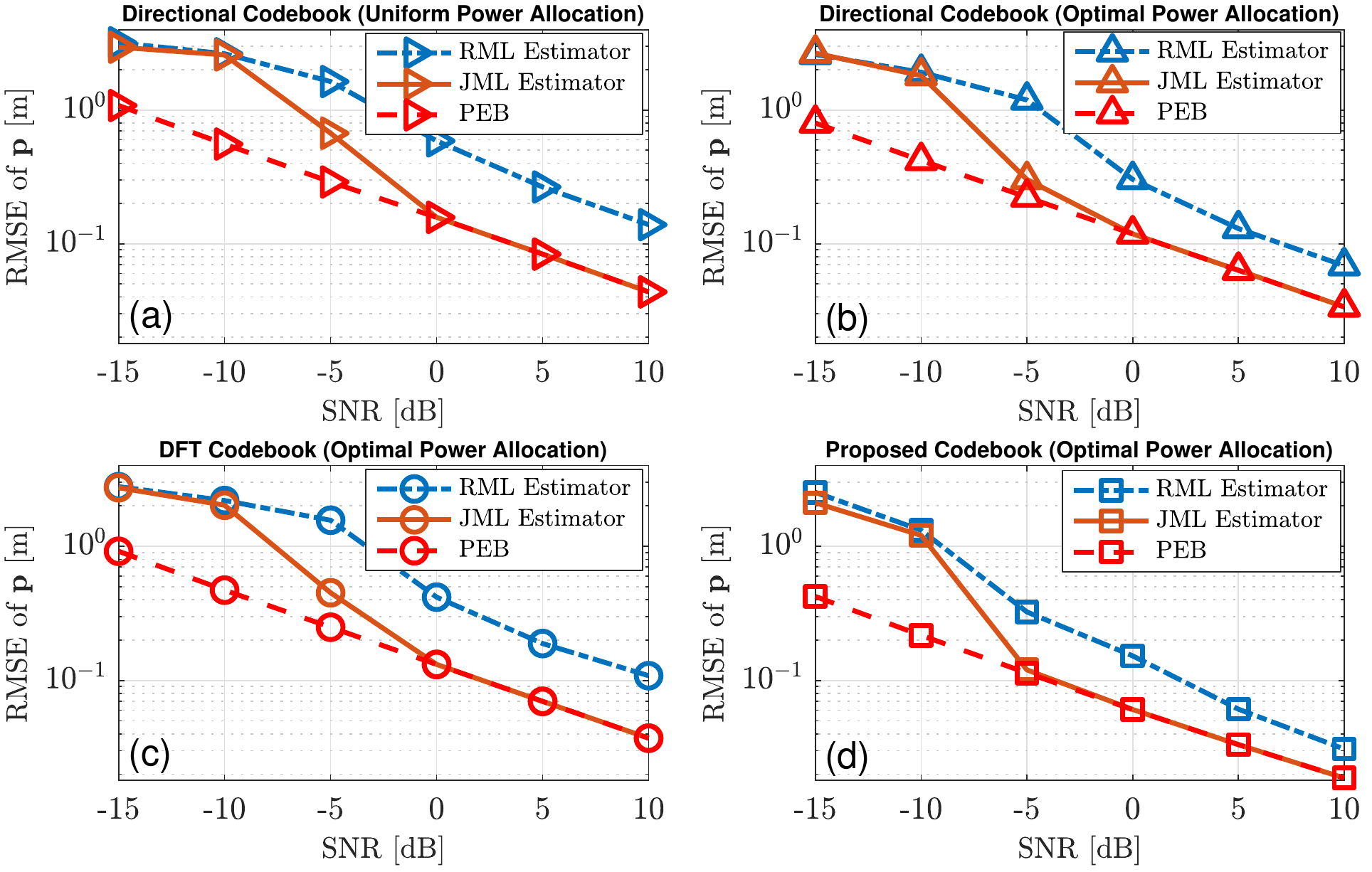}
 	\caption{RMSEs on the estimation of $\bm{p}$ as a function of the SNR for the directional codebook, DFT codebook, and proposed codebook.}
\label{fig:RMSE_pos_5m}
 \end{figure}

\begin{figure}
 \includegraphics[width=0.5\textwidth]{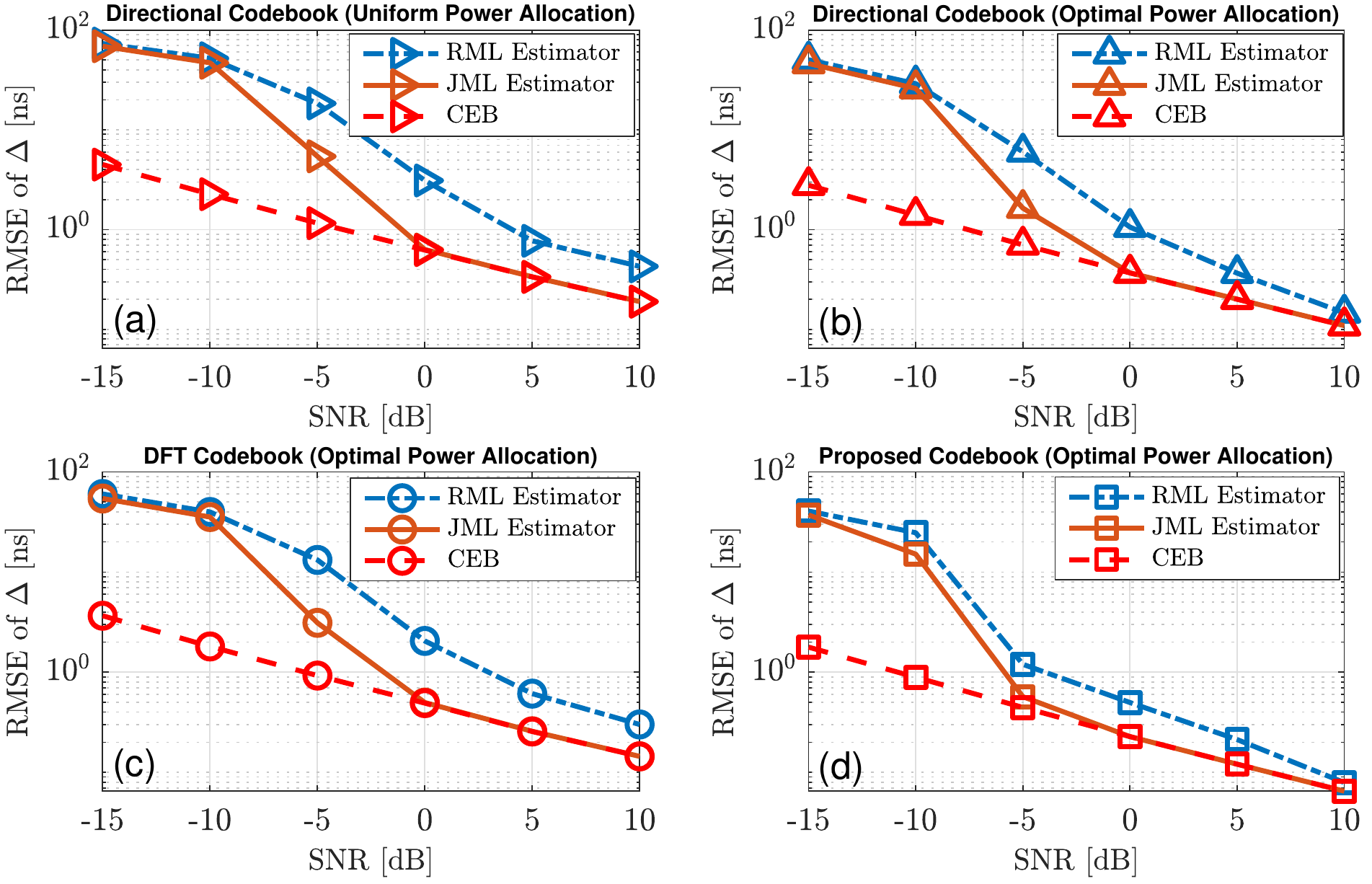}
 	\caption{RMSE on the estimation of $\Delta$ as a function of the SNR for the directional codebook, DFT codebook, and proposed codebook.}
\label{fig:RMSE_clock_5m}
 \end{figure}
 
In Figs.~\ref{fig:RMSE_pos_5m} and \ref{fig:RMSE_clock_5m}, we report the RMSE on the estimation of $\bm{p}$ and $\Delta$ as a function of the SNR, for all the considered precoding schemes, also in comparison with their corresponding CRLBs. As it would be expected, the values assumed by the PEBs and CEBs are higher than those obtained in case of 3 m uncertainty. Remarkably, the proposed low-complexity estimation algorithm is still able to provide very good performance in spite of the more challenging scenario at hand, attaining the theoretical bounds already at  $\text{SNR} = -5$ dB (the same as in case of 3 m uncertainty) when using the proposed BS-RIS precoding scheme, and at about $\text{SNR} = 0$ dB for the other precoding schemes.
 \begin{figure}%
    \centering
    \subfloat[\centering RMSE on the estimation of $\bm{p}$.]{{\includegraphics[width=0.38\textwidth]{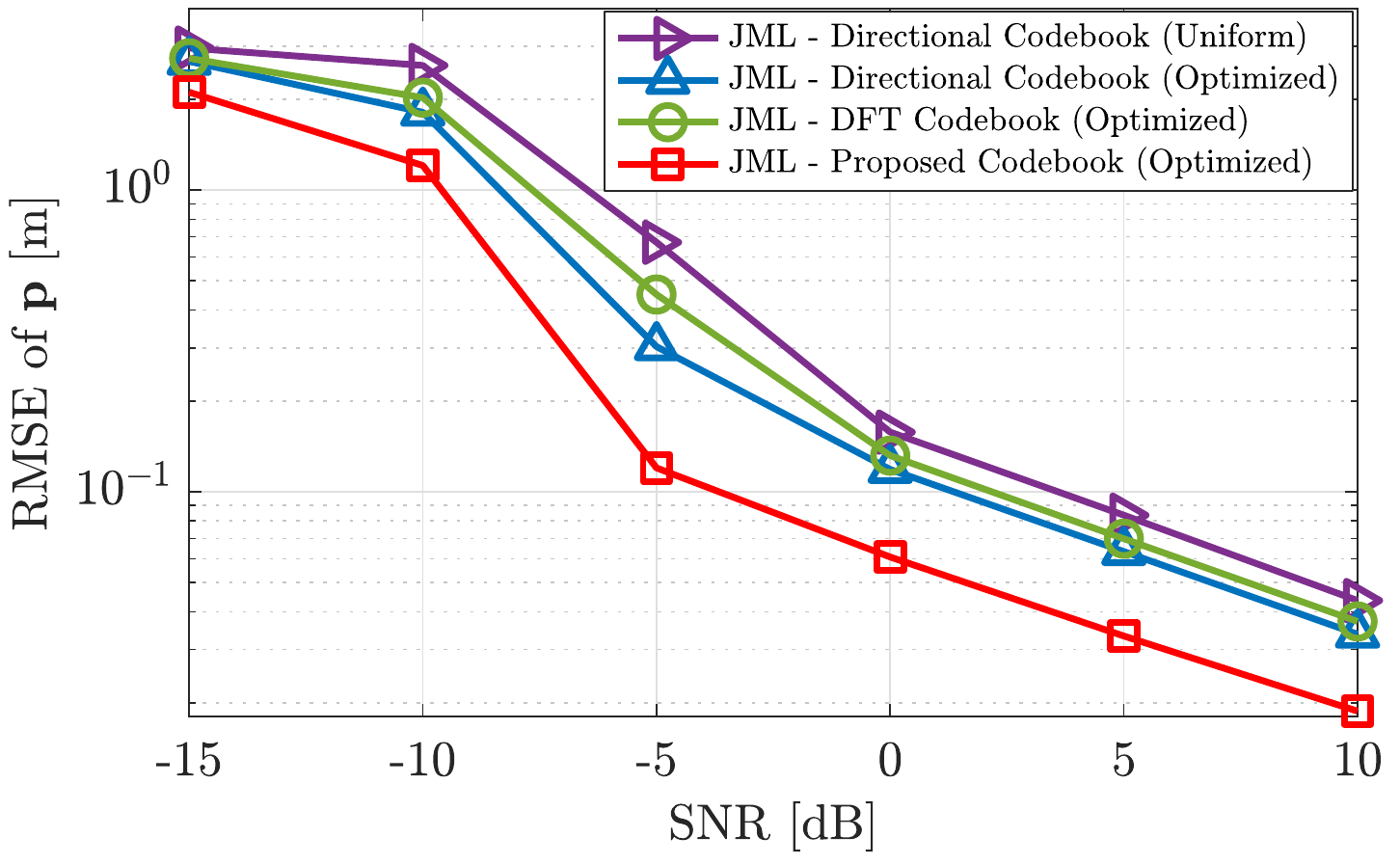} } \label{fig:comparison_pos_5m}}%
    \qquad
    \subfloat[\centering RMSE on the estimation of $\Delta$.]{{\includegraphics[width=0.38\textwidth]{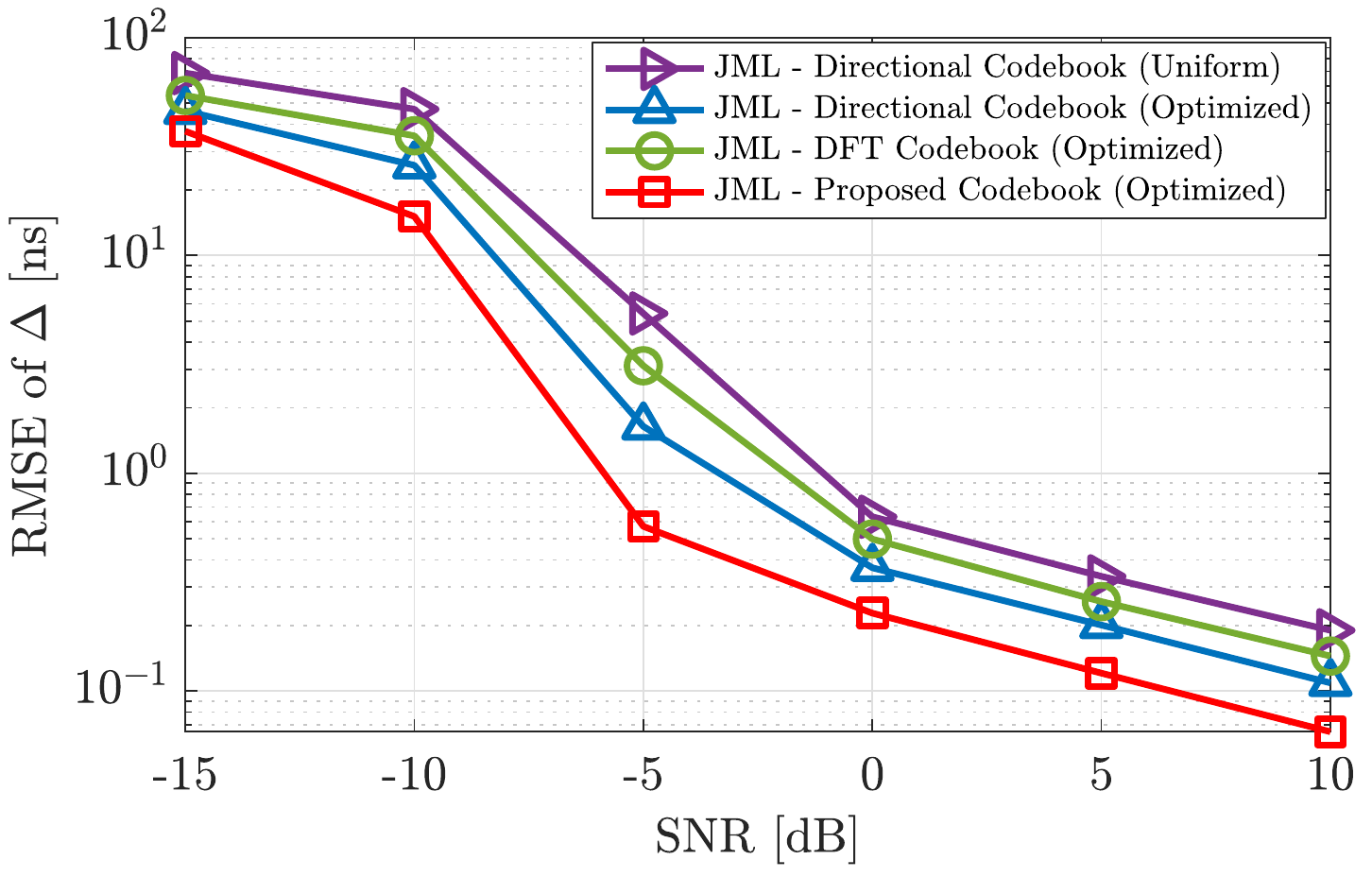} }\label{fig:comparison_clock_5m}}%
    \caption{Comparison between the RMSEs of (a) $\bm{p}$ and (b) $\Delta$ using the proposed JML estimator for different precoding schemes, as a function of the SNR.}%
    \label{fig:comparison_RMSE_5m}%
\end{figure}

In Fig.~\ref{fig:comparison_RMSE_5m}, we report a direct comparison among the RMSEs on the estimation of $\bm{p}$ and $\Delta$ for the proposed JML estimator fed  with different precoding  schemes. As it  can be noticed,  the  proposed  robust  joint  BS-RIS  precoding  scheme provides better localization and synchronization performance compared to the directional and DFT codebooks also in this case.

\begin{figure}%
    \centering
    \subfloat[\centering RMSE on the estimation of $\bm{p}$.]{{\includegraphics[width=0.38\textwidth]{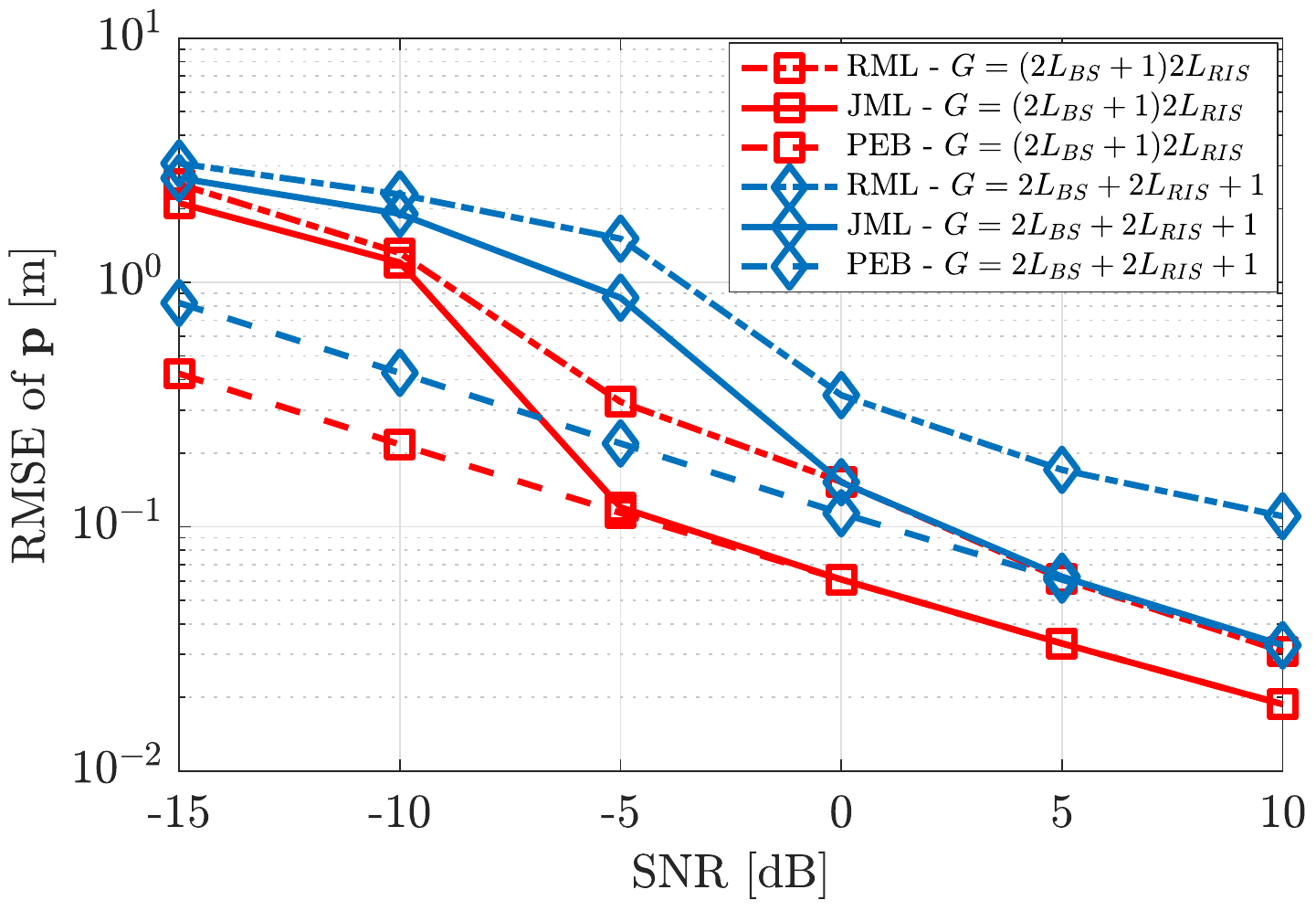} }\label{fig:reducedG_pos_5m}}%
    \qquad
    \subfloat[\centering RMSE on the estimation of $\Delta$.]{{\includegraphics[width=0.38\textwidth]{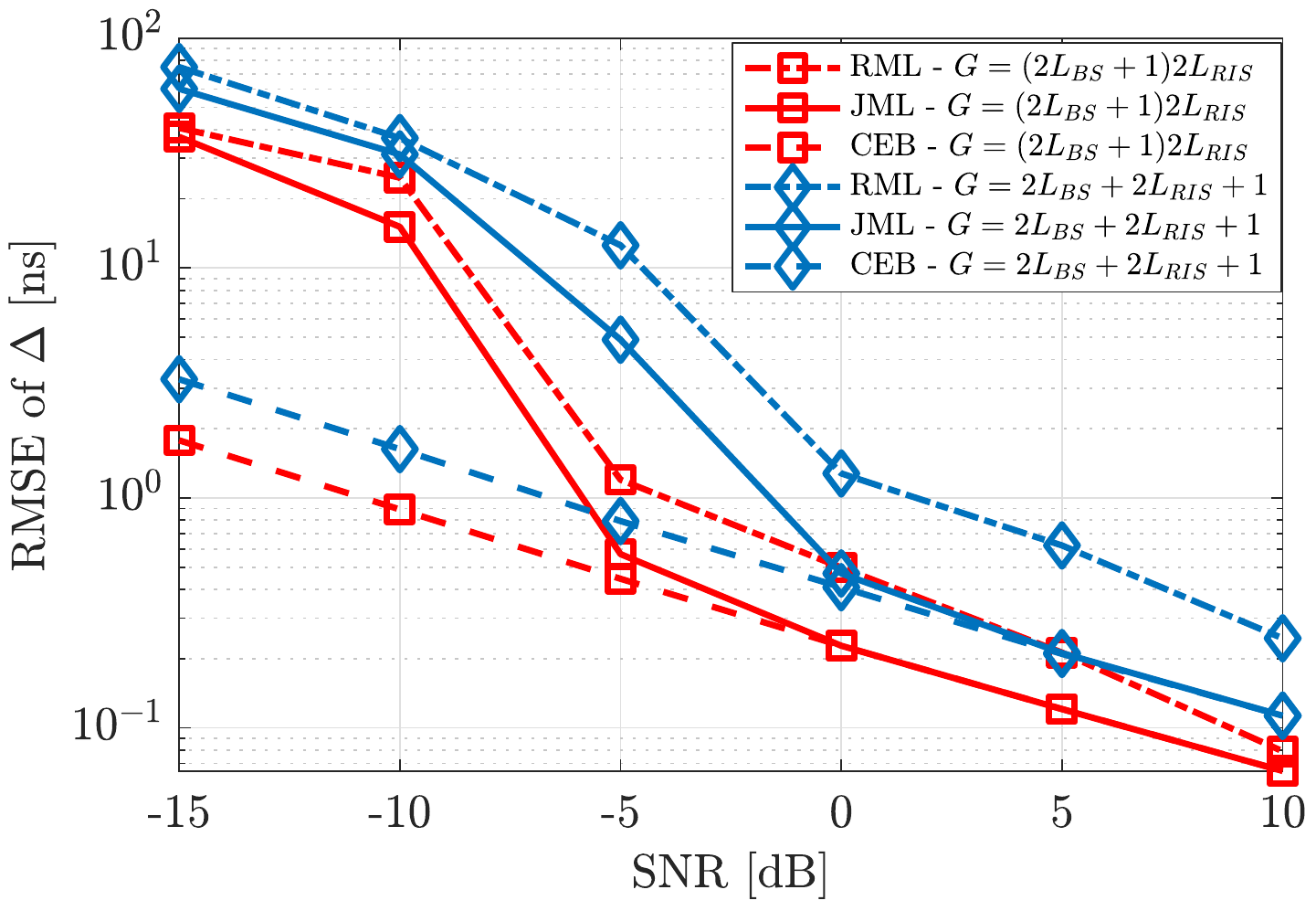} }\label{fig:reducedG_clock_5m}}%
    \caption{Performance comparison between the case of full number of transmitted beams $G = (2 \lb + 1) 2 \lr$ and proposed heuristic using a reduced $G = 2\lb + 2\lr + 1$.}%
    \label{fig:reducedG_5m}%
\end{figure}

To conclude the analysis, we investigate the performance of the proposed approach when using the ad-hoc heuristic that reduces the total number of transmitted beams from $G = (2\lb +1)\lr$ to $G = 2\lb + 2\lr +1$ as discussed in Sec.~VII-C3 of the main document.
Fig.~\ref{fig:reducedG_5m} shows the RMSEs on the estimation of $\bm{p}$ and $\Delta$ and the related lower bounds as a function of the SNR, for both cases of full and reduced number of transmissions $G$. Compared to the results in Sec.~VII-C3, the PEBs and CEBs in Fig.~\ref{fig:reducedG_pos_5m} and Fig.~\ref{fig:reducedG_clock_5m} exhibit a much  evident gap in the two considered settings, revealing that the increased uncertainty has a more significant impact onto the achievable estimation performance when the different configurations of the RIS phase profiles are not considered. This can be further confirmed by comparing the RMSEs of both RML and JML estimators reported in Figs.~\ref{fig:reducedG_pos_5m} and \ref{fig:reducedG_clock_5m}: notably, the estimation performances in case of reduced $G$ are visibly worse than those obtained when transmitting the full number of beams, especially for low values of the SNR.


\allowdisplaybreaks


\ifCLASSOPTIONcaptionsoff
  \newpage
\fi

\bibliographystyle{IEEEtran}
\bibliography{IEEEabrv,MISO_RIS}


\end{document}